\algnewcommand\algorithmicinput{\textbf{Input:}}
\algnewcommand\INPUT{\item[\algorithmicinput]}
\newlength{\noteWidth}
\long\def\notes#1{\ifinner
	{\footnotesize #1}
	\else 
	\marginpar{\parbox[t]{\noteWidth}{\raggedright\footnotesize#1}}
	\fi\typeout{#1}}
\def\notes#1{}
\def\spm#1{\notes{SPM:  #1}} 
\def\archive#1{\typeout{save this note:  #1}} 
\def\rd#1{{\color{red}#1}} 
\def\bl#1{{\color{blue}#1}} 
\def\mindex#1{\index{#1}}
\def\sq{\hbox{\rlap{$\sqcap$}$\sqcup$}}
\def\qed{\ifmmode\sq\else{\unskip\nobreak\hfil
\penalty50\hskip1em\null\nobreak\hfil\sq
\parfillskip=0pt\finalhyphendemerits=0\endgraf}\fi\medskip}
\long\def\defbox#1{\framebox[.9\hsize][c]{\parbox{.85\hsize}{%
\parindent=0pt
\baselineskip=12pt plus .1pt      
\parskip=6pt plus 1.5pt minus 1pt 
 #1}}}
\long\def\beginbox#1\endbox{\subsection*{}%
\hbox{\hspace{.05\hsize}\defbox{\medskip#1\bigskip}}%
\subsection*{}}
\def\endbox{}
\def\transpose{{\hbox{\it\tiny T}}}
\newsavebox{\junk}
\savebox{\junk}[1.6mm]{\hbox{$|\!|\!|$}}
\def\limsup{\mathop{\rm lim\ sup}}
\def\argmin{\mathop{\rm arg\, min}}
\def\U{{\sf U}}
\def\state{{\sf X}}
\def\bx{{{\cal B}(\state)}}
\newcommand{\field}[1]{\mathbb{#1}}
\def\Re{\field{R}}
\def\ind{\field{I}}
\def\Co{\field{C}}
\def\bfmath#1{{\mathchoice{\mbox{\boldmath$#1$}}%
{\mbox{\boldmath$#1$}}%
{\mbox{\boldmath$\scriptstyle#1$}}%
{\mbox{\boldmath$\scriptscriptstyle#1$}}}}
\def\bfmq{\bfmath{q}}
\def\bfmC{\bfmath{C}}
\def\bfmG{\bfmath{G}}
\def\bfmU{\bfmath{U}}
\def\bfmX{\bfmath{X}}
\def\bfmY{\bfmath{Y}}
\def\bfmhhaY{\bfmath{\hhaY}} 
\def\bfmhhaY{\hbox to 0pt{$\widehat{\bfmY}$\hss}\widehat{\phantom{\raise 1.25pt\hbox{$\bfmY$}}}}
\def\bfmZ{\bfmath{Z}}
\def\bfPhi{\bfmath{\Phi}}
\def\haf{{\hat f}}
\def\hagamma{{\hat\gamma}}
\def\til={{\widetilde =}}
\def\tiltheta{\widetilde \theta}
\def\tiltheta{{\tilde \theta}}
\def\clA{{\cal A}}
\def\clB{{\cal B}}
\def\clC{{\cal C}}
\def\clE{{\cal E}}
\def\clF{{\cal F}}
\def\clG{{\cal G}}
\def\clH{{\cal H}}
\def\clI{{\cal I}}
\def\clN{{\cal N}}
\def\clQ{{\cal Q}}
\def\clT{{\cal T}}
 \def\FRAC#1#2#3{\genfrac{}{}{}{#1}{#2}{#3}}
\def\ddt{{\mathchoice{\FRAC{1}{d}{dt}}%
{\FRAC{1}{d}{dt}}%
{\FRAC{3}{d}{dt}}%
{\FRAC{3}{d}{dt}}}}
\def\ddtp{{\mathchoice{\FRAC{1}{d^{\hbox to 2pt{\rm\tiny +\hss}}}{dt}}%
{\FRAC{1}{d^{\hbox to 2pt{\rm\tiny +\hss}}}{dt}}%
{\FRAC{3}{d^{\hbox to 2pt{\rm\tiny +\hss}}}{dt}}%
{\FRAC{3}{d^{\hbox to 2pt{\rm\tiny +\hss}}}{dt}}}}
\def\half{{\mathchoice{\FRAC{1}{1}{2}}%
{\FRAC{1}{1}{2}}%
{\FRAC{3}{1}{2}}%
{\FRAC{3}{1}{2}}}}
\def\eqdef{\mathbin{:=}}
\def\Prob{{\sf P}}
\def\Expect{{\sf E}}
\def\average#1,#2,{{1\over #2} \sum_{#1}^{#2}}
\def\eye(#1){{\bf(#1)}\quad}
\def\epsy{\varepsilon}
\def\varble{\,\cdot\,}
\newtheorem{theorem}{Theorem}[section]
\newtheorem{proposition}[theorem]{Proposition}
\newtheorem{lemma}[theorem]{Lemma}
\def\Lemma#1{Lemma~\ref{#1}}
\def\Proposition#1{Prop.~\ref{#1}}
\def\Theorem#1{Theorem~\ref{#1}}
\def\Section#1{Section~\ref{#1}}
\def\Appendix#1{Appendix~\ref{#1}}
\def\eq#1/{(\ref{e:#1})}
\newcommand{\beqn}[1]{\notes{#1}%
\begin{eqnarray} \elabel{#1}}
\newcommand{\eeqn}{\end{eqnarray} }
\newcommand{\beq}[1]{\notes{#1}%
\begin{equation}\elabel{#1}}
\newcommand{\eeq}{\end{equation}}
\def\bdes{\begin{description}}
\def\edes{\end{description}}
\def\barc{{\overline {c}}}
\def\barf{{\overline {f}}}
\def\barT{{\overline {T}}}
\def\bartheta{{\overline{\theta}}}
\def\barmu{{\overline{\mu}}}
\newcounter{rmnum}
\newenvironment{romannum}{\begin{list}{{\upshape (\roman{rmnum})}}{\usecounter{rmnum}
\setlength{\leftmargin}{14pt}
\setlength{\rightmargin}{8pt}
\setlength{\itemsep}{2pt}
\setlength{\itemindent}{-1pt}
}}{\end{list}}
\newcounter{anum}
\def\ass(#1:#2){(#1\ref{#1:#2})}
\def\ritem#1{
\item[{\sf \ass(\current_model:#1)}]
}
\newenvironment{recall-ass}[1]{%
\begin{description}
\def\current_model{#1}}{
\end{description}
}
\def\Ebox#1#2{%
\begin{center}
 \parbox{#1\hsize}{\epsfxsize=\hsize \epsfbox{#2}}
\end{center}}
\newcommand{\bd}{\begin{description}}
\newcommand{\ed}{\end{description}}
\newcommand{\bt}{\begin{theorem}}
\newcommand{\et}{\end{theorem}}
\newcommand{\ba}{\begin{array}{rcl}}
\newcommand{\ea}{\end{array}}
\def\assume#1{\smallbreak\noindent\textbf{#1}}
\def\nd{\ell}  
\def\ninp{\ell_u}  
\def\nphi{\ell_\phi}
\def\td{d}
\def\BE{{\cal B}}
\def\maxBE{\overline{\cal B}}
\def\Real{\text{Re}}
\def\upmf{\mu}
\def\qcost{c}
\def\bfqcost{\bfmath{c}}
\def\bfmG{\bfmath{G}}
\def\bfalpha{\bfmath{\alpha}}
\def\state{{\sf X}}
\def\eqdef{\mathbin{:=}}
\def\elig{\zeta}
\def\bfelig{\bfmath{\zeta}}
\def\uq{\underline{q}}
\def\uQ{\underline{Q}}
\def\utheta{\underline{\theta}}
\def\trace{\hbox{\rm trace\,}}
\def\Lemma#1{Lemma~\ref{#1}}
\def\Proposition#1{Prop.~\ref{#1}}
\def\Prop#1{Prop.~\ref{#1}}
\def\Theorem#1{Thm.~\ref{#1}}   
\def\hab{\widehat b}
\def\haA{\widehat A}
\def\haG{\widehat G}
\def\tilb{\tilde{b}}
\def\tilp{\widetilde{p}}
\def\tilA{\tilde{A}}
\def\interest{\Gamma}
\def\bfgamma{\bfmath{\gamma}}
\def\bfDelta{\bfmath{\Delta}}
\def\Cov{\text{\rm Cov}\,}
\def\Ebox#1#2{%
	\begin{center}
		\includegraphics[width= #1\hsize]{#2} 
\end{center}}
\def\Fig#1{Fig.~\ref{#1}}
\def\ind{\field{I}}
\def\Re{\field{R}}
\def\diagpie{\Pi}
\def\pie{\varpi}
\def\barGamma{\overline{\Gamma}}
\def\Qstar{\clQ}
\def\partialQstar{\partial\!\clQ}
\def\barqcost{\bar c}
\def\barq{\bar q}
\def\barclA{\bar{\cal A}}
\def\haC{\widehat C}
\def\bfmhC{{\widehat{\bfmC}}}
\title{Fastest Convergence for Q-Learning}
\author{ Adithya M. Devraj  and Sean P. Meyn
	\thanks{Research   supported by    the  National Science Foundation under grants CPS-0931416 and EPCN-1609131}
	\thanks{
		A.D. and S.M. are with the Department of Electrical and Computer
		Engg.\ at the University of Florida, Gainesville. A part of the research was conducted when the authors were visitors at the Simons Institute for the Theory of Computing at University of California, Berkeley.}%
}
\begin{document}
\maketitle

\begin{abstract}   

The  Zap~Q-learning algorithm introduced in this paper is an improvement of Watkins' original algorithm and recent competitors in several respects.   It is a matrix-gain algorithm designed so that its asymptotic variance is optimal.   Moreover, an ODE analysis suggests that the transient behavior is a close match to a deterministic Newton-Raphson implementation. This is made possible by a  two time-scale update equation for the matrix gain sequence. 

The analysis suggests that the approach will lead to stable and efficient computation even for non-ideal parameterized settings. Numerical experiments confirm the quick convergence, even in such non-ideal cases.     The comparison plot on this first page, taken from \Fig{6stateBEPlot} of this paper,  is an illustration of the amazing acceleration in convergence using the new algorithm. 

A secondary goal of this paper is tutorial.   The first half of the paper contains a survey on  reinforcement learning algorithms, with a focus on minimum variance algorithms.

\medskip

{\small
	\noindent
	\textbf{Keywords:}  
	Reinforcement learning,  
	Q-learning,
	Stochastic optimal control}
\smallskip

{\small
	\noindent
	\textbf{2000 AMS Subject Classification:}
	93E20,	
	93E35	
	
}

\vfill

			\Ebox{1}{6State_BEPlot_Beta08099.pdf}
\vfill
		
\end{abstract}
	  


\def\todo{{\color{red}To do:}\ }

\section{Introduction}

It is recognized that  algorithms for reinforcement learning such as TD- and Q-learning can be slow to converge.   The poor performance of Watkins' Q-learning algorithm was first quantified in \cite{sze97}, and since then many papers have appeared with proposed improvements, such as \cite{eveman03,azamunghakap11}.  
\archive{does has10 offer improved performance?  If not, let's leave it out}

An emphasis in much of the literature is computation of finite-time PAC (probably almost correct) bounds as a metric for performance.  Explicit bounds were obtained in \cite{sze97} for Watkins' algorithm,  and in \cite{azamunghakap11} for the  ``speedy'' Q-learning algorithm that was introduced by these authors.   A general theory is presented in  \cite{moubac11} for stochastic approximation algorithms.

\archive{Do we cite this \cite{bacmou13}?  Where does it fit?}
    
In each of the models considered in  prior work,  the update equation for the parameter estimates   can be expressed
\begin{equation}
\theta_{n+1} = \theta_n +\alpha_n [ \barf(\theta_n) + \Delta_{n+1} ]\,,\quad n\ge 0\, ,
\label{e:SAintro}
\end{equation}
in which $\{  \alpha_n \}$ is a positive gain sequence, and $\{  \Delta_n \}$ is a martingale difference sequence.    This representation is critical in analysis, but unfortunately is not typical in reinforcement learning applications outside of these versions of Q-learning.   For Markovian models, the usual transformation used to obtain a representation similar to \eqref{e:SAintro} results in an error sequence $\{  \Delta_n \}$ that is the sum of a  martingale difference sequence and a \textit{telescoping sequence} \cite{mamakshw90}.   It is the telescoping sequence that prevents easy analysis of  Markovian models.

This gap in the research literature carries over to the general theory of Markov chains.  Examples of concentration bounds for i.i.d.\ sequences or martingale-difference sequences include the finite-time bounds of Hoeffding and Bennett.   Extensions to Markovian models either offer very  crude bounds  \cite{meytwe94b}, or   restrictive assumptions \cite{lunmeytwe96a,glyorm02};  this remains an active area of research \cite{pau15}.

\archive{  \todo  (AD)
\\
Essay on new interest in Q-Learning (deep learning reference) 
cite{sutmaesze09} \rd{AD: Also cite Double Q?}  Please explain to me in person.
\\
Note that all this work claims that convergence rate must blow up with discount factor.   Crap!  \rd{Crap?  I thought crisis = opportunity?}}

In contrast, \textit{asymptotic} theory for stochastic approximation (as well as general state space Markov chains) is  mature.  Large Deviations or Central Limit Theorem (CLT) limits hold under very general assumptions \cite{benmetpri90,kusyin97,bor08a}.

The CLT will be a guide to algorithm design in the present paper.  For a typical stochastic approximation algorithm,  this takes the following form:   denoting $\{ \tiltheta_n\eqdef \theta_n-\theta^* : n\ge 0\}$ to be the error sequence,   under general conditions the scaled sequence   $\{\sqrt{n} \tiltheta_n : n\ge 1\}$	converges in distribution to a Gaussian distribution,  $\clN (0,\Sigma_\theta)$.   Typically,  the scaled covariance is also convergent:
\begin{equation}
\Sigma_\theta=\lim_{n\to\infty} n \Expect[\tiltheta_n\tiltheta_n^\transpose]\, .
\label{e:SAsigma}
\end{equation}  
The limit is known as the \textit{asymptotic covariance}.  

An asymptotic bound such as \eqref{e:SAsigma} may not be satisfying for practitioners of  stochastic optimization or reinforcement learning, given the success of finite-$n$ performance bounds in prior research.    There are however good reasons to apply this asymptotic theory in algorithm design:
\begin{romannum}
 
\item 
The   asymptotic covariance $\Sigma_\theta$ has a simple representation as the solution to a Lyapunov equation.  It is easily improved or optimized  by  design.   

\item 
As shown in examples in this paper,  the asymptotic covariance is often a good predictor of finite-time performance, since the CLT approximation is accurate for reasonable values of $n$.
\end{romannum}
Two approaches are known for optimizing the asymptotic covariance.   First is the remarkable averaging technique of Polyak and Juditsky \cite{pol90,poljud92} and Ruppert \cite{rup85}  (\cite{kontsi04} provides an accessible treatment in a simplified setting).  Second is what we will call \textit{Stochastic Newton-Raphson}, based on a special choice of matrix gain for the algorithm.   The second approach underlies the analysis of the averaging approach.

We are not aware of theory that distinguishes the performance of Polyak-Ruppert averaging as compared to the Stochastic Newton-Raphson method.    It is noted in \cite{moubac11} that the averaging approach often leads to very large transients, so that the algorithm should be modified (such as through projection of parameter updates). This may explain why averaging is not very popular in practice.    In our own numerical experiments it is observed  that the rate of convergence of CLT in this case is slow when compared to matrix gain methods.

\archive{I've removed some refs for now, such as cite{anb78,gol88,kiewol52}}

In addition to accelerating the convergence rate of standard algorithms for reinforcement learning,  it is hoped that this paper will lead to entirely new algorithms.    In particular, there is little theory to support Q-learning in non-ideal settings in which the optimal ``$Q$-function'' does not lie in the parameterized function class.  Convergence results have been obtained  for a class of optimal stopping problems \cite{yuber13}, and  for deterministic models \cite{mehmey09a}. There is now intense practical interest, despite an incomplete theory.  A stronger supporting theory will surely lead to more efficient algorithms.

\paragraph{Contributions}    

A new class of algorithms is proposed, designed to more accurately mimic the classical Newton-Raphson algorithm.    It is based on a two time-scale stochastic approximation algorithm,  constructed so that the matrix gain tracks the  gain that would be used in a deterministic Newton-Raphson method.   

The application of this approach to reinforcement learning results in the new \textit{Zap~Q-learning} algorithms.  A full analysis is presented for the special case of a complete parameterization (similar to the setting of Watkins' original algorithm).  It is found that the associated ODE has a remarkable and simple representation, which implies consistency under suitable assumptions.   Extensions to non-ideal parameterized settings are also proposed, and numerical experiments show dramatic variance reductions.   Moreover, results obtained from finite-$n$ experiments show close  solidarity with asymptotic theory.


The potential complexity introduced by the matrix gain is not of great concern in many cases,  because of the dramatically acceleration in the rate of convergence.   Moreover, the main contribution of this paper is not a single algorithm but  a class of algorithms, wherein the computational complexity can be dealt with separately. For example, in a parameterized setting, the basis functions can be intelligently pruned via random projection \cite{barbor08a}.

\smallbreak

The remainder of the paper is organized as follows.  
Background on computing and optimizing the asymptotic covariance is contained in 
\Section{s:SNR}.   Application to Q-learning, and theory surrounding the new Zap Q-learning algorithm is developed in \Section{s:Q}.  Numerical results are surveyed in
\Section{s:num},  and conclusions are contained in 
\Section{s:conc}.    The proofs of the main results are contained in the Appendix;  the final page contains Table~\ref{t:notation} containing a list of notation.

\section{Stochastic Newton Raphson and TD-Learning}
\label{s:SNR}

This first section is largely a tutorial on reinforcement learning.   
It is shown that the LSTD($\lambda$) learning algorithm  of  \cite{brabar96,boy02,nedber03a}
is an instance of the ``SNR algorithm'',  in which there is only one time-scale for the parameter and matrix-gain updates.   The original motivation for the LSTD($\lambda$) algorithm had no connection with asymptotic variance.  It was shown later in \cite{kon02} that the LSTD ($\lambda$) algorithm is the minimum asymptotic variance version of the TD ($\lambda$) algorithm of \cite{sut88}.

The focus is on fixed point equations associated with an uncontrolled Markov chain, denoted $\bfmX = \{ X_n : n=0,1,\dots\}$, on a measurable state space $(\state,\bx)$.  It is assumed to be $\psi$-irreducible and aperiodic  \cite{MT}.  In \Section{s:Q} we specialize to a finite state space.  

In control applications and analysis of learning algorithms, it is necessary to construct a Markov chain $\bfPhi$, of which $\bfmX$ is a component.    Other components may be an input process, or a sequence of ``eligibility vectors'' that arise in TD-learning.   It will be assumed throughout that there is a unique stationary realization of $\bfPhi$, with unique marginal distribution denoted $\pie$.

\archive{  $\bfmX$ that is a finite state space Markov chain; uni-chain and aperiodic, etc.   However ... 
 $\bfPhi$ does not have a finite state space for TD learning.   
 Note that $\bfPhi$ isn't even $V$-uniformly ergodic for TD learning!   }

\subsection{Motivation from SA \&\ ODE fundamentals}

The goal of stochastic approximation is to compute the solution   $\barf(\theta^*)=0$ for a function $\barf\colon\Re^d\to\Re^d$.    If the function is easily evaluated, then successive approximation can be used, and under stronger conditions the Newton-Raphson algorithm:
\begin{equation}
	\theta_{n+1} = \theta_n   +  G_n \barf(\theta_n)\,,\qquad G_n^{-1} = - \partial_\theta \barf\, (\theta_n)\, .
\label{e:NR}
\end{equation}
Under general conditions the convergence rate of \eqref{e:NR} is quadratic (much faster than geometric), which is not generally true of successive approximation.    

Stochastic approximation is itself an approximation of successive approximation.  It is assumed that 
$\barf(\theta) = \Expect[f(\theta,\Phi)]$,  where  $f\colon\Re^d\times\Re^m \to \Re^d$ and $\Phi$ is a random variable  with distribution $\pie$. 
The standard stochastic approximation algorithm  is defined by
\begin{equation} 
\theta_{n+1} = \theta_n + \alpha_n f(\theta_n,\Phi_{n+1} )\,,\quad n\ge 0\, .
\label{e:SAa}
\end{equation} 
For simplicity it is assumed that $\bfPhi$ is the stationary realization of the Markov chain. 
It is always assumed that the scalar gain sequence $\{\alpha_n\}$ is   non-negative, and satisfies:
\begin{equation}
\sum \alpha_n=\infty,\qquad\sum \alpha_n^2<\infty\, .
\label{e:stepsizeConditions}
\end{equation}

While convergent under general conditions,  the rate of convergence of \eqref{e:SAa} can often be improved dramatically through the introduction of a matrix gain.   This is explained first in a simple linear setting.

\subsection{Optimal covariance for linear stochastic approximation}

In many applications of reinforcement learning we arrive at a linear recursion of the form
\begin{equation}
\theta_{n+1} = \theta_n +  \alpha_{n+1} \bigl[A_{n+1}\theta_n - b_{n+1} \bigr]
\label{e:linearSA}
\end{equation} 
where $A_{n+1} = A(\Phi_{n+1})$ is a $d \times d$ matrix and  $b_{n+1} = b(\Phi_{n+1})$ is a $d \times 1$ vector,  $n\ge 0$.   
Let $A,b$ denote the respective steady-state means:  
\begin{equation}
A = \Expect[A(\Phi)]\,,\qquad 
b = \Expect[b(\Phi)]\,.
\label{e:Ab}
\end{equation} 
It is assumed throughout this section that $A$ is Hurwitz:   the real part of each eigenvalue is   negative.    Under this assumption, and subject to mild conditions on  $\bfPhi$, it is known that $\{ \theta_n\}$ converges with probability one to $\theta^*= A^{-1}b$  \cite{benmetpri90,kusyin97,bor08a}.  

Convergence of the recursion \eqref{e:linearSA} will be assumed henceforth.   
It is also assumed  that the gain sequence is given by $\alpha_n=1/n$, $n\ge 1$.

Under general conditions, the asymptotic covariance $\Sigma_\theta$ defined in  \eqref{e:SAsigma} is the non-negative semi-definite solution to the Lyapunov equation:
 \begin{equation}
(A+\half I)  \Sigma_\theta + \Sigma_\theta(A+\half I) ^\transpose +
\Sigma_\Delta  =0 .
\label{e:Lyap}
\end{equation} 
A  solution is  guaranteed only if  each eigenvalue of $A$ has   real part that is strictly less than $-1/2$.  If there exists an eigenvalue which does not satisfy this property, then under general conditions the asymptotic covariance is infinity  (see \Theorem{t:aCov}).
Hence the Hurwitz assumption must be strengthened to ensure that the asymptotic covariance is finite. 

\archive{Why $k$ and why not $n$? We still need to resolve the $k$, $n$ and $t$ issue.
	\\
	I'd love to keep $k$ and $t$ for time in a stochastic process.   Here $k$ is an index in something related to a process.   $n$ is good for iteration number in an algorithm, but 
	then we have also used $n$ for the time-horizon.   You come up with a plan and we shall follow it.
	\\
	\rd{AD: I like your idea; We'll change all the places where we have used $n$ for time-horizon to something else.}}
The matrix  $\Sigma_\Delta$ is obtained as follows: based on \eqref{e:linearSA}, the error sequence $\{\tiltheta_n = \theta_n-\theta^*\}$ evolves according to a deterministic linear system driven by ``noise'':  
\[
\tiltheta_{n+1} = \tiltheta_n +  \frac{1}{n+1}  [A \tiltheta_n +  \Delta_{n+1}   ]
\]
in which $\bf{\Delta}$ is the sum of three terms: 
\begin{equation}
\Delta_{n+1} =  \tilA_{n+1} \theta^* - \tilb_{n+1} +  \tilA_{n+1} \tiltheta_n,
\label{e:Dn}
\end{equation}  
with $\tilA_{n+1}=A_{n+1} - A$ , $ \tilb_{n+1} =  b_{n+1} - b $.  
The third term vanishes with probability one. 
The ``noise covariance matrix" $\Sigma_\Delta$ has the following two equivalent forms:
\begin{equation}
\Sigma_\Delta 
= \lim_{T\to\infty} \frac{1}{T} \Expect\bigl[ S_T S_T^\transpose  \bigr]
=  \sum_{k=-\infty}^\infty R(k)
\label{e:SigmaDelta}
\end{equation}
in which $S_T = \sum_{n=1}^T  \Delta_n$,  and 
\[
R(k)=R(-k)^\transpose = \Expect[ ( \tilA_k \theta^*  -  \tilb_k )( \tilA_0 \theta^*  -  \tilb_0)^\transpose]\,, \quad k\ge 0
\] 
where the expectation is in steady-state.   It is assumed that the CLT holds for sample-averages of the noise sequence:
\begin{equation}
\frac{1}{\sqrt{N}} \sum_{n=1}^N \Delta_n\to \clN (0, \Sigma_\Delta)\,,\quad N\to\infty\,,
\label{e:CLT-Delta}
\end{equation}
where the limit is in distribution.    This is a mild requirement when $\bfPhi$ is Markovian \cite{MT}.

A finite asymptotic covariance can be guaranteed by  increasing the gain:     choose $\alpha_n = g/n$ in
 \eqref{e:linearSA},  with $g>0$ sufficiently large so that the eigenvalues of $gA$ satisfy the required bound. More generally, a matrix gain can be introduced:
\begin{equation}
\theta_{n+1} = \theta_n +  \frac{1}{n+1} G\bigl[A_{n+1}\theta_n - b_{n+1} \bigr]
\label{e:GAINlinearSA}
\end{equation} 
in which $G$ is a $d\times d$ matrix.  Provided the matrix $G A$ satisfies the eigenvalue bound,  the corresponding asymptotic covariance  $\Sigma^G_\theta $ is finite and solves a modified Lyapunov equation:  
 \begin{equation}
(GA+\half I)  \Sigma^G_\theta + \Sigma^G_\theta(GA+\half I) ^\transpose +
G\Sigma_\Delta G^{\transpose}  =0
\label{e:GAINLyap}
\end{equation} 

The choice $G^*=-A^{-1}$ is analogous to the gain used in the Newton-Raphson algorithm 	\eqref{e:NR}.  With this choice, the asymptotic covariance is finite and given by 
\begin{equation}
\Sigma^* \eqdef A^{-1} \Sigma_\Delta {A^{-1}}^{\transpose}.  
\label{e:SigmaStar}
\end{equation} 
\archive{SPM is too lazy now
\\
Page 331: 10.2.2. Please check if this is what you want at the end of this page. And for your question in the next comment, is it Theorem 2.1 on page 330? It directly does not state this, but the extension seems to be direct.}
It is a remarkable fact that this choice   is optimal in the strongest possible statistical sense: For any other gain $G$,  the two asymptotic covariance matrices satisfy
\[
 \Sigma^G_\theta \ge \Sigma^*
\]
That is,  the difference $ \Sigma^G_\theta - \Sigma^*$ is positive semi-definite \cite{benmetpri90,kusyin97,bor08a}.  

The following theorem summarizes the results on the asymptotic covariance for the matrix-gain recursion  \eqref{e:GAINlinearSA}. The proof is contained in \Section{s:AsymCov} of the Appendix.
\begin{theorem}
\label{t:aCov}
Suppose that the eigenvalues of $GA$ lie in the strict left half plane,  and that the noise sequence satisfies the CLT \eqref{e:CLT-Delta} with finite   covariance $\Sigma_\Delta$. 
Then, the stochastic approximation recursion defined in \eqref{e:GAINlinearSA} is convergent, and the following also hold:
\begin{romannum}
	\item  Suppose that  $(\lambda,v)$ is an eigenvalue-eigenvector pair satisfying 
\[
\text{	$GAv =\lambda v$,  $\Real(\lambda)\ge -1/2$, \quad and $v^\dagger G \Sigma_\Delta G^\transpose v >0$,}
\]
where $v^\dagger$ denotes the conjugate transpose of the vector $v$.
Then
\[ 
 \lim_{n\to\infty} n\, v^\dagger  \Expect[\tiltheta_n\tiltheta_n^\transpose] v
=
 \lim_{n\to\infty }  n 	\Expect[ |v^\transpose \tiltheta_n|^2] =\infty\, ,
\]
and consequently,  the asymptotic covariance $\Sigma_\theta^G$ is not finite.

\item If all the eigenvalues of $GA $ satisfy  $\Real(\lambda) < -1/2$, then the corresponding asymptotic covariance $\Sigma_\theta^G$ is finite, and can be obtained as the solution to the Lyapunov equation \eqref{e:GAINLyap}
\item For any matrix gain $G$ the asymptotic covariance admits the lower bound
\[
	\Sigma^G_\theta \ge
	\Sigma^* \eqdef  A^{-1} \Sigma_\Delta {A^{-1}}^{\transpose}
\]
	This lower bound is achieved using $G^*\eqdef -A^{-1}$.   
	\qed
\end{romannum}
\label{t:asymcov}
\end{theorem}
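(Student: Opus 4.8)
The plan is to base everything on the variation-of-constants solution of the error recursion, extract the exact covariance asymptotics, and then read off the three items algebraically.

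\emph{Setup.} Write $\tiltheta_n=\theta_n-\theta^*$ with $\theta^*=A^{-1}b$. Subtracting $\theta^*$ from \eqref{e:GAINlinearSA} and using \eqref{e:Dn}, the recursion becomes $\tiltheta_{n+1}=\bigl(I+\tfrac1{n+1}GA\bigr)\tiltheta_n+\tfrac1{n+1}G\Delta_{n+1}$, where $\Delta_{n+1}=\xi_{n+1}+\tilA_{n+1}\tiltheta_n$, the ``noise'' $\xi_{n+1}:=\tilA_{n+1}\theta^*-\tilb_{n+1}$ has zero steady-state mean and satisfies the CLT \eqref{e:CLT-Delta}, and the extra term $\tilA_{n+1}\tiltheta_n$ vanishes a.s.\ since $\tiltheta_n\to0$. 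Iterating from a fixed large index $n_0$ gives $\tiltheta_n=\Phi(n,n_0)\tiltheta_{n_0}+\sum_{k=n_0+1}^{n}\tfrac1k\Phi(n,k)G\Delta_k$ with $\Phi(n,k):=\prod_{j=k+1}^{n}\bigl(I+\tfrac1j GA\bigr)$. All factors of $\Phi(n,k)$ are polynomials in $GA$ and hence commute, so $\Phi(n,k)$ preserves each eigenspace of $GA$; along an eigenvalue $\lambda$ it acts by $\phi_\lambda(n,k):=\prod_{j=k+1}^{n}(1+\lambda/j)$, and an elementary estimate gives $|\phi_\lambda(n,k)|\asymp(n/k)^{\Real\lambda}$, while $\Phi(n,k)$ behaves like $(n/k)^{GA}$ uniformly over $n_0\le k\le n$.

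\emph{Items (i) and (ii).} Insert the solution formula into $\Expect[\tiltheta_n\tiltheta_n^\transpose]$. The homogeneous term and the contribution of the vanishing part of $\Delta$ are $o(1/n)$, and the summability of the autocovariances $R(k)$ with $\sum_k R(k)=\Sigma_\Delta$ (this is \eqref{e:SigmaDelta}, which follows from \eqref{e:CLT-Delta}) lets one replace the double sum over $\Expect[\xi_k\xi_{k'}^\transpose]$ by its diagonal approximation, yielding
\[
n\,\Expect[\tiltheta_n\tiltheta_n^\transpose]=\frac{n\,}{\,}\sum_{k=n_0+1}^{n}\tfrac1{k^2}\Phi(n,k)\,G\Sigma_\Delta G^\transpose\,\Phi(n,k)^\transpose+o(1).
\]
Setting $s=n/k$ and then $s=e^{t}$ turns this into a Riemann sum, so $n\,\Expect[\tiltheta_n\tiltheta_n^\transpose]\to\int_0^\infty e^{tM}G\Sigma_\Delta G^\transpose e^{tM^\transpose}\,dt$ with $M:=GA+\tfrac12 I$, whenever the integral converges; it converges iff $\|e^{tM}\|$ is square-integrable, i.e.\ iff $\Real\lambda<-\tfrac12$ for every eigenvalue $\lambda$ of $GA$, and in that case a one-line differentiation shows the integral solves the Lyapunov equation \eqref{e:GAINLyap}. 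This is (ii). For (i), project the same formula onto the $\lambda$-eigendirection: the relevant scalar variance is of order $n^{2\Real\lambda}\sum_{k\le n}k^{-2-2\Real\lambda}$, which is $\asymp n^{-1}\log n$ when $\Real\lambda=-\tfrac12$ and $\asymp n^{2\Real\lambda}\gg n^{-1}$ when $-\tfrac12\le\Real\lambda<0$, the hypothesis $v^\dagger G\Sigma_\Delta G^\transpose v>0$ being exactly what makes the coefficient of this term nonzero; hence $n\,\Expect[|v^\dagger\tiltheta_n|^2]\to\infty$ and $\Sigma^G_\theta$ is infinite. (Equivalently: any finite $\Sigma^G_\theta$ would have to solve \eqref{e:GAINLyap} by passing to the limit in the covariance recursion, but pre- and post-multiplying \eqref{e:GAINLyap} by a left eigenvector of $M$ with eigenvalue of nonnegative real part forces $v^\dagger G\Sigma_\Delta G^\transpose v=0$, a contradiction.)

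\emph{Item (iii).} If $M=GA+\tfrac12 I$ is not Hurwitz the claim is vacuous by (i)--(ii), so assume $M$ Hurwitz and set $D:=\Sigma^G_\theta-\Sigma^*$ with $\Sigma^*=A^{-1}\Sigma_\Delta A^{-\transpose}$. A direct computation gives $M\Sigma^*+\Sigma^*M^\transpose=G\Sigma_\Delta A^{-\transpose}+A^{-1}\Sigma_\Delta G^\transpose+\Sigma^*$; subtracting this from $M\Sigma^G_\theta+\Sigma^G_\theta M^\transpose=-G\Sigma_\Delta G^\transpose$ and collecting terms yields
\[
M D+D M^\transpose=-(G+A^{-1})\,\Sigma_\Delta\,(G+A^{-1})^\transpose\ \preceq\ 0.
\]
Since $M$ is Hurwitz, $D=\int_0^\infty e^{tM}(G+A^{-1})\Sigma_\Delta(G+A^{-1})^\transpose e^{tM^\transpose}\,dt\succeq 0$, i.e.\ $\Sigma^G_\theta\ge\Sigma^*$, with equality exactly when $G+A^{-1}=0$; for $G^*=-A^{-1}$ one has $M=-\tfrac12 I$ and $\Sigma^{G^*}_\theta=\Sigma^*$ directly. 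The main obstacle is the analytic justification of the displayed covariance asymptotics when $\bfPhi$ is merely an ergodic Markov chain rather than i.i.d.: the noise is a martingale difference plus a telescoping term, so one must show both the telescoping part and the extra term $\tilA_{n+1}\tiltheta_n$ contribute only $o(1/n)$ to the scaled covariance, and control $\Phi(n,k)$ uniformly in $k$ so the sum-to-integral passage is legitimate; these steps rely on the standard stochastic-approximation theory of \cite{benmetpri90,kusyin97,bor08a} together with \eqref{e:CLT-Delta}, while (i) and (iii) are then linear algebra.
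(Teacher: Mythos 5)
Your proposal is correct in substance, and it diverges from the paper's own treatment in an interesting way. For parts (i)--(ii) the paper (Prop.~\ref{t:CovInfinite} in \Section{s:AsymCov}) works directly with the scaled covariance $\Sigma_n = n\Expect[\tiltheta_n\tiltheta_n^\transpose]$, derives the one-step recursion $\Sigma_n = \Sigma_{n-1}+\tfrac1n\{M\Sigma_{n-1}+\Sigma_{n-1}M^\transpose+\Sigma_{\Delta_n}\}+o(1/n^2)$ with $M=GA+\half I$, and for (i) simply drops the nonnegative drift term to get $\sigma_n^2\ge \sigma_{n-1}^2+\tfrac1n v^\dagger\Sigma_{\Delta_n}v+O(1/n^2)$, hence $\sigma_n^2\gtrsim \log n$; you instead solve the error recursion by variation of constants, approximate $\Phi(n,k)\approx (n/k)^{GA}$, and pass to the integral $\int_0^\infty e^{tM}G\Sigma_\Delta G^\transpose e^{tM^\transpose}\,dt$. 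The two routes are mathematically equivalent (your Riemann-sum limit is exactly the fixed point of the paper's recursion), and both defer the same analytic burden --- justifying that the telescoping part of the Markovian noise and the term $\tilA_{n+1}\tiltheta_n$ contribute only $o(1/n)$ --- to the standard references; your version is more explicit but correspondingly has more approximations to police. Where you genuinely add value is part (iii): the paper gives no proof at all (it cites \cite{benmetpri90,kusyin97,bor08a}), whereas your identity $MD+DM^\transpose=-(G+A^{-1})\Sigma_\Delta(G+A^{-1})^\transpose$ for $D=\Sigma^G_\theta-\Sigma^*$ checks out (I verified $M\Sigma^*+\Sigma^*M^\transpose=G\Sigma_\Delta A^{-\transpose}+A^{-1}\Sigma_\Delta G^\transpose+\Sigma^*$), and combined with $M$ Hurwitz it yields $D\succeq 0$ with equality iff $G=-A^{-1}$; this is a clean, self-contained proof of the lower bound. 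One caveat you inherit from the paper rather than introduce: for the projected variance $v^\dagger\Sigma_n v$ to evolve autonomously (or, in your language, for the $\lambda$-eigencomponent of the noise to enter with coefficient $v^\dagger G\Sigma_\Delta G^\transpose v$), one needs $v$ to be an eigenvector of $(GA)^\transpose$ (a left eigenvector of $GA$), whereas the hypothesis as stated supplies a right eigenvector; your eigenspace decomposition of $\Phi(n,k)$ would naturally produce the dual (left) eigenvector in the noise-positivity condition, so this mismatch deserves a sentence if you want the argument airtight.
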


\Theorem{t:aCov} inspires improved algorithms in many settings.  The first, which is essentially known, e.g.\ \cite[p.~331]{rup85,kusyin97}, will be called stochastic Newton-Raphson (SNR).

\paragraph{Stochastic Newton-Raphson}
 
This algorithm is obtained by estimating the mean $A$ simultaneously with the estimation of $\theta^*$: recursively define
\begin{equation}
\begin{aligned}
\theta_{n+1} &= \theta_n -  \alpha_{n+1} \haA_{n+1}^{-1} \bigl[A_{n+1}\theta_n - b_{n+1} \bigr]
\\
\haA_{n+1} &= \haA_{n} + \alpha_{n+1}  \bigl[  A_{n+1} - \haA_{n+1}  \bigr]\,,\qquad 
\textstyle
 \alpha_{n+1}= \frac{1}{n+1}\,,
\end{aligned}
\label{e:SNRlinearSA}
\end{equation}
where $\theta_0$ and $\haA_1$ are initial conditions.

If the steady-state mean $A$ (defined in \eqref{e:Ab})
 is invertible, then $ \haA_{n} $ is invertible for all $n$ sufficiently large.

The sequence $ \{n\haA_{n} \theta_n : n\ge 0\}$ admits a simple recursive representation that implies the following alternative representation of the SNR parameter estimates:
\begin{proposition}
\label{t:SAMC}
Suppose  $\haA_n$ is invertible for each $n \geq 1$. Then, the sequence of estimates $\{\theta_n\}$ obtained using \eqref{e:SNRlinearSA}  are identical to  the direct estimates:  
\[
\theta_{n} = \haA_{n}^{-1} \hab_{n}\,,
\qquad \text{
where}
\quad 
\haA_{n} = \frac{1}{n} \sum_{i=1}^{n} A_{i}\,,\quad \hab_{n} = \frac{1}{n} \sum_{i=1}^{n} b_{i}\,,\quad n\ge 1\,.
\]	
\label{t:LSAeqAinvb}
\qed
\end{proposition}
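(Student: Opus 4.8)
The plan is to follow the hint preceding the statement: track the rescaled sequence $w_n \eqdef n\,\haA_n\theta_n$, show it obeys a telescoping recursion, and run a parallel (entirely standard) computation showing $n\haA_n=\sum_{i=1}^n A_i$. Concretely, I would argue by induction on $n\ge 1$, establishing the two identities $n\haA_n=\sum_{i=1}^n A_i$ and $w_n=\sum_{i=1}^n b_i$ simultaneously. Dividing the second identity by $n$ and inverting $\haA_n$ (legitimate by the standing invertibility hypothesis, since $n\haA_n=\sum_{i=1}^n A_i$ is then invertible) immediately yields $\theta_n=\haA_n^{-1}\hab_n$, which is the assertion.

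For the base case $n=1$, note that $\alpha_1=1$, so the matrix recursion in \eqref{e:SNRlinearSA} collapses to $\haA_1=A_1$, and the parameter recursion then collapses to $\theta_1=\theta_0-\haA_1^{-1}(A_1\theta_0-b_1)=A_1^{-1}b_1$ for every initial $\theta_0$. Hence $1\cdot\haA_1=A_1$ and $w_1=\haA_1\theta_1=b_1$, as required. (This is also the point at which the natural initialization $\haA_1=A_1$ enters, which is what makes the displayed closed form $\haA_n=\tfrac1n\sum_{i=1}^n A_i$ coincide with the matrix gain actually generated by the recursion.)

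For the inductive step I would clear the inverse by left-multiplying the parameter update by $\haA_{n+1}$, obtaining
\[
\haA_{n+1}\theta_{n+1}=\bigl(\haA_{n+1}-\alpha_{n+1}A_{n+1}\bigr)\theta_n+\alpha_{n+1}b_{n+1}.
\]
From the matrix recursion $\haA_{n+1}=(1-\alpha_{n+1})\haA_n+\alpha_{n+1}A_{n+1}$ we read off $\haA_{n+1}-\alpha_{n+1}A_{n+1}=(1-\alpha_{n+1})\haA_n=\tfrac{n}{n+1}\haA_n$ and, separately, $(n+1)\haA_{n+1}=n\haA_n+A_{n+1}$. Multiplying the display by $n+1$ therefore gives the telescoping identity
\[
w_{n+1}=(n+1)\haA_{n+1}\theta_{n+1}=n\haA_n\theta_n+b_{n+1}=w_n+b_{n+1}.
\]
Combining with the inductive hypothesis gives $(n+1)\haA_{n+1}=\sum_{i=1}^{n+1}A_i$ and $w_{n+1}=\sum_{i=1}^{n+1}b_i$, closing the induction; dividing by $n+1$ and inverting yields $\theta_{n+1}=\bigl((n+1)\haA_{n+1}\bigr)^{-1}w_{n+1}=\haA_{n+1}^{-1}\hab_{n+1}$.

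There is no deep ingredient here beyond the assumed invertibility; the content is purely the algebraic bookkeeping of the $1/(n+1)$ gains. The one place that genuinely needs care — and essentially the only place an error could slip in — is the interaction between the two recursions in \eqref{e:SNRlinearSA}: one must use the exact gain $\alpha_{n+1}=1/(n+1)$ both to produce the clean cancellation $\haA_{n+1}-\alpha_{n+1}A_{n+1}=\tfrac{n}{n+1}\haA_n$ that makes $w_n$ telescope, and to make the $n=1$ step reduce to $\haA_1=A_1$, $\theta_1=A_1^{-1}b_1$. Any other scaling of the matrix-gain step breaks the alignment of these two facts.
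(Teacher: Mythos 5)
Your proof is correct and implements exactly the route the paper sketches: the paper offers no written proof of this proposition beyond the remark that $\{n\haA_n\theta_n\}$ admits a simple recursive representation, and that representation is precisely your telescoping identity $w_{n+1}=w_n+b_{n+1}$. The one interpretive step you take --- reading the matrix update as $\haA_{n+1}=\haA_n+\alpha_{n+1}[A_{n+1}-\haA_n]$ rather than with $\haA_{n+1}$ inside the bracket as literally printed in \eqref{e:SNRlinearSA}, together with the initialization $\haA_1=A_1$ --- is the reading forced by the statement itself (and consistent with the $\clE_1=\haA_{\text{IC}}-A_1$ correction appearing in Prop.~\ref{t:LSTDSNR}), and you flag both points explicitly, so there is no gap.
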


Based on the proposition, it is obvious that the SNR algorithm is consistent whenever the  Law of Large Numbers holds for the sequence $\{A_n, b_n\}$.  
Under the assumptions of
\Theorem{t:aCov}, the resulting asymptotic covariance is identical to what would be obtained with the constant matrix gain $G^*=-A^{-1}$.  
\archive{need that free lunch theorem here.  And, is this statement clear enough?  }

\medskip

Algorithm design in this linear setting is simplified in part because $\barf$ is an affine function of $\theta$, so that    the gain $G_n$ appearing in the standard 
Newton-Raphson algorithm \eqref{e:NR} does not depend upon the parameter estimates $\{\theta_k\}$.  However,  an ODE analysis of the SNR algorithm suggests that even in this linear setting, the dynamics are very different from its deterministic counterpart:
\begin{equation}
\begin{aligned}
\ddt x_t &= - \clA_t^{-1} \bigl[A x_t - b \bigr]
\\
\ddt \clA_t  &= - \clA_t + A
\end{aligned}
\label{e:SNRlinearSA-ODE}
\end{equation}
While evidently $\clA_t $ converges to $A$ exponentially fast in the linear model,  with a poor initial condition we might expect poor transient behavior.  
 
In extending the SNR algorithm to a nonlinear stochastic approximation algorithm,
an ODE approximation of the form \eqref{e:SNRlinearSA-ODE} will be possible under general conditions, but the matrix  $A$ will depend on $\theta$.    In addition to poor transient behavior,  the coupled equations  may be difficult to analyze. 
And, just as in the linear model,  the continuous time system looks very different from the deterministic Newton-Raphson recursion \eqref{e:NR}.
 
   The next class of algorithms are designed so that the associated ODE more closely matches the deterministic recursion.

\subsection{Zap Stochastic Newton-Raphson}

This is a two time-scales algorithm with a higher step-size for the matrix recursion. In the linear setting of this section, it is defined by  the variant of \eqref{e:SNRlinearSA}:
\begin{equation}
\begin{aligned}
\theta_{n+1} &= \theta_n -  \alpha_{n+1} \haA_{n+1}^{-1} \bigl[A_{n+1}\theta_n - b_{n+1} \bigr]
\\
\haA_{n+1} &= \haA_n + \gamma_{n+1}  \bigl[  A_{n+1} - \haA_n  \bigr]
\end{aligned}
\label{e:SNR2linearSA}
\end{equation}
It is different from the original Stochastic Newton-Raphson algorithm because of the two time-scale construction:
The second step-size sequence $\{\gamma_{n+1}\}$ is non-negative, satisfies \eqref{e:stepsizeConditions},  and also
\begin{equation}
\lim_{n\to\infty}\frac{\alpha_n}{\gamma_n} = 0\, .
\label{e:gammaalpha}
\end{equation}
We again take $\alpha_{n} = 1/n$,   $n\ge 1$.

The asymptotic covariance is again optimal.   
The ODE associated with the sequence $\{ \theta_n \}$ is far simpler,  and exactly matches the usual Newton-Raphson dynamics:
\begin{equation} 
\ddt x_t = - x_t + A^{-1} b  
\label{e:SNR2linearSA-ODE}
\end{equation}
This simplicity is also revealed in application to Q-learning, in which $A$ depends on the parameter. 

 A key point to note here is that the Zap version of the SNR algorithm plays a significant role in analysis as well as in performance improvement of general non-linear function approximation problems. We briefly discuss these in the following.

\subsubsection{Zap SNR for non-linear stochastic approximation}

Consider a stochastic approximation algorithm of the form \eqref{e:SAa} with $\barf(\theta) = \Expect [f(\theta,\Phi)]$, a non-linear function of the parameter vector $\theta$. The ODE of the two algorithms: SNR and Zap-SNR look significantly different in this case;  it is found that this difference is reflected in the rate of convergence of the stochastic recursion  (as we will see in the case of Q-learning). 
\spm{2018:  note smoothing here}
The SNR algorithm is essentially the same as \eqref{e:SNRlinearSA}:  
\begin{equation}
\begin{aligned}
\theta_{n+1} &= \theta_n -  \alpha_{n+1} \haA_{n+1}^{-1} f(\theta_n,\phi_{n+1})
\\
\haA_{n+1} &= \haA_{n} + \alpha_{n+1}  \bigl[  \nabla f(\theta_{n},\phi_{n+1}) - \haA_{n+1}  \bigr]\,,\qquad 
\textstyle
\alpha_{n+1}= \frac{1}{n+1}\,.
\end{aligned}
\label{e:SNRnonlinearSA}
\end{equation}
Note that the function $\nabla f(\theta_n,\phi_{n+1})$ may or may not be readily accessible, and this is application specific. In the case of Q-learning with linear function approximation, though the function $f$ is iteslf non-linear in $\theta$, $\nabla f$ is readily computable.

The ODE for the pair of recursions \eqref{e:SNRnonlinearSA} once again will be similar to \eqref{e:SNRlinearSA-ODE}:
\begin{equation}
\begin{aligned}
\ddt x_t &= - \clA_t^{-1} \barf(\theta_t)
\\
\ddt \clA_t  &= - \nabla \barf(\theta_t) + A
\end{aligned}
\label{e:SNRnonlinearSA-ODE}
\end{equation}

The Zap-SNR algorithm is a generalization	 of \eqref{e:SNR2linearSA}:
\begin{equation}
\begin{aligned}
\theta_{n+1} &= \theta_n -  \alpha_{n+1} \haA_{n+1}^{-1} f(\theta_n,\phi_{n+1})
\\
\haA_{n+1} &= \haA_n + \gamma_{n+1}  \bigl[  \nabla f(\theta_n,\phi_{n+1}) - \haA_n  \bigr]
\end{aligned}
\label{e:SNR2nonlinearSA}
\end{equation}
where once again the step-size sequence $\{\gamma_n\}$ satisfies \eqref{e:stepsizeConditions}, and \eqref{e:gammaalpha}. Similar to \eqref{e:SNR2linearSA-ODE}, the ODE of this algorithm is identical to the deterministic Newton-Raphson dynamics:
\begin{equation} 
\ddt x_t = - (\nabla \barf(x_t))^{-1} \barf(x_t).
\label{e:SNR2nonlinearSA-ODE}
\end{equation}

The general convergence and stability analysis of both \eqref{e:SNRnonlinearSA} and \eqref{e:SNR2nonlinearSA} is open. In \Section{s:Q} we show that when applied to Q-learning, the algorithms do converge under certain technical conditions.  However, the assumptions under which the single time-scale algorithm \eqref{e:SNRnonlinearSA} converges is far more restrictive than the assumptions under which the the two-time-scale algorithm \eqref{e:SNR2nonlinearSA} converges.

\subsubsection{Dealing with complexity: An $O$($d$) Zap-SNR algorithm}

\label{s:OdZapSNR}

It is common to discard the idea of second order methods because of their computational complexity. Before we move on to the specific applications in Reinforcement Learning, we propose an enhancement of the SNR algorithms that will result in complexity that is comparable to first order methods. 

We believe that we have convinced the readers that the two-timescale Zap-SNR algorithm \eqref{e:SNR2nonlinearSA} is of more interest to us (we will make this more precise in \Section{s:Q}), and hence restrict to extensions of this algorithm here.

It is assumed that there is no complexity in ``calculating" the gradient function $\nabla f(\cdot,\cdot)$, and that it is readily available. This is not be true in all applications, but holds in the applications of interest in this paper.  
Under these assumptions, computational complexity arises from the operations that are performed in manipulating these quantities.

The per-iteration complexity of the first order algorithm \eqref{e:SAintro} is $O(d)$, since $\theta \in \Re^d$. If the algorithm is run for $T$ iterations (assuming we have a data sequence of length $T$), the total complexity is $O(dT)$. The per iteration complexity in the case of the Zap-SNR algorithm \eqref{e:SNR2nonlinearSA} is $O(d^2)$, because it involves the product of a matrix inverse (of dimension $d \times d$) and a vector (of dimension $d \times 1$). The total complexity of the algorithm after running for $T$ iterations is $O(Td^2)$. 
\spm{2018: we will need a reference some day -- we can upload without it}

The essential idea behind the $O(d)$ Zap-SNR algorithm is to perform the $O(d^2)$ complexity steps only once every $N \geq d$ iterations, so that the total computational complexity   for a data sequence of length $T$ is $O(\frac{Td^2}{N})$; essentially resulting in the complexity of the first order method if $N = d$.
 This is done by ``batching" the data sequence into mini-sequences of length $N$, and applying recursions \eqref{e:SNR2nonlinearSA} for each batch as follows: For $i \geq 0$
\begin{equation}
\begin{aligned}
\theta_{(i+1)N} &= \theta_{iN} -  \alpha_{i+1} \haA_{(i+1)N}^{-1} \haf(\theta_{iN})
\\
\haA_{(i+1)N} &= \haA_{iN} + \hagamma_{i+1}  \bigl[  \nabla \haf(\theta_{iN}) - \haA_{iN}  \bigr],
\end{aligned}
\label{e:OdSNR2nonlinearSA}
\end{equation}
where,
\begin{equation}
\begin{aligned}
\haf(\theta_{iN}) &= N^{-1} \displaystyle \sum_{j = iN + 1}^{(i+1)N} f(\theta_{iN},\phi_j)
\\
\nabla \haf(\theta_{iN}) &= N^{-1} \displaystyle \sum_{j = iN + 1}^{(i+1)N} \nabla f(\theta_{iN},\phi_j)
\\
\hagamma_{i+1} &= 1 - \displaystyle \prod_{j=iN+1}^{(i+1)N} (1-\gamma_j).
\end{aligned}
\label{e:OdSNR2Defs}
\end{equation}

The first two definitions in \eqref{e:OdSNR2Defs} are straightforward; the expression for $\hagamma_{i+1,N}$ is obtained in such a way that the recursions in \eqref{e:OdSNR2nonlinearSA} very closely resemble the recursions in \eqref{e:SNR2nonlinearSA}\footnote{This deserves more explanation and we plan to provide one in a future version of the paper.}.

A remarkable (but almost obvious) property of the $O(d)$ Zap-SNR algorithm \eqref{e:OdSNR2nonlinearSA} is that it has the same asymptotic properties (specifically, the asymptotic covariance) as that of the original Zap-SNR algorithm \eqref{e:SNR2nonlinearSA}. This once again is made more precise in a future version of the paper. The specific application of this algorithm to Q-learning is discussed in \Section{s:OdZap}.


\subsection{Application to temporal-difference algorithms}

The general theory is illustrated here, through application to  TD($\lambda$)-learning algorithms.

Let $\{P^n\}$ denote the transition semigroup for the Markov chain $\bfmX$:
For each $n\ge 0$, $x\in\state$, and  $A\in \clB(\state)$,
\[
P^n(x,A):=\Prob_x\{X_n\in A\}:=\Pr\{X_n\in A\,|\,X_0=x\}.
\]
The standard operator-theoretic notation is used for conditional expectation:  
for any measurable function $f \colon \state \to \Re$,  
\[
P^n f\, (x) = 
\Expect_x [f(X_n)  ]  \eqdef
\Expect [f(X_n) \mid X_0 = x].
\]
In a finite state space setting, $P^n$ is the $n$-step transition probability matrix of the Markov chain, and the conditional expectation appears as matrix-vector multiplication:
\[
P^n f\, (x) = \sum_{x'\in\state} P^n(x,x') f(x'),\qquad x\in\state.
\]

Let $c\colon\state\to\Re_+$ denote a cost function, and $\beta\in (0,1)$ a discount factor.
The discounted-cost value function is defined as $h=\sum_{n=0}^\infty \beta^n P^n c$, which is the unique solution to the Bellman equation
\begin{equation}
	c+ \beta Ph= h
\label{e:DiscFish}
\end{equation}
TD-learning algorithms are designed to obtain approximations of $h$ within a finite-dimensional parameterized class.

Consider the case of a $d$-dimensional linear parameterization.   A   function $\psi \colon\state\to\Re^d$ is chosen,  which is viewed as a collection of $d$ basis functions.   Each vector $\theta\in\Re^d$ is associated with the approximate value function ${{{h}}}^\theta = \sum_i \theta_i \psi_i$.   There are two standard criteria for defining optimality of the parameter.  Most natural is the minimum norm approach:  
\begin{equation} 
 \theta^*= \argmin_\theta \|  {{{h}}}^\theta - h\|
\label{e:MetricMinNorm}
\end{equation}  
in which the choice of norm is part of the design of the algorithm.   Most common is   \begin{equation}
\|  {{{h}}}^\theta - h\|^2 = \Expect[ ({{{h}}}^\theta(X_n)-h(X_n))^2]
\label{e:pinorm}	
\end{equation} 
where the expectation is in steady-state.  
 
In the Galerkin approach, a $d$-dimensional stationary stochastic process $\bfelig$ is constructed that is adapted to a stationary realization of $\bfmX$.  An algorithm is designed to obtain the vector   $\theta^*\in\Re^d$ that satisfies
\begin{equation} 
 0 = \Expect\bigl[  \bigl( -{{{h}}}^{\theta^*} (X_n) + c(X_n)+ \beta {{{h}}}^{\theta^*} (X_{n+1}) \bigr)   \elig_n(i) \bigr] \, , \quad 1\le i\le d\, ,
\label{e:MetricGalTD} 
\end{equation}
in which the expectation is again in steady state.
The $d$-dimensional stochastic process $\bfelig$ is called the sequence of \textit{eligibility vectors}.

The motivation for the first criterion \eqref{e:MetricMinNorm} is clear, but algorithms that solve this problem often suffer from high variance. The Galerkin approach is used because it is simple and generally applicable.   Also,  if the basis functions are chosen such that $h={{{h}}}^{\theta^\bullet}$ for some $\theta^\bullet\in\Re^d$, and if the solution  to  \eqref{e:MetricGalTD} is unique,  then the Galerkin approach will yield the exact solution $h$.


%

The goal of the TD($\lambda$) learning algorithm is to solve the Galerkin relaxation  \eqref{e:MetricGalTD} in which the eligibility vectors are obtained by passing $\{\psi(X_n)\}$ through the corresponding first-order low-pass filter:  $\elig_{n+1}   = \lambda \beta \elig_n + \psi(X_{n+1})$, $n\ge 0$.  It is always assumed that $\lambda\in [0,1]$.    It is shown in \cite{tsiroy97a} that the solutions to the Galerkin fixed point equation  \eqref{e:MetricGalTD} and the minimum norm problem \eqref{e:MetricMinNorm} coincide if $\lambda=1$, with the norm defined by \eqref{e:pinorm}.

\paragraph{TD($\lambda$) algorithm:}
For initialization $\theta_0\,,\elig_0\in\Re^d  $, the sequence of estimates are defined recursively:  
\begin{equation}
\begin{aligned}
\theta_{n+1} & = \theta_n + \alpha_{n+1} \elig_n  d_{n+1} 
\\
d_{n+1} & =  c(X_n) +  \bigl[\beta \psi(X_{n+1}) -  \psi(X_n) \bigr]^\transpose  \theta_n
\\
\elig_{n+1} & = \lambda \beta \elig_n + \psi(X_{n+1})\, .
\label{e:TDlambda}
\end{aligned}
\end{equation}

The recursion  \eqref{e:TDlambda} can be placed in the form \eqref{e:linearSA} in which $\Phi_n= (X_n, X_{n-1},\elig_{n-1} )$, and
\begin{equation}
A_{n+1} =  \elig_n
 \bigl[\beta \psi(X_{n+1}) -  \psi(X_n) \bigr]^\transpose \,,\qquad b_{n+1} =   - \elig_n c(X_n)
\label{e:TDAb}
\end{equation}
Based on this representation, it can be shown that the TD($\lambda$) algorithm is consistent provided the basis vectors are linearly independent, in the sense that $\Expect_\pie[ \psi(X_n) \psi(X_n)^\transpose]>0$.  

It is also easy to construct an example for which the  asymptotic covariance is infinite:   Take any consistent example, and scale the basis vectors by a small constant $\epsy$.   Using the basis $\epsy \psi$, the resulting matrix $A$ is scaled by $\epsy^2$.  Hence, for sufficiently small $\epsy>0$,  \textit{each eigenvalue of $A$ will have   real part that is strictly  greater than $-1/2$.}

 An application of the SNR matrix gain algorithm \eqref{e:SNRlinearSA} results in an algorithm with optimal asymptotic covariance.   This results in the coupled recursions:
\begin{eqnarray}
\begin{aligned}
\theta_{n+1} & = \theta_n - \alpha_{n+1} \haA_{n+1}^{-1} \elig_n  d_{n+1} 
\\
d_{n+1} & =  c(X_n) +  \bigl[\beta \psi(X_{n+1}) -  \psi(X_n) \bigr]^\transpose  \theta_n
\\
\elig_{n+1} & = \lambda \beta \elig_n + \psi(X_{n+1})
\label{e:TDKlambda}
\end{aligned}
\\
\haA_{n+1} = \haA_n +   \alpha_{n+1}\bigl[\elig_n (\beta\psi(X_{n+1}) -  \psi(X_n) )^\transpose   -   \haA_n  \bigr] \,,
\label{e:TDK_A}
\end{eqnarray}
where $ \alpha_n \equiv 1/n$, for $n \geq 1$.

The following  proposition follows directly from  \Proposition{t:LSAeqAinvb}:
\begin{proposition}
\label{t:LSTDSNR}   
Suppose that $\haA_{n}$ is invertible for all $n \geq 1$. Then,  the sequence of parameters obtained using the SNR-TD($\lambda$) algorithm  (\ref{e:TDKlambda},\ref{e:TDK_A}) coincides with the direct estimates: 
\begin{equation}
\begin{aligned}
\theta_{n} &= \haA_{n}^{-1} \hab_{n}, 
\\
\haA_{n} = \frac{1}{n} \sum_{i=1}^{n} \elig_{i-1} \bigl[\beta \psi(X_{i}) -  \psi(X_{i-1}) \bigr]^\transpose 
& + \frac{1}{n}\clE_1  \,,\qquad \hab_{n} = \frac{1}{n} \sum_{i=1}^{n} \elig_{i-1} c(X_{i-1})\,,
\end{aligned}
\label{e:SNRTD}
\end{equation}
where $\clE_1 = \haA_{\text{IC}} - A_1 $, $\haA_{\text{IC}}$ denoting the matrix $\haA_1$ in (\ref{e:TDKlambda},\ref{e:TDK_A}), 
and the sequence of vectors $\{\elig_{n}\}$ are again defined by  
$\elig_{n+1}  = \lambda \beta \elig_n + \psi(X_{n+1})$.  \qed  
\end{proposition}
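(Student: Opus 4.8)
The plan is to recognize the SNR--TD($\lambda$) recursion (\ref{e:TDKlambda},\ref{e:TDK_A}) as a literal instance of the linear SNR template \eqref{e:SNRlinearSA} and then invoke \Proposition{t:LSAeqAinvb}, paying attention only to the one place where the two setups differ, namely the initialization of the matrix-gain recursion. Concretely, I would first record the identification from \eqref{e:TDAb}: with $\Phi_n=(X_n,X_{n-1},\elig_{n-1})$, $A_{n+1}=\elig_n[\beta\psi(X_{n+1})-\psi(X_n)]^\transpose$ and $b_{n+1}=-\elig_n c(X_n)$, one checks at once that $\elig_n d_{n+1}=A_{n+1}\theta_n-b_{n+1}$, so the $\theta$-update in \eqref{e:TDKlambda} is exactly the first line of \eqref{e:SNRlinearSA}, and \eqref{e:TDK_A} is the companion matrix recursion $\haA_{n+1}=\haA_n+\alpha_{n+1}[A_{n+1}-\haA_n]$ with $\alpha_n\equiv 1/n$. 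The eligibility recursion $\elig_{n+1}=\lambda\beta\elig_n+\psi(X_{n+1})$ has no dynamical role here: it merely generates the data $\{A_n,b_n\}$ and is therefore carried through verbatim into the conclusion. It is worth emphasizing that, the invertibility hypothesis aside, this is a purely deterministic sample-path identity; $\beta<1$, $\lambda\in[0,1]$, $\psi$-irreducibility, and convergence play no part in it.

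The algebraic heart is the computation behind \Proposition{t:LSAeqAinvb}, which I would re-run in the present notation. Multiplying the $\theta$-update by $\haA_{n+1}$ and using the matrix recursion to rewrite $\haA_{n+1}\theta_n=\haA_n\theta_n+\alpha_{n+1}(A_{n+1}-\haA_n)\theta_n$, the two copies of $A_{n+1}\theta_n$ cancel, leaving $\haA_{n+1}\theta_{n+1}=(1-\alpha_{n+1})\haA_n\theta_n+\alpha_{n+1}b_{n+1}$, i.e.\ $(n{+}1)\haA_{n+1}\theta_{n+1}=n\haA_n\theta_n+b_{n+1}$. In the same way the matrix recursion gives $(n{+}1)\haA_{n+1}=n\haA_n+A_{n+1}$. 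Telescoping both identities from the first step and dividing by $n$ then yields $\haA_n=\frac1n\sum_{i=1}^{n}A_i+\frac1n\clE_1$ together with a matching closed form for $n\haA_n\theta_n$; solving the latter for $\theta_n$ and substituting the expressions for $A_i$ and $b_i$ from \eqref{e:TDAb} produces the displayed formulas, with $\clE_1=\haA_{\text{IC}}-A_1$ measuring precisely the gap between the arbitrary initial matrix $\haA_1=\haA_{\text{IC}}$ and the value $A_1$ at which a ``clean'' running average would begin.

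The only genuine work is the bookkeeping at $n=1$. Because $\haA_{\text{IC}}$ is an arbitrary invertible matrix rather than $A_1$, both telescoped sums acquire a boundary contribution, and one must verify that --- after dividing by $n$ and using the prescribed initialization of $\theta$ --- this contribution collapses exactly to the single additive correction $\frac1n\clE_1$ inside $\haA_n$, leaving $\hab_n$ in the stated form. A secondary minor check is that the matrix recursion \eqref{e:TDK_A} is written in the ``explicit'' form $A_{n+1}-\haA_n$, whereas the template \eqref{e:SNRlinearSA} uses the ``implicit'' $A_{n+1}-\haA_{n+1}$; one should confirm that this affects only the shape of the initial transient (hence only the precise $\clE_1$-type term), not the running-average structure. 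Everything beyond these initialization details is a transcription of \Proposition{t:LSAeqAinvb}.
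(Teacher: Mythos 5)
Your route is the paper's route: the paper's entire proof is the sentence preceding the proposition, namely that it ``follows directly from \Proposition{t:LSAeqAinvb}'' via the identification \eqref{e:TDAb}, and your telescoped identities $(n{+}1)\haA_{n+1}=n\haA_n+A_{n+1}$ and $(n{+}1)\haA_{n+1}\theta_{n+1}=n\haA_n\theta_n+b_{n+1}$ are exactly the algebra that makes that citation legitimate. Your side remark that \eqref{e:TDK_A} is the explicit form $A_{n+1}-\haA_n$ while \eqref{e:SNRlinearSA} is written with $A_{n+1}-\haA_{n+1}$ is also well taken; only the explicit form is consistent with the running-average conclusion of \Proposition{t:LSAeqAinvb}, so for the TD recursion there is nothing to reconcile.

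The one step I would not let you defer is the $n=1$ bookkeeping, because it does not ``collapse exactly'' to the displayed formula. Since $\alpha_1=1$, the first update gives $\haA_1\theta_1=\haA_1\theta_0-[A_1\theta_0-b_1]=\clE_1\theta_0+b_1$, and telescoping the identity $(n{+}1)\haA_{n+1}\theta_{n+1}=n\haA_n\theta_n+b_{n+1}$ from there yields
\[
n\haA_n\theta_n=\clE_1\theta_0+\sum_{i=1}^n b_i,\qquad b_i=-\elig_{i-1}c(X_{i-1}),
\]
that is, $\theta_n=\haA_n^{-1}\bigl[\tfrac{1}{n}\clE_1\theta_0-\hab_n\bigr]$ with $\hab_n$ as defined in \eqref{e:SNRTD}. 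This differs from the displayed $\theta_n=\haA_n^{-1}\hab_n$ in two ways: a sign (the paper's $\haA_n$ averages $\elig_{i-1}[\beta\psi(X_i)-\psi(X_{i-1})]^\transpose$, the negative of the usual LSTD matrix, so the stated identity should read $\theta_n=-\haA_n^{-1}\hab_n$ with these conventions), and the residual term $\tfrac{1}{n}\clE_1\theta_0$, which vanishes only when $\clE_1\theta_0=0$ (e.g.\ $\theta_0=0$ or $\haA_{\text{IC}}=A_1$). These are best read as imprecisions in the proposition's statement rather than defects of your argument, but since the closing step of your proof is precisely the assertion that the boundary contribution reduces to the single correction $\tfrac{1}{n}\clE_1$ inside $\haA_n$ and leaves $\hab_n$ untouched, you need to carry the computation out and either record the extra $\tfrac{1}{n}\clE_1\theta_0$ term or state the initialization and sign conventions under which it disappears.
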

\archive{Readers will be confused by ``initial condition''  for hat-A.   We don't need to explain this issue -- no big deal.
\\
\rd{Is the connection between (\ref{e:TDKlambda},\ref{e:TDK_A}) and \eqref{e:SNRTD} clear?? Is it understood that the initial conditions are different now that we have removed some text?}}

It is a remarkable fact that this algorithm is essentially equivalent to the LSTD($\lambda$) algorithm of  \cite{brabar96,boy02,nedber03a}:
The LSTD($\lambda$) algorithm is defined to be \eqref{e:SNRTD} with   $\clE_1 =0$.

\archive{We still need to do this!
Essay:	
While the motivation for the algorithm relied on a complete basis, so that ${{{h}}}^{\theta^*} $ is a solution to \eqref{e:DiscFish} for a unique value of $\theta^*$,   we demonstrate that this}

\archive{Some day let's revisit "Why does Polyak fail?" in the Jan 2017 version of the paper.   This is important for our education, not for this paper.}

\section{Q-Learning}
\label{s:Q}

The class of algorithms considered next is designed for a controlled Markov model, whose input process is denoted $\bfmU$.  It is assumed that the state space $\state$ and the action space   $\U$ on which $\bfmU$ evolves are both finite.  Denote $\nd = |\state|$ and  $\ninp = |\U|$.

\subsection{Notation and assumptions}

It is convenient to maintain the operator-theoretic notation used in the uncontrolled setting.    
There is now a controlled transition matrix  that acts on functions $h\colon\state\to\Re$ via
\[
P_u h\, (x) \eqdef \sum_{x'\in\state} P_u(x,x') h(x'),\qquad x\in\state\,,\ u\in\U\, .
\]
For any non-anticipative input sequence $\bfmU$ we have $P_u h\, (x) = \Expect[h(X_{t+1}) \mid X_0^t, U_0^t]$ on the event $X_t=x$ and $U_t=u$.

There is a finite number of deterministic stationary policies that are enumerated as 
$\{ \phi^{(i)} : 1\le i\le \nphi\}$, with $\nphi = (\ninp)^\nd$.   A randomized stationary policy is defined by a pmf $\upmf$ on the integers $\{ 1\le i\le \nphi\}$ and such that for each $t$,  
\begin{equation}
 U_t =  \sum_{k=1}^{\nphi}  \iota_k(t)  \phi^{(k)}(X_t) 
\label{e:RandStatPolicy}
\end{equation}
where $\{ \iota(t) \}$ is an i.i.d.\ sequence on $\{0,1\}^{\nphi}$ satisfying $\sum_k \iota_k(t) = 1$,   and $\Prob\{\iota_k(t) = 1 \mid X_0^t\} = \upmf(k)$ for all $k$ and $t$.

For any deterministic stationary policy $\phi$,  let $S_\phi$ denote the substitution operator, defined for any function $q\colon \state\times\U\to\Re$ by $S_\phi q\, (x) = q(x,\phi(x))$. If the policy $\phi$ is randomized, of the form \eqref{e:RandStatPolicy},   then we denote
\[
S_\phi q\, (x) =  \sum_k \upmf(k) q(x,\phi^{(k)} (x)) 
\]  
With  $P$ viewed as  a single matrix with $\nd\cdot\ninp$ rows and $\nd$ columns, and $S_\phi$ viewed as a matrix with $\nd$ rows and $\nd\cdot\ninp$ columns,  the following interpretations hold:
\begin{lemma}
\label{t:PSphi}
Suppose that $\bfmU$ is defined using a   stationary policy $\phi$ (possibly randomized).  Then, both $\bfmX$ and the pair process $(\bfmX,\bfmU)$ are Markovian, and
\begin{romannum}
\item
$P_\phi \eqdef S_\phi P$ is the transition matrix for   $\bfmX$.
 
\item  $P S_\phi$ is the transition matrix for   $(\bfmX,\bfmU)$.
\qed
\end{romannum}
\end{lemma}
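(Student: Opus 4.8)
The plan is to reduce the randomized policy to a single action kernel and then read off both assertions from the entrywise formulas for the matrix products $S_\phi P$ and $PS_\phi$; the Markov property of the two processes falls out along the way.

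First I would fix notation: set $\upmf_\phi(u\mid x)\eqdef\sum_{k:\phi^{(k)}(x)=u}\upmf(k)$, the probability that the policy chooses action $u$ in state $x$ (equal to $\ind\{u=\phi(x)\}$ when $\phi$ is deterministic). From the defining relation \eqref{e:RandStatPolicy}, together with the assumption that $\{\iota(t)\}$ is i.i.d.\ and independent of the state history, one has $\Prob\{U_t=u\mid X_0^t,U_0^{t-1}\}=\upmf_\phi(u\mid X_t)$ and, by the same reasoning one step later, $\Prob\{U_{t+1}=u\mid X_0^{t+1},U_0^{t}\}=\upmf_\phi(u\mid X_{t+1})$. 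I would also record the matrix form of $S_\phi$: as an $\nd\times(\nd\cdot\ninp)$ matrix its entries are $S_\phi\bigl(x,(x',u')\bigr)=\ind\{x=x'\}\,\upmf_\phi(u'\mid x)$, which is consistent with $S_\phi q(x)=\sum_k\upmf(k)q(x,\phi^{(k)}(x))=\sum_{u'}\upmf_\phi(u'\mid x)q(x,u')$. Finally, the defining property of the controlled transition matrix (stated just before the lemma) gives $\Prob\{X_{t+1}=x'\mid X_0^t,U_0^t\}=P_{U_t}(X_t,x')$ for the non-anticipative input $\bfmU$.

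For part (i) I would condition on $U_t$ and combine these facts:
\[
\Prob\{X_{t+1}=x'\mid X_0^t\}=\sum_{u}\upmf_\phi(u\mid X_t)\,P_u(X_t,x')=(S_\phi P)(X_t,x'),
\]
the last equality being the entrywise definition of the matrix product $S_\phi P$ (row $x$, column $x'$ equals $\sum_u\upmf_\phi(u\mid x)P_u(x,x')$). Since the right-hand side depends on the past only through $X_t$, this shows $\bfmX$ is Markov with transition matrix $P_\phi=S_\phi P$. For part (ii) I would decompose, for states $x,x'$ and actions $u,u'$,
\[
\Prob\{(X_{t+1},U_{t+1})=(x',u')\mid X_0^t,U_0^t\}=\Prob\{X_{t+1}=x'\mid X_0^t,U_0^t\}\cdot\Prob\{U_{t+1}=u'\mid X_0^{t+1},U_0^t\},
\]
identify the first factor as $P_{U_t}(X_t,x')$ and the second as $\upmf_\phi(u'\mid X_{t+1})$, and evaluate on the event $(X_t,U_t)=(x,u)$, $X_{t+1}=x'$ to get $P_u(x,x')\,\upmf_\phi(u'\mid x')$. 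This is exactly the $((x,u),(x',u'))$ entry of $PS_\phi$, and it depends on the past only through $(X_t,U_t)$, so $(\bfmX,\bfmU)$ is Markov with transition matrix $PS_\phi$.

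The one place to be careful is the conditioning: I need the exogenous randomization $\iota(t+1)$ to be independent of $\sigma(X_0^{t+1},U_0^{t})$ so that its conditional law is the unconditional pmf $\upmf$ and the factor $\Prob\{U_{t+1}=u'\mid X_0^{t+1},U_0^t\}$ collapses to $\upmf_\phi(u'\mid X_{t+1})$. This is immediate from the way a randomized stationary policy is built (the $\iota$'s are drawn independently of everything exogenous), but it should be invoked explicitly rather than left implicit. Everything else is bookkeeping with conditional expectations and the entrywise meaning of the two matrix products.
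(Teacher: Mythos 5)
Your proof is correct. The paper itself supplies no argument for this lemma---it is stated with a \qed as a standard fact---so there is nothing to compare against; your entrywise computation of $S_\phi P$ and $PS_\phi$, together with the factorization of $\Prob\{(X_{t+1},U_{t+1})=(x',u')\mid X_0^t,U_0^t\}$ into $P_u(x,x')\,\upmf_\phi(u'\mid x')$, is exactly the canonical verification the authors leave implicit. Your remark that the exogeneity of $\iota(t+1)$ with respect to $\sigma(X_0^{t+1},U_0^t)$ (not merely $\sigma(X_0^{t+1})$) must be invoked to collapse the second factor is the right point to flag; it is the intended reading of the i.i.d.\ assumption on $\{\iota(t)\}$ in \eqref{e:RandStatPolicy}.
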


A cost function $c\colon\state\times\U\to\Re$ is given together with a discount factor $\beta\in (0,1)$.  For any (possibly randomized) stationary policy $\phi$, the resulting value function is denoted  
\begin{equation}
{{{h}}}_\phi(x) = \sum \beta^t P_\phi^t S_\phi c = \sum \beta^t \Expect[c(X_t,U_t) \mid X_0=x]\,,\qquad x\in\state\, .
\label{e:hphi}
\end{equation}
The minimal value function is denoted $h^*$, which is the unique solution to the discounted-cost optimality equation (DCOE):
\begin{equation}
h^*(x) = \min_u \bigg\{ c(x,u) +  \beta \sum_{x'\in\state}  P_u(x,x')  h^*(x')     \bigg\}    
\label{e:DCOE}
\end{equation}
The minimizer defines a stationary policy $\phi^*\colon\state\to\U$ that is optimal over all input sequences \cite{ber12a}.

The associated ``Q-function'' is defined to be the term within the brackets,  $Q^*(x,u) \eqdef  c(x,u) + \beta P_u h^*\, (x)$.
The DCOE implies a similar fixed point equation for the Q-function:
\begin{equation}
	Q^*(x,u) =  c(x,u) + \beta \sum_{x'\in\state}  P_u(x,x')  \uQ^*(x')   
\label{e:DCOE-Q}
\end{equation}
in which $\uQ(x)\eqdef\min_u Q(x,u)$ for any function $Q\colon\state\times\U\to\Re$.  

For any function $q\colon\state\times\U\to\Re$,   let    $\phi^q\colon\state\to\U$  denote  an associated policy satisfying
\begin{equation}
\phi^q(x) \in  \argmin_uq(x,u)
\label{e:phi_q}
\end{equation}
for each $x\in\state$. It is assumed to be specified \emph{uniquely} as follows: 
\begin{equation}
\begin{aligned}
\phi^q & \eqdef \phi^{(\kappa)}\,\,\text{such that}\,\, \kappa = \min \{ i : \phi^{(i)} (x) \in \argmin_uq(x,u), \,\, \text{for all }x \in \state  \}
\end{aligned}
\label{e:phi_q_def}
\end{equation}
The fixed point equation \eqref{e:DCOE-Q} becomes
\begin{equation}
Q^* =  c + \beta  P S_{\phi} Q^*\,,\qquad \text{with $\phi =  \phi^q $,   $q={Q^*}$}
\label{e:DCOE-Qb}
\end{equation}

In the analysis that follows it is necessary to consider the Q-function associated with all  possible cost functions simultaneously:  given any function $\varsigma\colon\state\times\U\to\Re$,  let $\Qstar(\varsigma)$ denote the corresponding solution to the fixed point equation \eqref{e:DCOE-Q}, with $c$ replaced by $\varsigma$.  That is,    the function $q=\Qstar(\varsigma)$ is the solution to the fixed point equation,
\archive{$\Qstar^{-1}$ is obviously piecewise linear and convex, but not monotone}
\begin{equation} 
q(x,u) = \varsigma(x,u)   + \beta \sum_{x'} P_u(x,x') \min_{u'} q(x',u')\,, \quad x\in\state,\ u\in\U.
\label{e:DCOE-Qkappa}
\end{equation}
For a pmf $\upmf$ defined on the set of policy indices $\{ 1\le i\le \nphi\}$, denote
\begin{equation}
\partialQstar_{\upmf} \eqdef \bigl( \sum\upmf(i)  [I-\beta P S_{\phi^{(i)}}]\bigr)^{-1}
\label{e:partiakQmu}
\end{equation}
so that $ \partialQstar_{\upmf} \varsigma$ is the ``$Q$-function'' obtained with the cost function $\varsigma$, and the randomized stationary policy defined by $\mu$ (see also discussion of the SARSA algorithm following the proof of \Lemma{t:Thetak}).  It follows that the functional $\Qstar$ can be expressed as the minimum over all pmfs $\upmf$:
\begin{equation}
\Qstar(\varsigma) = \min \partialQstar_{\upmf} \varsigma
\label{e:Qstar}
\end{equation}
There is a single degenerate pmf that attains the minimum for each $(x,u)$  (the optimal stationary policy is deterministic) \cite{ber12a}.  
\archive{spm: added some explanation.  In your dissertation you will need a chapter reviewing SARSA and how it relates to TD learning.  }

\begin{lemma}
\label{t:clQstar} 
The mapping $\Qstar$ is a bijection on the set of real-valued functions on $\state\times\U$.   It is also piecewise linear, concave and monotone.    
\end{lemma}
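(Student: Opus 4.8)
The plan is to treat the four assertions in turn: bijectivity from an explicit formula for the inverse, and the remaining three properties from the representation \eqref{e:Qstar}.

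First I would exhibit the inverse. On the space of real-valued functions on $\state\times\U$ define
\[
\clG(q)(x,u)\eqdef q(x,u)-\beta\sum_{x'\in\state}P_u(x,x')\,\uq(x')\,,\qquad \uq(x)=\min_{u}q(x,u)\,.
\]
Rearranged, the fixed point equation \eqref{e:DCOE-Qkappa} states precisely that $\clG(\Qstar(\varsigma))=\varsigma$ for every $\varsigma$, so $\clG\circ\Qstar$ is the identity and $\Qstar$ is injective. Conversely, given an arbitrary $q$, set $\varsigma\eqdef\clG(q)$; then $q$ is \emph{a} solution of \eqref{e:DCOE-Qkappa} with this cost, and because the right-hand side of \eqref{e:DCOE-Qkappa} is a sup-norm contraction in $q$ with modulus $\beta$ (using $\bigl|\min_uq(x,u)-\min_uq'(x,u)\bigr|\le\|q-q'\|_\infty$), that solution is unique; hence $\Qstar(\clG(q))=q$. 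Thus $\Qstar$ is a bijection with $\Qstar^{-1}=\clG$.

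Next I would reduce \eqref{e:Qstar} to a minimum over the $\nphi$ deterministic policies. Set $M_i\eqdef(I-\beta PS_{\phi^{(i)}})^{-1}$. By \Lemma{t:PSphi}(ii) each $PS_{\phi^{(i)}}$ is a stochastic matrix, so $M_i=\sum_{t\ge0}\beta^t(PS_{\phi^{(i)}})^t$ is well defined with non-negative entries; likewise every $\partialQstar_{\upmf}=(I-\beta\sum_i\upmf(i)PS_{\phi^{(i)}})^{-1}$ is the inverse of $I-\beta K$ for a stochastic matrix $K$, hence has non-negative entries. Given $\varsigma$, put $q=\Qstar(\varsigma)$ and let $\phi^{(\kappa)}$ be its greedy policy from \eqref{e:phi_q_def}; replacing $\min_{u'}q(x',u')$ by $q(x',\phi^{(\kappa)}(x'))$ in \eqref{e:DCOE-Qkappa} gives $q=\varsigma+\beta PS_{\phi^{(\kappa)}}q$, i.e.\ $q=M_\kappa\varsigma$ (the same manipulation that produces \eqref{e:DCOE-Qb}). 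Since \eqref{e:Qstar} gives $q\le M_i\varsigma$ componentwise for every $i$ (this also follows from a direct policy-improvement comparison), we obtain
\[
\Qstar(\varsigma)=\min_{1\le i\le\nphi}M_i\varsigma\qquad(\text{componentwise}).
\]
The structural claims are then immediate. Each $M_i$ is linear, so $\Qstar$ is a componentwise minimum of finitely many linear maps and therefore \emph{concave}; it is \emph{piecewise linear} because on the polyhedral cone $R_i\eqdef\{\varsigma : M_i\varsigma\le M_j\varsigma\ \text{componentwise}\ \forall j\}$ it agrees with the linear map $M_i$, while the identity $q=M_\kappa\varsigma$ shows $\bigcup_i R_i$ exhausts the whole function space; and each $M_i$, having non-negative entries, is order-preserving, so the componentwise minimum $\Qstar$ is \emph{monotone}.

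The one step that really needs care is this reduction — precisely the assertion that, for every cost $\varsigma$, a \emph{single} deterministic policy attains the minimum in \eqref{e:Qstar} simultaneously at all $(x,u)$. That is the classical fact that a discounted MDP admits a deterministic optimal stationary policy (the remark following \eqref{e:Qstar}, cf.\ \cite{ber12a}). The rest — the contraction estimate, the Neumann-series positivity of the $M_i$, and the stability of concavity, monotonicity and piecewise-linearity under finite componentwise minima — is routine.
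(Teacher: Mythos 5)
Your proof is correct and takes essentially the same route as the paper's: the inverse is read off from the fixed point equation \eqref{e:DCOE-Qkappa}, and concavity, monotonicity and piecewise linearity follow from the representation \eqref{e:Qstar} of $\Qstar$ as a componentwise minimum of linear, monotone maps. You simply supply the details the paper leaves implicit (the $\beta$-contraction argument for uniqueness/surjectivity, the non-negativity of the Neumann series, and the reduction to deterministic policies), all of which are sound.
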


\begin{proof}  
The fixed point equation \eqref{e:DCOE-Qkappa}  defines the Q-function   with respect to the cost function $\varsigma$.   Concavity and monotonicity hold because $q=\Qstar(\varsigma)$ as defined in \eqref{e:Qstar} is the minimum of linear, monotone functions.  The existence of an inverse $q\mapsto \varsigma$ follows from \eqref{e:DCOE-Qkappa}.
\end{proof}

A Galerkin approach to approximating $Q^*$ is   formulated as follows:  Consider a linear parameterization $Q^\theta(x,u)=   \theta^\transpose \psi (x,u) $, with $\theta\in\Re^d$ and $\psi\colon\state\times\U\to\Re^d$,  and denote  $\uQ^\theta(x) = \min_u Q^\theta (x,u)$.    Obtain a 
$d$-dimensional stationary stochastic process $\bfelig$ that is adapted to $(\bfmX,\bfmU)$, and define
 $\theta^*$ to be a solution to
\begin{equation}
\Expect\bigl[ \bigl\{  c(X_n,U_n)   + \beta     \uQ^{\theta^*} (X_{n+1})  - Q^{\theta^*}(X_n,U_n)  \bigr\}   \elig_n(i)\bigr] =0, \quad 1\le i\le d
\label{e:eligQalg}
\end{equation}
where the expectation is in steady-state.
 
Similar to TD($\lambda$)-learning,  a possible approach to estimate $\theta^*$ is the following: 
\paragraph{Q($\lambda$) algorithm:}
For initialization $\theta_0\,,\elig_0\in\Re^d  $, the sequence of estimates are defined recursively:  
\begin{equation}
\begin{aligned}
\theta_{n+1} & = \theta_n + \alpha_{n+1} \elig_n  \td_{n+1} 
\\
\td_{n+1} & =   c(X_n,U_n)   + \beta     \uQ^{\theta_n} (X_{n+1})  - Q^{\theta_n}(X_n,U_n)  
\\
\elig_{n+1} & = \lambda \beta \elig_n +\psi(X_{n+1}, U_{n+1})\, .
\label{e:Qlambda}
\end{aligned}
\end{equation}
The success of this approach has been demonstrated in a few restricted settings, such as optimal stopping problems \cite{yuber13},   deterministic models \cite{mehmey09a},   and variations of Watkins algorithm that are discussed next.

\subsection{Watkins algorithm}

The basic Q-learning algorithm of \cite{watday92a,wat89} is a particular instance of the Galerkin approach with $\lambda=0$ in  \eqref{e:Qlambda}.   The  basis functions are taken to be indicator functions:  
\begin{equation}
\psi_k(x,u) = \ind\{x=x^k, u=u^k\}\,,\quad 1\le k\le d\,,  
\label{e:WatkinsBasis}
\end{equation}
where $\{(x^k,u^k) : 1\le k\le d\}$ is an enumeration of all state-input pairs.   The goal of this approach is to compute the function $Q^*$ exactly.

The parameter $\theta$ is identified with the estimate   $Q^\theta$, and hence $\theta\in\Re^d$ with $d=\nd \cdot \ninp$.   The basic stochastic approximation algorithm to solve \eqref{e:eligQalg} coincides with Watkins algorithm: 
\begin{equation}
\theta_{n+1} = \theta_n + \alpha_{n+1} \bigl\{  c(X_n,U_n)   + \beta     \utheta_n (X_{n+1})  - \theta_n(X_n,U_n)  \bigr\}   \psi(X_n,U_n)
\label{e:Watkin}
\end{equation}
Only one entry of the approximation is updated at each time point, corresponding to the previous state-input pair $(X_n,U_n)$ observed.

   
\assume{Assumption Q1:}  The input is defined by a randomized stationary policy of the form 
\eqref{e:RandStatPolicy}.    The  joint process $(\bfmX,\bfmU)$ is an irreducible Markov chain. That is, it has a unique invariant pmf $\pie$ satisfying  $\pie(x,u)>0$ for each $x,u$.
\qed

\archive{In  \cite{bormey00a} Vivek  references one of his papers for stability of the ODE!  There must be a better reference.  We'll forget about this for now} 
 
\assume{Assumption Q2:}  The optimal policy $\phi^*$ is unique.    \qed

The  ODE for stability analysis takes on the following simple form:
\begin{equation}
\ddt q_t(x,u) = \pie(x,u)  \Bigl\{  c(x,u)   + \beta   P_u\uq_t\, (x)   - q_t(x,u)  \Bigr\}   
\label{e:QODEW}
\end{equation}
in which $\uq_t(x)=\min_u q_t(x,u)$ as defined below \eqref{e:DCOE-Q}.   This ODE is stable under Assumption~Q1, which then implies that the parameter estimates converge to $Q^*$ a.s. \cite{bormey00a}.

Under Assumption Q2 there exists $\epsy>0$ such that  
\[
\phi^*(x) = \argmin_{u\in\U} \theta(x,u),\qquad x\in\state,\ \theta\in\Re^d,\ \|\theta-\theta^*\|<\epsy\, .
\]
This justifies a linearization of the ODE \eqref{e:QODEW}, in which $\uq_t$ is replaced by $S_{\phi^*} q_t$.  

Although the algorithm is consistent, it should   be clear that the asymptotic covariance of this algorithm is typically infinite. 

\begin{theorem}
\label{t:Qinfinite}
Suppose that Assumptions Q1 and Q2 hold. Then, the sequence of parameters $\{ \theta_n \}$ obtained using the Q-learning algorithm \eqref{e:Watkin} converges to $Q^*$ a.s..
Suppose moreover that the conditional variance of $h^* (X_t)$ is positive:
\begin{equation}
\sum_{x,x',u} \pie(x,u) P_u(x,x') [ h^*(x') -  P_u h^*\, (x) ]^2 >0
\label{e:hvar}
\end{equation} 
and $(1-\beta) \max_{x,u} \pie(x,u)  \le \half$.  Then, in the case $\alpha_n \equiv 1/n$,  
\[
\lim_{n\to\infty}  n \Expect[ \| \theta_{n} - \theta^*\|^2 ] = \infty\, .
\] 
\qed
\end{theorem}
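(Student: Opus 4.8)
For the convergence assertion I would just run the ODE method: under Assumption~Q1 the ODE \eqref{e:QODEW} is globally asymptotically stable with equilibrium $Q^*$, the recursion \eqref{e:Watkin} is a genuine stochastic‑approximation algorithm with step‑sizes satisfying \eqref{e:stepsizeConditions} and martingale‑difference noise, and its iterates are a.s.\ bounded, so $\theta_n\to Q^*=\theta^*$ a.s.\ by \cite{bormey00a}.

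For the variance blow‑up the plan is to linearize around $\theta^*$ and apply \Theorem{t:aCov}(i). By Assumption~Q2 there is $\epsy>0$ with $\argmin_u\theta(x,u)=\phi^*(x)$ for all $x$ whenever $\|\theta-\theta^*\|<\epsy$; since $\theta_n\to\theta^*$ a.s., for all $n$ beyond an a.s.\ finite entrance time the recursion \eqref{e:Watkin} is exactly the linear recursion \eqref{e:GAINlinearSA} with $G=I$, $A_{n+1}=\psi(X_n,U_n)[\beta\psi(X_{n+1},\phi^*(X_{n+1}))-\psi(X_n,U_n)]^\transpose$ and $b_{n+1}=-\psi(X_n,U_n)c(X_n,U_n)$. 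Using \Lemma{t:PSphi} and the indicator basis \eqref{e:WatkinsBasis}, the steady‑state mean is $A=D_{\pie}(\beta PS_{\phi^*}-I)$, with $D_{\pie}$ the diagonal matrix of the stationary probabilities on $\state\times\U$; since $\theta^*=Q^*$ solves \eqref{e:DCOE-Qb} we have $A\theta^*=b$, so the relevant (martingale‑difference) noise is $\Delta^{(0)}_{n+1}=\beta\psi(X_n,U_n)[h^*(X_{n+1})-P_{U_n}h^*(X_n)]$, which is bounded and hence satisfies \eqref{e:CLT-Delta} with $\Sigma_\Delta=\beta^2D_{\pie}\diag\bigl(\sigma^2(x,u)\bigr)$, where $\sigma^2(x,u)=\textstyle\sum_{x'}P_u(x,x')[h^*(x')-P_uh^*(x)]^2$. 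A routine localization argument — couple $\theta_n$ with the linear recursion it obeys after the entrance time, and bound the escape event via Cauchy--Schwarz together with second/fourth‑moment estimates for linear stochastic approximation — reduces the claim to showing $n\Expect[\|\tiltheta_n\|^2]\to\infty$ for this linear recursion. Note that $A$ is Hurwitz, since $-A=D_{\pie}(I-\beta PS_{\phi^*})$ is a nonsingular M‑matrix, so \Theorem{t:aCov} applies.

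By \Theorem{t:aCov}(i) it then suffices to exhibit an eigenvector $v$ of $A$ with $\Real(\lambda)\ge-1/2$ and $v^\dagger\Sigma_\Delta v>0$. Because $\Sigma_\Delta\succeq0$ is diagonal with positive entries exactly on $Z\eqdef\{(x,u):\sigma^2(x,u)>0\}$ (as $\pie>0$) and $Z\neq\emptyset$ by \eqref{e:hvar}, this is the same as finding a ``slow'' eigenvector that is nonzero on $Z$. I would argue by contradiction: suppose every eigenvector of $A$ with $\Real(\lambda)\ge-1/2$ lies in $\clN\eqdef\{v:v_{(x,u)}=0\ \forall(x,u)\in Z\}$, of dimension $d-|Z|$. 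Assuming first that $A$ is diagonalizable, the span of the slow eigenvectors has dimension $s\eqdef\#\{\text{slow eigenvalues}\}$ and is contained in $\clN$, so $f\eqdef d-s\ge|Z|\ge1$. Since $-A$ is an M‑matrix every eigenvalue of $A$ has negative real part, and each of the $f$ fast ones has real part $<-\tfrac12$, so $\tr(A)<-f/2$; but $\tr(A)=-1+\beta\sum_x\pie(x,\phi^*(x))P_{\phi^*(x)}(x,x)\ge-1$, which forces $f=1$ and hence $|Z|=1$. Then the span of the slow eigenvectors equals $\clN$, so $\clN$ is $A$‑invariant; writing $(x_0,u_0)$ for the unique element of $Z$, $A$‑invariance of $\operatorname{span}\{e_{(x,u)}:(x,u)\neq(x_0,u_0)\}$ forces $P_{u_0}(x_0,x')\,\ind\{u'=\phi^*(x')\}=0$ for every $(x',u')\neq(x_0,u_0)$. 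If $u_0\neq\phi^*(x_0)$ this makes $P_{u_0}(x_0,\cdot)$ vanish identically, impossible; if $u_0=\phi^*(x_0)$ it forces $P_{u_0}(x_0,x_0)=1$, whence $\sigma^2(x_0,u_0)=0$, contradicting $(x_0,u_0)\in Z$. So the pair $(\lambda,v)$ exists, and \Theorem{t:aCov}(i) yields $\lim_n n\,v^\dagger\Expect[\tiltheta_n\tiltheta_n^\transpose]v=\infty$, hence $\lim_n n\Expect[\|\tiltheta_n\|^2]=\infty$.

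I expect the spectral step to be the crux; it is powered by the block‑triangular structure of $A$ (the columns of $PS_{\phi^*}$ indexed by non‑optimal pairs vanish, so its diagonal blocks are $0$ and $P_{\phi^*}$, and for each non‑optimal pair $e_{(x,u)}$ is an eigenvector with eigenvalue $-\pie(x,u)$). Two points need care in a full proof: (a) the non‑diagonalizable case, which I would dispatch by a small perturbation of the data, or by noting that a Jordan block attached to a noise‑excited slow mode only inflates the asymptotic variance; and (b) the localization reduction, which is standard but needs the uniform‑in‑$n$ moment bounds mentioned above. (The hypothesis $(1-\beta)\max_{x,u}\pie(x,u)\le\tfrac12$ enters a more hands‑on route, where one builds the eigenvector either from a rarely visited non‑optimal pair $(x,u)$ via $e_{(x,u)}$, or from the minimal eigenvalue $-\lambda_0$ of the ``optimal block'' $D_{\pie}(\beta P_{\phi^*}-I)$, using $\lambda_0\le(1-\beta)\max_{x,u}\pie(x,u)$ obtained by testing the left Perron vector against $\One$; the argument above shows it is not strictly necessary.)
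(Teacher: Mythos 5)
Your route to the required eigenvector is genuinely different from the paper's, and it is worth recording what each buys. The paper does not argue by contradiction or count dimensions: it writes $-A_\infty^{-1}=(1-\beta)^{-1}T\,\diagpie^{-1}$ with $T\eqdef(1-\beta)[I-\beta PS_{\phi^*}]^{-1}$ a transition matrix, applies Perron--Frobenius theory to this nonnegative irreducible matrix to produce a \emph{strictly positive} eigenvector $v$ whose eigenvalue $\lambda_{\rm PF}$ equals the spectral radius, bounds $\lambda_{\rm PF}\ge \frac{1}{(1-\beta)\max_{x,u}\pie(x,u)}\ge 2$, and concludes that $v$ is an eigenvector of $A_\infty$ with real eigenvalue $\lambda=-1/\lambda_{\rm PF}\ge-\half$; strict positivity of $v$ together with \eqref{e:hvar} then gives $v^\dagger\Sigma_\Delta v>0$ via \eqref{e:SigmaDeltaQ}. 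This is precisely where the hypothesis $(1-\beta)\max_{x,u}\pie(x,u)\le\half$ enters, and the construction needs no spectral assumption on $A_\infty$. Your linearization, your identification of $A$ and of the diagonal $\Sigma_\Delta$, and your reduction to \Theorem{t:aCov}~(i) all agree with the paper; your trace/dimension-count contradiction is an attractive alternative for the spectral step which, if completed, would show that the hypothesis on $(1-\beta)\max\pie$ can be dropped.

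The genuine gap is the non-diagonalizable case, and neither proposed repair is convincing. \Theorem{t:aCov}~(i) (equivalently \Prop{t:CovInfinite}) requires an honest eigenvector --- the scalar recursion for $n\Expect[|v^\transpose\tiltheta_n|^2]$ closes only when $GAv=\lambda v$ --- so the assertion that ``a Jordan block attached to a noise-excited slow mode only inflates the asymptotic variance'' is itself a claim needing a separate proof. Perturbing the data is worse: the theorem concerns a fixed $(P,c,\pie)$, and a perturbation moves $A$, $\Sigma_\Delta$, the set $Z$ and $\theta^*$ simultaneously, so a conclusion for nearby models does not transfer back. The failure mode is concrete: if a slow eigenvalue has deficient geometric multiplicity, the span of the slow eigenvectors can have dimension strictly below the number $s$ of slow eigenvalues, the containment of that span in $\clN$ no longer yields $s\le d-|Z|$, and the inequality $f\ge|Z|$ --- the engine of your contradiction --- is lost. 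Consequently your parenthetical claim that the hypothesis $(1-\beta)\max_{x,u}\pie(x,u)\le\half$ is ``not strictly necessary'' is premature. (Your localization from the nonlinear recursion to its linearization at $\theta^*$ is sketched at essentially the same level of detail as in the paper, so I do not count it against you.)
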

The assumption $(1-\beta) \max_{x,u} \pie(x,u)  \le\half$ is satisfied whenever $\beta\ge \half$.

The proof of convergence can be found in \cite{watday92a,wat89}.
The proof of infinite asymptotic covariance is given in \Section{s:InfAsymVar}	 	
of the Appendix. An eigenvector for $A$ is constructed with strictly positive entries, and with real eigenvalue satisfying $\lambda \ge -1/2$. 
Interpreted as a function $v\colon\state\times\U\to\Co$,  this eigenvector satisfies
\begin{equation}
v^\dagger \Sigma_\Delta v = \beta^2 \sum_{x,x',u} \pie(x,u) |v(x,u)|^2 P_u(x,x') [  h^*(x') -  P_u h^*\, (x) ]^2 .
\label{e:SigmaDeltaQ}
\end{equation}
Assumption \eqref{e:hvar} ensures that the right hand side  is strictly positive, as required in  \Theorem{t:aCov}~(i).    


The recursion  \eqref{e:Watkin} for the Q-learning algorithm can be written in the form \eqref{e:linearSA} in which
\[ 
A_{n+1}  =        \psi(X_n,U_n) \{  \beta    \psi (X_{n+1}, \phi_{n}(X_{n+1} ) ) - \psi(X_n,U_n)  \bigr\}^\transpose  
\,,
\qquad
b_{n+1} = c(X_n,U_n)  \psi(X_n,U_n)\,.
\]
This motivates the introduction of stochastic Newton-Raphson algorithms that are considered next.

\subsection{SNR and Zap Q-Learning}
  	

For a  sequence of $d \times d$ matrices $\bfmG=\{G_n\}$ and $\lambda\in [0,1]$, the matrix-gain Q($\lambda$) algorithm is described as follows:

\paragraph{$\bfmG$-Q($\lambda$) algorithm:}
For initialization $\theta_0\,,\elig_0\in\Re^d  $, the sequence of estimates are defined recursively:  
\begin{equation}
\begin{aligned}
\theta_{n+1} & = \theta_n + \alpha_{n+1} G_{n+1} \elig_n  \td_{n+1} 
\\
\td_{n+1} & =   c(X_n,U_n)   + \beta     \uQ^{\theta_n} (X_{n+1})  - Q^{\theta_n}(X_n,U_n)  
\\
\elig_{n+1} & = \lambda \beta \elig_n +\psi(X_{n+1}, U_{n+1})\,
\label{e:HQlambda}
\end{aligned}
\end{equation}

The special case based on stochastic Newton-Raphson \eqref{e:SNR2linearSA}  is called the \textbf{\textit{Zap-Q($\lambda$) algorithm:}}
\begin{algorithm}[H]
\caption{\small  Zap-Q($\lambda$) algorithm}
\label{ZapQ}
\begin{algorithmic}[1]
\INPUT Initial $\theta_0 \in \Re^d$, $\elig_{0} = \psi(X_0,U_0)$, $\haA_0 \in \Re^{d \times d}$, $n = 0$, $T \in \bfmZ^+$
\hfill  \Comment Initialization
\Repeat
\State $\phi_n^{X_{n+1}} \eqdef \argmin_u Q^{\theta_n}(X_{n+1},u)$;
\State $\td_{n+1} \eqdef   c(X_n,U_n)   + \beta     Q^{\theta_n} (X_{n+1},\phi_n^{X_{n+1}})  - Q^{\theta_n}(X_n,U_n)$;\hfill \Comment Temporal difference term
\State $A_{n+1}   \eqdef  \elig_n \bigl[  \beta    \psi (X_{n+1}, \phi_{n}^{X_{n+1} })  - \psi(X_n,U_n)  \bigr]^\transpose$;
\State $\haA_{n+1} = \haA_n + \gamma_{n+1}  \bigl[  A_{n+1} - \haA_n  \bigr]$;\hfill \Comment Matrix gain update rule
\State $ \theta_{n+1}  = \theta_n - \alpha_{n+1} \haA_{n+1}^{-1} \elig_n  \td_{n+1} $;\hfill \Comment Zap-Q update rule
\State $\elig_{n+1} \eqdef \lambda \beta \elig_n +\psi(X_{n+1}, U_{n+1})$;\hfill \Comment Eligibility vector update rule
\State $n = n+1$
\Until{$n \geq T$}
\end{algorithmic}
\end{algorithm}  

\archive{\todo Resolve $n$,$t$, and $k$ issue\\ see my prior responses}

It is assumed that a projection is employed to ensure that $\{\haA_n^{-1}\}$ is a bounded sequence --- this is most easily achieved using the Matrix Inversion Lemma.    

The analysis that follows is specialized to $\lambda=0$ and the basis \eqref{e:WatkinsBasis} that is used in Watkins' algorithm.   The resulting \textit{Zap-Q algorithm} is defined as follows, 
after identifying $Q^\theta$ and $\theta$:
\begin{eqnarray} 
 &&
\theta_{n+1} 
= \theta_n + \alpha_{n+1}\haG_{n+1}^* \bigl\{  c(X_n,U_n)   + \beta     \utheta_n (X_{n+1})  - \theta_n(X_n,U_n)  \bigr\}   \psi(X_n,U_n)
\label{e:QSNR2def}
\\[.2cm]  
&&
\begin{aligned}
\haA_{n+1} &= \haA_n + \gamma_{n+1}  \bigl[  A_{n+1} - \haA_n  \bigr]
\\
A_{n+1}   &=  \psi(X_n,U_n) \bigl[  \beta    \psi (X_{n+1}, \phi_{n}(X_{n+1} ))  - \psi(X_n,U_n)  \bigr]^\transpose   
\\
  \phi_n &= \phi^{\theta_n}    
\end{aligned}
\label{e:QSNR2Adef}
\end{eqnarray} 
where $\haG_n^* = - [ \haA_n]^{-1}$, and $[\varble]$ denotes a projection,   chosen so that $\{\haG_n^*\}$ is a bounded sequence. In \Theorem{t:ZAP} it is established that the projection is required only for a finite number of iterations:  $\{\haA_n^{-1} : n\ge n_\bullet\}$ is a bounded sequence, where $n_\bullet<\infty $ a.s..

An equivalent representation for the parameter recursion \eqref{e:QSNR2def} is 
\begin{equation} 
\theta_{n+1} 
= \theta_n +  \alpha_{n+1} \haG_n^* \bigl\{  \Psi_n c   +  A_{n+1} \theta_n \bigr\}   
\label{e:QSNR2def-b} 
\end{equation}
in which $c$ and $\theta_n$ are treated as $d$-dimensional vectors rather than functions on $\state\times\U$,  and  
\begin{equation}
\label{e:Psi}
\Psi_n  =  \psi(X_n,U_n)    \psi(X_n,U_n) ^{\transpose}.
\end{equation}

It would seem that the analysis is complicated by the fact  that the  sequence $\{A_n\}$   depends upon $\{\theta_n\}$ through the policy sequence $\{\phi_n\}$.  Part of the analysis is simplified by obtaining a recursion for the following $d$-dimensional sequence:   
\begin{equation}
\haC_n = -\diagpie^{-1} \haA_n \theta_n  \,, \quad n\ge 1\,,
\label{e:Cn}
\end{equation}
where   $\diagpie$ is the $d\times d$ diagonal matrix with entries $\diagpie(k,k) \eqdef \pie(x^k,u^k)$.   
This admits a very simple recursion in the special case $\bfgamma\equiv \bfalpha$.    In the other case considered, wherein the step-size sequence $\bfgamma$ satisfies \eqref{e:gammaalpha},  the recursion for $\bfmhC$ is more complex, but the ODE analysis is simplified.

\subsection{Main results}
\label{s:main}

Conditions for convergence of the Zap-Q algorithm (\ref{e:QSNR2def},\ref{e:QSNR2Adef}) are summarized in \Theorem{t:ZAP}.  The following assumption is used to address the discontinuity in the recursion for $\{ \haA_n\}$  resulting from the dependence of
 $A_{n+1}$  on $\phi_n$. 

\assume{Assumption Q3:} The sequence of policies $\{\phi_n\}$ satisfies:
\begin{equation}
\sum_{n=1}^\infty \gamma_n \ind\{\phi_{n+1}\neq \phi_n\}  <\infty
\,,\quad a.s..
\label{e:phiChangeNull}
\end{equation} 
\qed
 
\begin{theorem}
\label{t:ZAP}
Suppose that Assumptions Q1--Q3 hold,  with the gain sequences $\bfalpha$ and $\bfgamma$ satisfying
\begin{equation}
\alpha_n =\frac{1}{n}\,,\quad  \gamma_n =\frac{1}{n^\rho}\,,\qquad n\ge 1\, ,
\label{e:GAINS}
\end{equation} 	
for some fixed $\rho\in (\half,1)$. Then,
\begin{romannum}
\item 
The parameter sequence  $\{\theta_n\}$ obtained using the Zap-Q algorithm (\ref{e:QSNR2def},\ref{e:QSNR2Adef}) converges to $Q^*$ a.s.. 
		
\item The asymptotic covariance \eqref{e:SAsigma} is minimized over all
$\bfmG$-Q($0$) matrix gain versions of Watkins' Q-learning algorithm.

\item
An ODE approximation holds for the sequence $\{ \theta_n ,  \haC_n\}$, by continuous functions $(\bfmq,\bfqcost)$ satisfying
\begin{equation}
q_t = \Qstar(\qcost_t) \, , \quad  \ddt \qcost_t = - \qcost_t + c
\label{e:ZapODE}
\end{equation}
This ODE approximation is exponentially asymptotically stable, with $\displaystyle	\lim_{t\to\infty} q_t = Q^*
$.  	\qed
\end{romannum}
\end{theorem}
See \Section{s:ODE} and standard references such as \cite{bor08a} for the precise meaning of the ODE approximation \eqref{e:ZapODE}.

\begin{proof}[Proof of  \Theorem{t:ZAP}]
Boundedness of the sequences $\{\theta_n, \haA_n : n\ge 0\}$ and $\{\haA_n^{-1} : n\ge n_\bullet\}$ is established  in Lemmas~\ref{t:haA_Bdd_And_ODE_sol} and \ref{t:haq_Bdd_And_ODE_sol},  where $n_\bullet<\infty$ a.s..   The ODE approximation is established in \Proposition{t:ODE_Gamma}.  These two results combined with standard arguments establishes (i) \cite{bor08a}.   

Result (ii) follows from convergence of the algorithm,  just as in the case of TD-learning.  Uniqueness of the optimal policy is needed so that the recursion for $\{\theta_n\}$ admits a linearization around $Q^*$.   
\end{proof}

In the case $\bfgamma \equiv \bfalpha$, the three consequences hold under a stronger assumption than Q3: 
\begin{proposition}
\label{t:ZAP1}
Suppose that Assumptions Q1--Q2 hold, $\bfgamma \equiv \bfalpha$, and the sequence of policies $\{\phi_n\}$ is convergent. Then, 
the parameter sequence  $\{\theta_n\}$ obtained using the Zap-Q algorithm (\ref{e:QSNR2def},\ref{e:QSNR2Adef}) converges to $Q^*$ a.s..
\end{proposition}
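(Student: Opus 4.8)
The plan is to sidestep the ODE analysis behind \Theorem{t:ZAP} and instead exploit the fact that, with a common step size $\alpha_n=\gamma_n=1/n$, the Zap-Q recursion collapses to running averages. Since $\bfgamma\equiv\bfalpha$, the matrix recursion \eqref{e:QSNR2Adef} gives $(n+1)\haA_{n+1}=n\haA_n+A_{n+1}$, hence $n\haA_n=\sum_{i=1}^nA_i+\clE_1$ with $\clE_1=\haA_1-A_1$ fixed by the initialization; this holds without reference to the projection and despite the dependence of each $A_{n+1}$ on $\theta_n$ through $\phi_n=\phi^{\theta_n}$. Next I would prove the controlled analogue of \Proposition{t:SAMC}: writing $B_{n+1}:=\Psi_n c=c(X_n,U_n)\psi(X_n,U_n)$ (cf.\ \eqref{e:Psi} and the Watkins basis), the $\theta$-update \eqref{e:QSNR2def} is $\theta_{n+1}=\theta_n-\tfrac{1}{n+1}\haA_{n+1}^{-1}[A_{n+1}\theta_n+B_{n+1}]$ whenever the projection is inactive; multiplying through by $(n+1)\haA_{n+1}=n\haA_n+A_{n+1}$ cancels the $A_{n+1}\theta_n$ term and leaves $(n+1)\haA_{n+1}\theta_{n+1}=n\haA_n\theta_n-B_{n+1}$. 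Iterating from a finite time $n_\circ$ past the last activation of the projection gives $\theta_n=\haA_n^{-1}\hab_n$ with $\hab_n=\tfrac1n\big(n_\circ\haA_{n_\circ}\theta_{n_\circ}-\sum_{i=n_\circ+1}^nB_i\big)$. The cancellation is the crux: it makes this identity robust to the $\theta$-dependence of $\{A_n\}$, which is exactly why the $\bfgamma\equiv\bfalpha$ case yields to a law-of-large-numbers argument rather than an ODE one.

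Now I would take limits. Finiteness of the policy class means the assumed convergence of $\{\phi_n\}$ is eventual constancy: $\phi_n=\phi_\infty$ for all $n\ge N$, with $N<\infty$ a.s.\ and $\phi_\infty$ a stationary policy (both sample-path dependent). The behaviour process $(\bfmX,\bfmU)$ is the autonomous, irreducible chain of Assumption~Q1 and does not depend on the estimates, so the ergodic theorem applies to the bounded functions $A_i=\psi(X_{i-1},U_{i-1})[\beta\psi(X_i,\phi_\infty(X_i))-\psi(X_{i-1},U_{i-1})]^\transpose$ (for $i>N$) and $B_i=c(X_{i-1},U_{i-1})\psi(X_{i-1},U_{i-1})$. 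Splitting the running sums at $N$ gives $\haA_n\to\bar A$ and $\hab_n\to-\bar B$ a.s., and a short computation using \Lemma{t:PSphi}(ii) and the Watkins basis identifies $\bar A=\beta\,\Expect_\pie[\psi(X_0,U_0)\psi(X_1,\phi_\infty(X_1))^\transpose]-\Expect_\pie[\psi(X_0,U_0)\psi(X_0,U_0)^\transpose]=-\diagpie[I-\beta PS_{\phi_\infty}]$ and $\bar B=\diagpie\, c$ (with $c$ read as a vector). Since $PS_{\phi_\infty}$ is stochastic and $\beta<1$, $\bar A$ is invertible; hence $\haA_n^{-1}$ is eventually bounded, the projection fires only finitely often (so $n_\circ<\infty$ a.s.), and $\theta_n\to q_\infty:=-\bar A^{-1}\bar B=[I-\beta PS_{\phi_\infty}]^{-1}c$, i.e.\ $q_\infty=c+\beta PS_{\phi_\infty}q_\infty$, the fixed point equation of the Q-function of policy $\phi_\infty$.

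It remains to identify $q_\infty$ with $Q^*$. Since $\phi_n=\phi^{\theta_n}=\phi_\infty$ for $n\ge N$, the definition \eqref{e:phi_q_def} gives $\theta_n(x,\phi_\infty(x))\le\theta_n(x,u)$ for all $x,u$ and all large $n$; passing to the limit, $q_\infty(x,\phi_\infty(x))\le q_\infty(x,u)$, i.e.\ $S_{\phi_\infty}q_\infty=\uq_\infty$ with $\uq_\infty(x)=\min_u q_\infty(x,u)$. Substituting into $q_\infty=c+\beta PS_{\phi_\infty}q_\infty$ yields $q_\infty=c+\beta P\uq_\infty$, which is the fixed point equation \eqref{e:DCOE-Q}; since $\Qstar$ is a bijection (\Lemma{t:clQstar}) its unique solution is $Q^*$, so $\theta_n\to Q^*$ a.s..

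The only genuinely delicate steps are the algebraic cancellation in the first paragraph — it is what separates this single-time-scale proof from the ODE argument behind \Theorem{t:ZAP} and also explains why the extra hypothesis that $\{\phi_n\}$ converges is needed here — and the bookkeeping around the projection, which is nevertheless harmless since $\{\haA_n\}$ is a true running average with an invertible limit. The argument uses Assumption~Q1 and the convergence of $\{\phi_n\}$ as its essential ingredients, with Assumption~Q2 (uniqueness of the optimal policy) part of the standing hypotheses.
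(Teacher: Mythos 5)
Your proof is correct and takes essentially the same route as the paper: the cancellation $(n+1)\haA_{n+1}\theta_{n+1}=n\haA_n\theta_n-B_{n+1}$ is precisely the content of \Prop{t:CnRecursion}, which the paper states as a recursion for $\haC_n=-\diagpie^{-1}\haA_n\theta_n$ and which, with $\bfgamma\equiv\bfalpha$, reduces to the Monte-Carlo average of $\diagpie^{-1}\Psi_n c$ that you obtain in closed form. The remaining steps --- the Law of Large Numbers under the eventually constant policy, invertibility of the limit $-\diagpie[I-\beta PS_{\phi_\infty}]$ forcing the projection to be inactive for all large $n$, and identification of the limit through the fixed-point equation and the bijectivity of $\Qstar$ (\Lemma{t:clQstar}) --- mirror the paper's inverse dynamic programming argument.
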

\qed

The convergence assumption in \Proposition{t:ZAP1} is far stronger than Q3:
Recall that the policies $\{\phi_n\}$ evolve in a finite set $\{ \phi^{(i)} : 1\le i\le \nphi\}$.   \textit{Convergence}   means that $\phi_n =  \phi^{(\tilde k)} $ for some integer-valued random variable $\tilde k$, and all $n$ sufficiently large. 

The proof of \Proposition{t:ZAP1} is based on a simple  inverse dynamic programming argument:   it is easily shown  that $\haC_n$ is convergent to $c$  in the case $\bfgamma\equiv\bfalpha$, and it is also easily established  that $\lim_{n\to\infty} \theta_n - \Qstar(\haC_n )  =0$ in this case.   The proof of \Theorem{t:ZAP} is more delicate, and is based on extensions of ODE arguments in   \cite{bormey00a}.  

The simplicity of the proof of \Proposition{t:ZAP1} suggests that this case would be preferred.  However, when $\gamma_n\equiv\alpha_n=1/n$  we do not know how to relax the assumption that  $\{\phi_n\}$ is convergent.   Analysis is complicated by the fact that $\haA_n$ is obtained as a uniform average of $\{A_n\}$. 

The ODE analysis in the proof of \Theorem{t:ZAP} suggests that the dynamics of the two time-scale algorithm closely matches the Newton-Raphson ideal.  Moreover, the two time-scale algorithm has the best performance in all of the numerical experiments surveyed in \Section{s:num}.

\subsection{ODE and Policy Iteration}

Recall the definition of $\partialQstar_{\upmf}$ in \eqref{e:partiakQmu}. 
The ODE approximation \eqref{e:ZapODE} can be expressed
\begin{equation} 
\ddt q_t   = -q_t + \partialQstar_{\upmf_t} c  
\label{e:QODE1}
\end{equation} 
where $\upmf_t$ is \textit{any} pmf satisfying $ \partialQstar_{\upmf_t} \qcost_t = q_t$, and the derivative exists for a.e.\ $t$ (see \Lemma{t:ChainRule} for full justification).
This has an interesting geometric interpretation.   Without loss of generality, assume that the cost function is non-negative, so that $\bfmq$ evolves in the positive orthant   $\Re^d_+$ whenever its initial condition lies in this domain.

\begin{figure}[h]
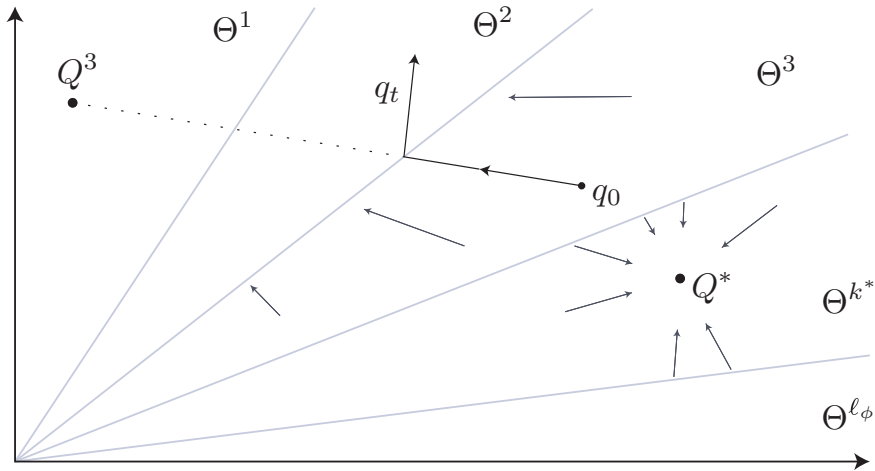
 
	\Ebox{.7}{Q-ODE.pdf} 
	\vspace{-.2cm}
	\caption{\small ODE for SNR2 Q-Learning.    The light arrows show typical vectors in the vector field that defines the ODE \eqref{e:QODE1}.  The solution starting at $q_0\in\Theta^3$ initially moves in a straight line towards $  Q^3$.}  
\label{f:Q-ODE} 
\end{figure} 

A typical solution to the ODE is shown in \Fig{f:Q-ODE}: the trajectory is piecewise linear, with changes in direction corresponding to changes in the policy $\phi^{q_t}$.   Each  set $\Theta^k$ shown in the figure corresponds to a deterministic policy:
\[ 
\Theta^k = \{ q\in\Re_+^d : \phi^q = \phi^{(k)}  \} 
\]

\begin{lemma}
\label{t:Thetak}
	For each $k$ the set $ \Theta^k$  is a   convex polyhedron, and also a positive cone.     
	When $q_t\in\text{interior}(\Theta^k)$ then 
\[
	\partialQstar_{\upmf_t} c  = Q^k \eqdef c+\beta P h_{\phi^{(k)}} \, .
\]
\qed
\end{lemma}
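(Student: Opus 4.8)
The plan is to establish the two assertions in turn: the polyhedral/cone structure of $\Theta^k$, and then the identification $\partialQstar_{\upmf_t}c=Q^k$ on its interior. For the first, I would unwind the tie-breaking rule \eqref{e:phi_q_def}: $q\in\Theta^k$ means (a) $\phi^{(k)}(x)\in\argmin_u q(x,u)$ for every $x$, i.e.\ $q(x,\phi^{(k)}(x))\le q(x,u)$ for all $x,u$, and (b) no strictly smaller index $j<k$ has this property. Condition (a) together with $q\ge 0$ is a finite list of homogeneous linear inequalities, so it cuts out a convex polyhedral cone $C_k$; positivity (closure under scaling by $\Re_+$) is immediate from homogeneity. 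On $C_k$, the failure of (b) for a fixed $j<k$ reduces to $q(x,\phi^{(j)}(x))\le q(x,\phi^{(k)}(x))$ for all $x$, which (the right side already equaling $\uq(x)$) forces a tie between $\phi^{(j)}$ and $\phi^{(k)}$ at every state and so confines $q$ to a proper subspace. Hence $\Theta^k$ differs from $C_k$ only within a finite union of hyperplanes, so $\overline{\Theta^k}=C_k$ and $\operatorname{int}\Theta^k=\operatorname{int}C_k$; this is the sense in which $\Theta^k$ is the claimed convex polyhedral cone, the lower-dimensional faces being distributed among the $\Theta^k$ by the tie-break and irrelevant to the ODE analysis.

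For the identity, fix $q_t\in\operatorname{int}\Theta^k$, so that $\argmin_u q_t(x,u)=\{\phi^{(k)}(x)\}$ is a singleton for every $x$, whence $S_{\phi^{(k)}}q_t=\uq_t$. Along the ODE one has $q_t=\Qstar(\qcost_t)$ (see \eqref{e:ZapODE}), which by \eqref{e:DCOE-Qkappa} means $q_t=\qcost_t+\beta P\uq_t$; substituting $\uq_t=S_{\phi^{(k)}}q_t$ gives $\qcost_t=(I-\beta PS_{\phi^{(k)}})q_t$. Since $PS_{\phi^{(k)}}$ is the (stochastic) transition matrix of $(\bfmX,\bfmU)$ under $\phi^{(k)}$ by \Lemma{t:PSphi} and $\beta<1$, the matrix $I-\beta PS_{\phi^{(k)}}$ is invertible, so with $\delta_k$ the degenerate pmf at index $k$ we get $\partialQstar_{\delta_k}\qcost_t=(I-\beta PS_{\phi^{(k)}})^{-1}\qcost_t=q_t$, i.e.\ $\delta_k$ is admissible in \eqref{e:QODE1}. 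It is moreover the only admissible pmf there: writing $M_\upmf=\sum_i\upmf(i)S_{\phi^{(i)}}$, the relation $\partialQstar_\upmf\qcost_t=q_t$ gives $\qcost_t=(I-\beta PM_\upmf)q_t$, and comparing with $\qcost_t=q_t-\beta P\uq_t$ yields $P(M_\upmf q_t-\uq_t)=0$ with $M_\upmf q_t-\uq_t\ge 0$; under Assumption~Q1 every state is reachable in one step, which forces $M_\upmf q_t=\uq_t$, and since the argmin is a singleton at every state this forces $\upmf=\delta_k$. Therefore $\partialQstar_{\upmf_t}c=\partialQstar_{\delta_k}c=(I-\beta PS_{\phi^{(k)}})^{-1}c$.

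The remaining step is the operator computation. Using $P_{\phi^{(k)}}=S_{\phi^{(k)}}P$ and the telescoping identity $P\,P_{\phi^{(k)}}^{\,t}S_{\phi^{(k)}}=P(S_{\phi^{(k)}}P)^{t}S_{\phi^{(k)}}=(PS_{\phi^{(k)}})^{t+1}$, the definition $h_{\phi^{(k)}}=\sum_{t\ge 0}\beta^{t}P_{\phi^{(k)}}^{\,t}S_{\phi^{(k)}}c$ gives $c+\beta Ph_{\phi^{(k)}}=c+\sum_{t\ge 0}\beta^{t+1}(PS_{\phi^{(k)}})^{t+1}c=\sum_{s\ge 0}\beta^{s}(PS_{\phi^{(k)}})^{s}c=(I-\beta PS_{\phi^{(k)}})^{-1}c$, which is exactly $\partialQstar_{\upmf_t}c$. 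Hence $\partialQstar_{\upmf_t}c=Q^k$, as claimed.

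I expect the main obstacle to be bookkeeping rather than computation: one has to be candid that ``$\Theta^k$ is a convex polyhedron'' is literally true only modulo lower-dimensional boundary faces, and one has to pin down the pmf $\upmf_t$ — the ``any pmf'' of \eqref{e:QODE1} — to be exactly $\delta_k$ on $\operatorname{int}\Theta^k$ (via the nonnegativity of $M_\upmf q_t-\uq_t$ together with one-step reachability from Assumption~Q1), so that $\partialQstar_{\upmf_t}c$ is unambiguous. The reduction $S_{\phi^{(k)}}q_t=\uq_t$ on the interior, the invertibility of $I-\beta PS_{\phi^{(k)}}$, and the geometric-series manipulation are all routine.
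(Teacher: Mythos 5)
Your proposal is correct, and for the central identity it follows exactly the paper's route: the power-series expansion $[I-\beta PS_{\phi^{(k)}}]^{-1}c=c+\sum_{n\ge1}\beta^n[PS_{\phi^{(k)}}]^nc$ combined with $[PS_{\phi}]^n=PP_{\phi}^{n-1}S_{\phi}$ and \eqref{e:hphi}. You also supply two things the paper's proof leaves tacit — the verification that $\Theta^k$ is (up to the lower-dimensional faces absorbed by the tie-break rule \eqref{e:phi_q_def}) a convex polyhedral cone, and the argument that on $\operatorname{int}\Theta^k$ the only admissible pmf in \eqref{e:QODE1} is the point mass $\delta_k$, so that $\partialQstar_{\upmf_t}c$ is well defined — both of which are sound and genuinely tighten the statement.
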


\begin{proof} 
	The power series expansion holds:
\[
	\partialQstar_\upmf c = 
	[I-\beta P S_\phi ]^{-1}c = c +\sum_{n=1}^\infty \beta^n [P S_{\phi}]^n c
\]
	For each $n\ge 1$ we have $ [P S_{\phi}]^n  =  P P_\phi^{n-1} S_{\phi}$, which together with \eqref{e:hphi}
	implies the desired result. 
\end{proof}

The function $ Q^k$  is the fixed-policy Q-function considered in the SARSA algorithm \cite{meysur11,rumnir94,sze10}.   While $q_t$ evolves in the interior of the set $\Theta^k$,  it moves in a straight line towards the function $Q^k$.   On reaching the boundary, it then moves in a straight line to the next Q-function.  This is something like a policy iteration recursion, since the policy $\phi^{q_t}$ is obtained as the argmin over $u$ of $q_t(\varble, u)$.

Of course, it is far easier to establish stability of the equivalent ODE   \eqref{e:ZapODE}.

\subsection{Overview of proofs}

This final subsection is dedicated to the proof of \Proposition{t:ZAP1},  and the main ideas in the proof of \Theorem{t:ZAP}. 
It is assumed throughout the remainder of this section that Assumptions Q1--Q3 hold.
Proofs of  technical lemmas are contained in  \Appendix{s:ZapQProof}.

We require the usual probabilistic foundations:  There is a probability space $(\Omega,\clF,\Prob)$ that supports all random variables under consideration.  The probability measure $\Prob$ may depend on an initialization of the Markov chain.    All stochastic processes under consideration are assumed adapted to a filtration denoted $\{\clF_n : n\ge 0\}$.   

We begin with the proof of the simpler  \Proposition{t:ZAP1}. 

\subsubsection{Inverse Dynamic Programming Analysis}
\label{s:IDP}

\Proposition{t:ZAP1} is   a quick consequence of  the following extension of \Proposition{t:SAMC}:
\begin{proposition}
\label{t:CnRecursion}
Suppose that Assumptions Q1--Q3 hold.  
Suppose moreover that each of the matrices $\{\haA_n : n\ge n_\bullet\}$ is invertible for some $n_\bullet \ge 1$ that is a.s.\ finite.
Then,  the following recursion  holds
for $  n\ge n_\bullet$:
\begin{equation}
\label{e:CnRecursion}
\begin{aligned}
\haC_{n+1}   = \haC_n 
&-
\gamma_{n+1} [ \diagpie^{-1} A_{n+1} \theta_n  + \haC_n    ]
\\[.2cm]
&    +  
\alpha_{n+1}  \diagpie^{-1}   [     \Psi_n c  +  A_{n+1}\theta_n]\,,
\end{aligned}
\end{equation}
where $\Psi_n$ is defined in \eqref{e:Psi}.
\qed
\end{proposition}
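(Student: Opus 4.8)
The proof will be a direct algebraic manipulation of the two update rules \eqref{e:QSNR2def}--\eqref{e:QSNR2Adef}, in the spirit of \Proposition{t:SAMC}. Throughout I would restrict to indices $n\ge n_\bullet$, where $\haA_{n+1}$ is invertible; enlarging $n_\bullet$ if necessary (using the boundedness lemmas, which guarantee the projection defining $\haG^*$ is eventually inactive) we may also assume that $\haG_{n+1}^*=-[\haA_{n+1}]^{-1}$ exactly, so that the parameter recursion reads
\[
\theta_{n+1}=\theta_n-\alpha_{n+1}\haA_{n+1}^{-1}\bigl[\Psi_n c+A_{n+1}\theta_n\bigr],
\]
which is the form \eqref{e:QSNR2def-b} (recall this identity uses the Watkins basis and $\phi_n=\phi^{\theta_n}$, so that $A_{n+1}\theta_n=\psi(X_n,U_n)[\beta\utheta_n(X_{n+1})-\theta_n(X_n,U_n)]$ and $\Psi_n c=\psi(X_n,U_n)c(X_n,U_n)$). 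Note also that $\diagpie$ is invertible under Assumption~Q1, since $\pie(x^k,u^k)>0$ for every $k$.

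First I would left-multiply the displayed $\theta$-recursion by $\haA_{n+1}$ and cancel $\haA_{n+1}\haA_{n+1}^{-1}=I$ (legitimate precisely because $n\ge n_\bullet$), obtaining
\[
\haA_{n+1}\theta_{n+1}=\haA_{n+1}\theta_n-\alpha_{n+1}\bigl[\Psi_n c+A_{n+1}\theta_n\bigr].
\]
Next I would substitute the matrix-gain update $\haA_{n+1}=\haA_n+\gamma_{n+1}(A_{n+1}-\haA_n)$ into the term $\haA_{n+1}\theta_n$ on the right-hand side, which gives
\[
\haA_{n+1}\theta_{n+1}=\haA_n\theta_n-\gamma_{n+1}\bigl[\haA_n\theta_n-A_{n+1}\theta_n\bigr]-\alpha_{n+1}\bigl[\Psi_n c+A_{n+1}\theta_n\bigr].
\]
Finally I would multiply through by $-\diagpie^{-1}$, recognize $\haC_m=-\diagpie^{-1}\haA_m\theta_m$ for $m\in\{n,n+1\}$ and use $\diagpie^{-1}\haA_n\theta_n=-\haC_n$ inside the $\gamma_{n+1}$-bracket, to arrive at
\[
\haC_{n+1}=\haC_n-\gamma_{n+1}\bigl[\diagpie^{-1}A_{n+1}\theta_n+\haC_n\bigr]+\alpha_{n+1}\diagpie^{-1}\bigl[\Psi_n c+A_{n+1}\theta_n\bigr],
\]
which is \eqref{e:CnRecursion}.

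There is no real obstacle here: the statement is an exact pathwise identity and the entire content is careful bookkeeping of signs and indices. The one point that genuinely requires attention is the cancellation $\haA_{n+1}\haA_{n+1}^{-1}=I$, which is exactly why the hypothesis that $\{\haA_n:n\ge n_\bullet\}$ is invertible (and that the projection is eventually inactive) is needed. As a sanity check on the form of \eqref{e:CnRecursion}, in the case $\bfgamma\equiv\bfalpha$ the two occurrences of $\diagpie^{-1}A_{n+1}\theta_n$ cancel and the recursion collapses to $\haC_{n+1}=(1-\alpha_{n+1})\haC_n+\alpha_{n+1}\diagpie^{-1}\Psi_n c$, a plain running average; since $\Expect_\pie[\Psi_n]=\diagpie$, the ergodic theorem gives $\haC_n\to c$, which is precisely the ingredient behind \Proposition{t:ZAP1}.
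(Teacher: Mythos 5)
Your computation is correct and is exactly the verification the paper intends: the paper states \Proposition{t:CnRecursion} without a written proof (treating it as a direct extension of \Proposition{t:SAMC}), and the only natural route is the one you take — left-multiply the $\theta$-update by $\haA_{n+1}$, substitute the matrix-gain update into $\haA_{n+1}\theta_n$, and multiply by $-\diagpie^{-1}$ to recognize $\haC_n$ and $\haC_{n+1}$. Your remarks on why invertibility (and eventual inactivity of the projection) is needed, and the sanity check that the recursion collapses to the running average used in the proof of \Proposition{t:ZAP1} when $\bfgamma\equiv\bfalpha$, are both consistent with the paper.
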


\begin{proof}[Proof of \Proposition{t:ZAP1}]

The assumption that the sequence of policies $\{\phi_n\}$ converges to a (possibly random) limit  $\phi_\infty$  has the following consequences:  
First, this 
implies that $\haA_{n}$ defined in \eqref{e:QSNR2Adef} converges: 
\begin{equation}
\lim_{n\to\infty} \haA_{n} = - \diagpie  [I - \beta P S_{\phi_\infty}]  \qquad a.s.\,.
\label{e:haAconverges}
\end{equation}
Second, for all  $n$  sufficiently large the following identities hold, 
by applying the definitions of $\phi_n$ and ${\Qstar}^{-1}$:
\begin{equation}
[I - \beta P S_{\phi_\infty}]  \theta_n = [I - \beta P S_{\phi_n}]  \theta_n  =   {\Qstar}^{-1} (\theta_n ) 
\label{e:haAnThetan}
\end{equation}

From \eqref{e:haAconverges}, since the limit on the right hand side is invertible and the set of all invertible matrices is open, it follows that there is an integer $n_\bullet$ that is finite a.s., and such that $\haA_n$ is invertible for  $n \geq n_\bullet$.


Now applying \Proposition{t:CnRecursion}, the recursion \eqref{e:CnRecursion} is reduced to the following in the case  that $\bfgamma \equiv \bfalpha$:
\[
\haC_{n+1}   = \haC_n + \alpha_{n+1} [ \diagpie^{-1}  \Psi_n c    -    \haC_n    ].
\]
This is essentially a Monte-Carlo average of $\{  \diagpie^{-1}  \Psi_n c : n\ge 0\}$.  
Since the steady state expectation of $\Psi_n$ is equal to $\diagpie$, convergence follows from the Law of Large Numbers:
\begin{equation}
\label{e:CnConv}
\lim_{n\to\infty}  \haC_n  =   c  \qquad a.s.\,.
\end{equation}
Combining equations \eqref{e:haAconverges}, \eqref{e:haAnThetan}, and \eqref{e:CnConv} implies
\[
\lim_{n\to\infty} {\Qstar}^{-1} (\theta_n )   = - \lim_{n\to\infty} \diagpie^{-1} \haA_n \theta_n = 
\lim_{n\to\infty}  \haC_n  = c
\]
\Lemma{t:clQstar} completes the proof:
$\displaystyle
\lim\limits_{n \to \infty} \theta_n = \Qstar (c) =Q^*$.
\end{proof}

\subsubsection{ODE Analysis}
\label{s:ODE}


The remainder of this section is devoted to a high-level view of the proof of the ODE approximation for the two time-scale algorithm, with $\bfalpha$ and $\bfgamma$ defined in \eqref{e:GAINS}.

The construction of an approximating ODE involves first defining a continuous time process.  Denote
\begin{equation}
t_n = \sum_{i=1}^n \alpha_i,\ \  n\ge 1, \quad t_0=0\,,
\label{e:tn}
\end{equation}
and define $\barq_{t_n} = \theta_n$ for these values, with the definition extended to $\Re_+$ via linear interpolation.  We say that the ODE approximation $\ddt q = f(q)$ holds if    we have the approximation,
\[
\barq_{T_0 +t} = \barq_{T_0} + \int_{T_0}^{T_0+t }  f( \barq_\tau)  \, d\tau   + \clE_{T_0,T_0 + t} \,, \quad t, T_0\ge 0\, ,
\]
where the error process satisfies, for each $T>0$,
\[
\lim_{T_0\to\infty }  \sup_{0\le t \le T}  \| \clE_{T_0,T_0 + t} \| =0\quad a.s.
\]
Such approximations will be represented using the more compact notation:
\begin{equation}
\barq_{T_0 +t} = \barq_{T_0} + \int_{T_0}^{T_0+t }  f( \barq_\tau)  \, d\tau   +o(1) \,, \quad  T_0\to\infty \, .
\label{e:ODEgeneral}
\end{equation}

An ODE approximation holds for Watkins algorithm, with $f(q_t)$ defined by the right hand side of  \eqref{e:QODEW}, or in more compact notation:   
\begin{equation}
\begin{aligned}
\ddt q_t  =   \diagpie  [  c    -  \qcost_t   ]\,,
\qquad
\qcost_t=[I - \beta   P S_{\phi^{q_t}} ]q_t 
\end{aligned}
\label{e:ODEW2}
\end{equation}
The significance of this representation  
is that $q_t$ is the Q-function associated with the ``cost function'' $\qcost_t$:   $q_t = \Qstar(\qcost_t)$.

The same notation will be used in the following treatment of Zap~Q-learning.  
Along with the piecewise linear continuous-time process   $\{ \barq_t : t\ge 0 \}$,   denote by $\{\barclA_t : t\ge 0\}$   the  piecewise linear continuous-time process defined similarly, with $\barclA_{t_n} = \haA_n$,  $n\ge 1$,
and $\barqcost_t = {\Qstar}^{-1}(\barq_t) $ for $t\ge 0$.
 


To construct an ODE, it is convenient first to obtain an alternative and suggestive representation for the pair of equations (\ref{e:QSNR2def},\ref{e:QSNR2Adef}).   A vector-valued sequence of random variables $\{\clE_k\}$ will be called \textit{ODE-friendly} if it admits the decomposition,
\begin{equation}
\clE_k   =  \Delta_k  +  \clT_k-\clT_{k-1} + \epsy_k
\quad k\ge 1
\label{e:bor08aNoise}
\end{equation}
in which 
$\{ \Delta_k : k\ge 1\}$ is a martingale-difference sequence satisfying  
$ \Expect[\|  \Delta_{k+1} \|^2\mid \clF_k] \le \bar\sigma^2_\Delta$ a.s.\ for some finite $\bar\sigma^2_\Delta$ and all $k$,
$\{ \clT_k : k\ge 1\}$ is a bounded sequence, and the final sequence  is bounded and satisfies
\begin{equation}
\sum_{k=1}^\infty \gamma_k \|\epsy_k\| <\infty \quad a.s.\, .
\label{e:Friendly3}
\end{equation}

\begin{lemma}
\label{t:pre-ODE}
The pair of equations (\ref{e:QSNR2def},\ref{e:QSNR2Adef}) can be expressed,
\begin{equation}
\begin{aligned}
\theta_{n+1} &= \theta_n + \alpha_{n+1} \haG_{n+1}^* \bigl[- \diagpie[I-\beta P S_{\phi^{\theta_n}} ]\theta_n +   \diagpie c
+\clE^A_{n+1} \theta_n + \clE^q_{n+1} \bigr] 
\\[.2cm]
\haG_{n+1}^* &= - [ \haA_{n+1}]^{-1}
\\[.2cm]
\qquad \haA_{n+1} &= \haA_n + \gamma_{n+1}  \bigl[ - \diagpie [I-\beta P S_{\phi^{\theta_n}} ]  - \haA_n +\clE^A_{n+1}   \bigr]
\end{aligned}
\label{e:SNR2linearSA_preODE}
\end{equation}
in which the sequence $\{ \clE^q_n : n\ge 1\}$  is ODE-friendly.   The   sequence $\{ \clE^A_n \}$ is ODE-friendly provided  Assumption~Q3 holds.
\qed
\end{lemma}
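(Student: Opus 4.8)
The plan is to first reduce the Zap-Q recursion to the affine form used in \eqref{e:SNR2linearSA_preODE} by a direct algebraic manipulation exploiting the indicator basis \eqref{e:WatkinsBasis}, then to verify the ODE-friendly decomposition \eqref{e:bor08aNoise} for each error term by a Poisson-equation (resolvent) splitting of the Markov noise. For the reduction: with $\psi$ the indicator basis one has $\uQ^{\theta_n}(X_{n+1}) = \psi(X_{n+1},\phi_n(X_{n+1}))^\transpose \theta_n$ (with $\phi_n=\phi^{\theta_n}$), $Q^{\theta_n}(X_n,U_n)=\psi(X_n,U_n)^\transpose\theta_n$, and the $\lambda=0$ eligibility vector is $\elig_n=\psi(X_n,U_n)$; substituting into \eqref{e:QSNR2def} yields $\theta_{n+1}=\theta_n+\alpha_{n+1}\haG_{n+1}^*[\,b_{n+1}+A_{n+1}\theta_n\,]$ with $b_{n+1}=\Psi_n c$ (see \eqref{e:Psi}) and $A_{n+1}$ as in \eqref{e:QSNR2Adef}, i.e.\ exactly \eqref{e:QSNR2def-b}. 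A short computation using \Lemma{t:PSphi} identifies the stationary means: $\Expect_\pie[\Psi_n]=\diagpie$ and, for each fixed deterministic policy $\phi$, $\Expect_\pie[A_{n+1}]=\beta\diagpie PS_\phi-\diagpie=-\diagpie[I-\beta PS_\phi]$. Defining $\clE^q_{n+1}:=b_{n+1}-\diagpie c$ and $\clE^A_{n+1}:=A_{n+1}+\diagpie[I-\beta PS_{\phi_n}]$ and substituting turns \eqref{e:QSNR2def}--\eqref{e:QSNR2Adef} into \eqref{e:SNR2linearSA_preODE}; the line $\haG_{n+1}^*=-[\haA_{n+1}]^{-1}$ is just the definition (the boundedness of $\{\haG_n^*\}$, once the projection becomes inactive, is deferred to \Theorem{t:ZAP}). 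It then remains to check the two error sequences are ODE-friendly.

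For $\clE^q$ (no use of Assumption~Q3): under Assumption~Q1 the pair process $(\bfmX,\bfmU)$ is a finite irreducible Markov chain with transition matrix $PS_\upmf$ and invariant pmf $\pie$ (\Lemma{t:PSphi}(ii)). The function $(x,u)\mapsto\psi(x,u)c(x,u)-\diagpie c$ has zero $\pie$-mean, so Poisson's equation $\hat b-PS_\upmf\hat b=\psi(\varble)c(\varble)-\diagpie c$ has a bounded solution $\hat b$, whence $\Expect[\hat b(X_{n+1},U_{n+1})\mid\clF_n]=(PS_\upmf\hat b)(X_n,U_n)$ and $\clE^q_{n+1}=\bigl[\hat b(X_n,U_n)-\hat b(X_{n+1},U_{n+1})\bigr]+\bigl[\hat b(X_{n+1},U_{n+1})-\Expect[\hat b(X_{n+1},U_{n+1})\mid\clF_n]\bigr]$. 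The first bracket is $\clT^q_{n+1}-\clT^q_n$ with $\clT^q_n:=-\hat b(X_n,U_n)$ bounded, the second is a martingale difference, and the residual of \eqref{e:bor08aNoise} is null; all conditional second moments are bounded since $\state\times\U$ is finite, so $\clE^q$ is ODE-friendly.

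For $\clE^A$ (using Assumption~Q3): since $\phi_n$ is $\clF_n$-measurable, write $\clE^A_{n+1}=\bigl[A_{n+1}-\Expect[A_{n+1}\mid\clF_n]\bigr]+\tilde m_{\phi_n}(X_n,U_n)$, where $\tilde m_\phi(x,u):=\Expect[A_{n+1}\mid X_n=x,U_n=u,\phi_n=\phi]+\diagpie[I-\beta PS_\phi]$ has zero $\pie$-mean for each of the finitely many policies $\phi$. For each such $\phi$ let $\hat m_\phi$ solve $\hat m_\phi-PS_\upmf\hat m_\phi=\tilde m_\phi$; then $\tilde m_{\phi_n}(X_n,U_n)=\hat m_{\phi_n}(X_n,U_n)-\hat m_{\phi_n}(X_{n+1},U_{n+1})+\zeta_{n+1}$ with $\zeta_{n+1}$ a martingale difference. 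Set $\clT^A_n:=-\hat m_{\phi_n}(X_n,U_n)$; then $\hat m_{\phi_n}(X_n,U_n)-\hat m_{\phi_n}(X_{n+1},U_{n+1})=(\clT^A_{n+1}-\clT^A_n)+\epsy_{n+1}$ with $\epsy_{n+1}=[\hat m_{\phi_{n+1}}-\hat m_{\phi_n}](X_{n+1},U_{n+1})$, which vanishes unless $\phi_{n+1}\neq\phi_n$ and is bounded by $2\sup_\phi\|\hat m_\phi\|_\infty<\infty$. Hence $\sum_n\gamma_{n+1}\|\epsy_{n+1}\|\le 2\sup_\phi\|\hat m_\phi\|_\infty\sum_n\gamma_{n+1}\ind\{\phi_{n+1}\neq\phi_n\}<\infty$ a.s.\ by Assumption~Q3, so $\clE^A_{n+1}$ has the form \eqref{e:bor08aNoise} with a bounded-second-moment martingale difference, the bounded telescoping term $\clT^A_n$, and residual satisfying \eqref{e:Friendly3}.

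The routine parts are the algebraic reduction and the $\clE^q$ decomposition; the one genuinely delicate point is the $\clE^A$ decomposition, where the telescoping part of the standard Markov-noise splitting fails to cancel at exactly the policy-switch times, and one must absorb the mismatch into the residual $\epsy_{n+1}$ and recognize it as precisely the quantity summed in Assumption~Q3. The supporting technicalities — existence and uniform boundedness of the Poisson solutions $\hat b$ and $\hat m_\phi$ over the finite policy set, and uniform bounds on the conditional second moments of the martingale differences — are immediate from the finiteness of $\state\times\U$ and of the set of stationary policies.
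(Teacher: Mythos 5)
Your proof is correct and follows essentially the paper's route: the algebraic reduction to \eqref{e:QSNR2def-b} with $\clE^q_{n+1}=\Psi_n c-\diagpie c$ and $\clE^A_{n+1}=A_{n+1}+\diagpie[I-\beta P S_{\phi_n}]$, followed by a Poisson-equation splitting in which the only term not covered by the standard martingale-plus-telescoping decomposition is the policy-switch mismatch, absorbed into the residual $\epsy_{n+1}$ and made summable by Assumption~Q3 --- exactly the mechanism of \Lemma{t:ODEfish}. The one immaterial difference is that the paper solves a single Poisson equation for $\diagpie-\Psi_n$ and right-multiplies by $P S_{\phi_n}$ (so the residual is $H(X_{n+1},U_{n+1})[P S_{\phi_{n+1}}-P S_{\phi_n}]$), whereas you solve a finite family of Poisson equations indexed by the policy; the residual is the same kind of object in either case.
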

 
The assertion that  $\{ \clE^q_n , \clE^A_n \}$  are ODE-friendly follows from standard arguments based on solutions to Poisson's equation for zero-mean functions of the Markov chain $(\bfmX,\bfmU)$ \cite{shwmak91}. The proof of \Lemma{t:ODEfish} is based on an extension of this technique to the present setting. 
 
\begin{lemma}
\label{t:ODEfish}
	For each $n\ge 0$,
\[
	\Psi_n  P S_{\phi_{n}}  = \diagpie P S_{\phi_{n}}    +   \clT_{n+1} - \clT_n +  \Delta^\Psi_{n+1} +\epsy_{n+1}
\]
	where $\{\Delta^\Psi_k\}$ is a martingale difference sequence with uniformly bounded second moment, and the sequences $\{ \clT_k, \ \epsy_k : k\ge 0\}$ are also bounded.    If Assumption~Q3 holds then  $\{\epsy_k\}$ satisfies \eqref{e:Friendly3}.\qed
\end{lemma}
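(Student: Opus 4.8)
The plan is to view $\Psi_n P S_{\phi_n}-\diagpie P S_{\phi_n}$ as a zero-mean additive functional of the finite Markov chain $(\bfmX,\bfmU)$ and then apply the standard Poisson-equation decomposition, the only complication being the random, slowly-varying policy $\phi_n$.  First I would record the elementary observation that, with the Watkins basis \eqref{e:WatkinsBasis}, $\Psi_n=\psi(X_n,U_n)\psi(X_n,U_n)^\transpose$ is the diagonal $\{0,1\}$ matrix whose single unit entry sits in the coordinate indexed by $(X_n,U_n)$, so that $\Expect_\pie[\Psi_n]=\diagpie$.  Consequently, for each \emph{fixed} stationary policy $\phi^{(i)}$ in the finite list $\{\phi^{(i)}:1\le i\le\nphi\}$, the $\Re^{d\times d}$-valued function
\[
g_{\phi^{(i)}}(x,u)\eqdef\bigl(\psi(x,u)\psi(x,u)^\transpose-\diagpie\bigr)\,P S_{\phi^{(i)}},\qquad (x,u)\in\state\times\U,
\]
has zero mean under $\pie$.

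Next I would invoke Poisson's equation.  Under Assumption~Q1 the chain $(\bfmX,\bfmU)$ is finite and irreducible with invariant pmf $\pie$, and its transition matrix $\hat P$ does \emph{not} depend on $i$ (the behavior policy of Q1 is fixed; the target policies $\phi^{(i)}$ enter only through the matrix factor $P S_{\phi^{(i)}}$).  Hence for each $i$ the equation $\hat g_{\phi^{(i)}}-\hat P\hat g_{\phi^{(i)}}=g_{\phi^{(i)}}$ has a bounded solution $\hat g_{\phi^{(i)}}$, and since there are only finitely many policies there is a single constant $C<\infty$ with $\|\hat g_{\phi^{(i)}}\|_\infty\le C$ for all $i$.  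Because $\phi_n=\phi^{\theta_n}$ is $\clF_n$-measurable and, given $(X_n,U_n)$, the pair $(X_{n+1},U_{n+1})$ is conditionally independent of $\clF_n$, freezing $\phi_n$ inside the conditional expectation gives $\Expect[\hat g_{\phi_n}(X_{n+1},U_{n+1})\mid\clF_n]=\hat g_{\phi_n}(X_n,U_n)-g_{\phi_n}(X_n,U_n)$ by Poisson's equation, that is,
\[
g_{\phi_n}(X_n,U_n)=\hat g_{\phi_n}(X_n,U_n)-\hat g_{\phi_n}(X_{n+1},U_{n+1})+\Delta^\Psi_{n+1},
\]
where $\Delta^\Psi_{n+1}\eqdef\hat g_{\phi_n}(X_{n+1},U_{n+1})-\Expect[\hat g_{\phi_n}(X_{n+1},U_{n+1})\mid\clF_n]$ is a martingale-difference sequence adapted to $\{\clF_n\}$, bounded by $2C$ in norm and hence of uniformly bounded second moment.

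To finish I would split off the remaining telescoping term.  Setting $\clT_n\eqdef-\hat g_{\phi_n}(X_n,U_n)$, so that $\|\clT_n\|\le C$, one has the algebraic identity
\[
\hat g_{\phi_n}(X_n,U_n)-\hat g_{\phi_n}(X_{n+1},U_{n+1})=\clT_{n+1}-\clT_n+\epsy_{n+1},\qquad \epsy_{n+1}\eqdef\bigl(\hat g_{\phi_{n+1}}-\hat g_{\phi_n}\bigr)(X_{n+1},U_{n+1}),
\]
in which $\epsy_{n+1}$ is bounded by $2C$ and vanishes on the event $\{\phi_{n+1}=\phi_n\}$.  Substituting into the previous display, adding $\diagpie P S_{\phi_n}$ to both sides, and using $\Psi_n P S_{\phi_n}=\diagpie P S_{\phi_n}+g_{\phi_n}(X_n,U_n)$, yields exactly the asserted decomposition, with all sequences bounded.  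The summability claim follows because $\|\epsy_{n+1}\|\le 2C\,\ind\{\phi_{n+1}\neq\phi_n\}$, so that $\gamma_{n+1}\le\gamma_n$ together with Assumption~Q3 gives $\sum_n\gamma_{n+1}\|\epsy_{n+1}\|\le 2C\sum_n\gamma_{n+1}\ind\{\phi_{n+1}\neq\phi_n\}<\infty$ a.s., which is \eqref{e:Friendly3}.

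The step deserving the most care — and the main obstacle — is reconciling two competing demands on $\phi_n$: it must be held frozen and $\clF_n$-measurable inside every conditional expectation so that $\Delta^\Psi$ is a genuine martingale difference, yet the telescoping correction $\clT_{n+1}-\clT_n$ must carry the Poisson solution evaluated at the \emph{current} policy, so that the leftover error $\epsy$ is supported on $\{\phi_{n+1}\neq\phi_n\}$ and therefore summable against $\bfgamma$ by Q3.  Once this bookkeeping is settled, the rest is the routine Poisson-equation decomposition for an additive functional of a finite ergodic Markov chain, in the spirit of \cite{shwmak91}.
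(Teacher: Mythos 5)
Your proof is correct and follows essentially the same route as the paper: a Poisson-equation decomposition for the zero-mean additive functional, with the martingale term obtained by freezing $\phi_n$ inside the conditional expectation and the leftover error $\epsy_{n+1}$ supported on $\{\phi_{n+1}\neq\phi_n\}$ so that Assumption~Q3 gives \eqref{e:Friendly3}. The only cosmetic difference is that the paper solves Poisson's equation once for $\Psi_n-\diagpie$ and right-multiplies by $P S_{\phi}$, whereas you index the Poisson solutions by policy; since the behavior chain's kernel is policy-independent these coincide, $\hat g_{\phi^{(i)}}=H\,P S_{\phi^{(i)}}$.
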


The representation in \Lemma{t:pre-ODE}
appears similar to an Euler approximation of the solution to an ODE: 
\[
{\theta_{n+1}
	\choose
	\haA_{n+1}
}
={\theta_n
	\choose
	\haA_n}
+ { \alpha_{n+1} f_\Theta(\theta_n, \haA_n) 
	\choose
	\gamma_{n+1} f_A(\theta_n, \haA_n) 
} + \clE_{n+1}
\]
It is  discontinuity of the function $f_A$ that presents the most significant challenge in analysis of the algorithm ---  this violates standard conditions for existence and uniqueness of solutions to the ODE without disturbance.   

%
%

Fortunately there is special structure that will allow the construction of an ODE approximation.   
Some of this structure is highlighted in the lemma that follows.   These approximations  are taken from Lemmas~\ref{t:haA_Bdd_And_ODE_sol}
 and \ref{t:haq_Bdd_And_ODE_sol}.

\begin{lemma}
\label{t:haA_ODE}
For each $t,T_0\ge 0$,
\begin{align}
\barq_{T_0 +t} &= \barq_{T_0}
-   \int_{T_0}^{T_0+t }  \barclA_\tau^{-1}  \diagpie  \bigl\{c - \barqcost_\tau    \bigr\}  \, d\tau   + o(1)
\label{e:preODE_Zap_q}
\\[.2cm]
\barclA_{T_0 +t} &= \barclA_{T_0}   
-   \int_{T_0}^{T_0+t }\bigl\{    \diagpie  [I-\beta P S_{\phi^{\barq_\tau}} ]  + \barclA_\tau    \bigr\} g_\tau  \, d\tau   +  o(1)\,, \quad  T_0\to\infty \, .
\label{e:preODE_Zap_A}
\end{align}
where   $g_t \eqdef \gamma_n/\alpha_n$ when $t=t_n$ for some $n$, and extended to all $t\in\Re_+$ by   linear interpolation.
	\qed 
\end{lemma}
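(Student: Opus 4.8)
\medskip\noindent\emph{Proof proposal.}\
The plan is to derive \eqref{e:preODE_Zap_q}--\eqref{e:preODE_Zap_A} by the standard ODE‑method passage from a stochastic recursion to an integral equation, applied separately to the $\theta$‑ and $\haA$‑recursions in the equivalent form \eqref{e:SNR2linearSA_preODE} of \Lemma{t:pre-ODE}, using the boundedness of $\{\theta_n,\haA_n:n\ge 0\}$ and of $\{\haA_n^{-1}:n\ge n_\bullet\}$ (the content of Lemmas~\ref{t:haA_Bdd_And_ODE_sol} and \ref{t:haq_Bdd_And_ODE_sol}, to which this lemma is subsidiary). Fix $T>0$; for $T_0$ large enough that $n_0=n_0(T_0)\eqdef\max\{n:t_n\le T_0\}$ satisfies $n_0\ge n_\bullet$, set also $n_1=n_1(T_0,t)\eqdef\max\{n:t_n\le T_0+t\}$, and write ``$o(1)$'' for any bound that, after taking $\sup_{0\le t\le T}$, tends to $0$ a.s.\ as $T_0\to\infty$.

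For the fast equation \eqref{e:preODE_Zap_A}: from \eqref{e:SNR2linearSA_preODE}, $\haA_{n+1}-\haA_n=\gamma_{n+1}\bigl[-\diagpie(I-\beta P S_{\phi^{\theta_n}})-\haA_n+\clE^A_{n+1}\bigr]$, so writing $\gamma_{n+1}=\alpha_{n+1}g_{n+1}$ with $g_{n+1}=\gamma_{n+1}/\alpha_{n+1}$ and summing over $n_0\le n<n_1$ exhibits $\haA_{n_1}-\haA_{n_0}$ as a left‑endpoint Riemann sum on the mesh $\{t_n\}$ (widths $\alpha_{n+1}$) plus a noise term. The deterministic part $-\sum_{n}\alpha_{n+1}g_{n+1}\{\diagpie(I-\beta P S_{\phi^{\theta_n}})+\haA_n\}$ approximates $-\int_{T_0}^{T_0+t}g_\tau\{\diagpie(I-\beta P S_{\phi^{\barq_\tau}})+\barclA_\tau\}\,d\tau$ with $o(1)$ discrepancy: the Lipschitz part of the integrand contributes a Riemann error bounded by $(\max_{n\ge n_0}\alpha_{n+1})$ times the total variation of $g_\tau$ on $[T_0,T_0+T]$, and since $t_n=\sum_{i=1}^n i^{-1}=\log n+O(1)$ by \eqref{e:tn} and \eqref{e:GAINS} we have $\max_{n\ge n_0}\alpha_{n+1}\asymp e^{-T_0}$ while $g_\tau\asymp e^{(1-\rho)\tau}$, whose variation on $[T_0,T_0+T]$ is $O(e^{(1-\rho)(T_0+T)})$, so the product is $O(e^{-\rho T_0})\to 0$; the remaining, policy‑switching part contributes at most $\mathrm{const}\cdot\sum_{n\ge n_0}\gamma_{n+1}\ind\{\phi_{n+1}\neq\phi_n\}$ plus a ``near‑boundary'' term, both $o(1)$ by Assumption Q3, \eqref{e:phiChangeNull}. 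The noise part $\sum_{n}\gamma_{n+1}\clE^A_{n+1}$ is $o(1)$ uniformly in $t\in[0,T]$ via the ODE‑friendly decomposition \eqref{e:bor08aNoise}: its martingale‑difference component has square‑summable gains ($\sum\gamma_n^2<\infty$ by \eqref{e:stepsizeConditions}) and uniformly bounded conditional second moments, so the partial sums converge a.s.\ and have $o(1)$ tails; its telescoping component is $o(1)$ by Abel summation ($\clT^A$ bounded, $\gamma_n\downarrow 0$, $\sum|\gamma_{n+1}-\gamma_n|<\infty$); and its last component satisfies $\sum\gamma_n\|\epsy^A_n\|<\infty$ directly from \eqref{e:Friendly3}.

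For the slow equation \eqref{e:preODE_Zap_q}: from \eqref{e:SNR2linearSA_preODE}, $\theta_{n+1}-\theta_n=\alpha_{n+1}\haG_{n+1}^*\bigl[-\diagpie(I-\beta P S_{\phi^{\theta_n}})\theta_n+\diagpie c+\clE^A_{n+1}\theta_n+\clE^q_{n+1}\bigr]$. By the definitions of $\phi^{\theta_n}$ and of ${\Qstar}^{-1}$ one has $(I-\beta P S_{\phi^{\theta_n}})\theta_n={\Qstar}^{-1}(\theta_n)=\barqcost_{t_n}$, so the drift equals $-\haA_{n+1}^{-1}\diagpie\{c-\barqcost_{t_n}\}$; replacing $\haA_{n+1}^{-1}$ by $\haA_n^{-1}=\barclA_{t_n}^{-1}$ costs $O(\alpha_{n+1}\|\haA_{n+1}-\haA_n\|)=O(\alpha_{n+1}\gamma_{n+1})$ per step, and $\sum_n\alpha_{n+1}\gamma_{n+1}=\sum_n(n+1)^{-1-\rho}<\infty$, so this is harmless. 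The resulting Riemann sum $-\sum_n\alpha_{n+1}\barclA_{t_n}^{-1}\diagpie\{c-\barqcost_{t_n}\}$ approximates $-\int_{T_0}^{T_0+t}\barclA_\tau^{-1}\diagpie\{c-\barqcost_\tau\}\,d\tau$ with $o(1)$ error, because $\barclA^{-1}$ is Lipschitz on the relevant bounded set of matrices bounded away from singularity, $\barqcost_\tau={\Qstar}^{-1}(\barq_\tau)$ is Lipschitz in $\tau$ (by \Lemma{t:clQstar} and Lipschitz continuity of $\tau\mapsto\barq_\tau$), and the mesh $\alpha_n\to 0$. The noise term $\sum_n\alpha_{n+1}\haG_{n+1}^*(\clE^A_{n+1}\theta_n+\clE^q_{n+1})$ is $o(1)$ by the same three‑part argument as for \eqref{e:preODE_Zap_A}, now with $\alpha_n$ in place of $\gamma_n$: the ODE‑friendly bounds and boundedness of $\haG_{n+1}^*,\theta_n$ apply, $\sum\alpha_n^2<\infty$ controls the martingale part (after re‑indexing the gain from $\haG_{n+1}^*$ to $\haG_n^*$, at a further a.s.-summable cost $O(\alpha_{n+1}\gamma_{n+1})$ per step, so the product with the martingale increment is a genuine $\clF_n$‑martingale difference), and $\sum\alpha_n\|\epsy_n\|\le\sum\gamma_n\|\epsy_n\|<\infty$ since $\alpha_n\le\gamma_n$ under \eqref{e:GAINS}.

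I expect the main obstacle to be the policy‑switching term in the fast equation. The vector field $\haA\mapsto-\diagpie(I-\beta P S_{\phi^{\barq_\tau}})-\haA$ is discontinuous in $\barq_\tau$, jumping exactly when the greedy policy $\phi^{\barq_\tau}$ changes, so the Riemann‑sum‑to‑integral passage is not justified by any smoothness argument; this is precisely what Assumption Q3 is designed to repair, by making the cumulative ``mass'' of policy switches summable against $\gamma_n$, and one still has to dispose of the spurious switches of the \emph{interpolant} $\barq_\tau$ occurring within a single step without a recorded change $\phi_n\neq\phi_{n+1}$ (these require $\barq_\tau$ to pass within $O(\alpha_n)$ of a policy boundary, and the time so spent is $o(1)$ after multiplication by $g_\tau$). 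The secondary technical point, special to the two‑time‑scale construction, is the unbounded factor $g_\tau$ in \eqref{e:preODE_Zap_A}: one must check that the interpolation mesh shrinks (exponentially fast in the rescaled time, since $t_n\sim\log n$) faster than $g_\tau$ grows, which is where $\rho<1$ enters. The rest is the routine bookkeeping of the ODE method.
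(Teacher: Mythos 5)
Your proposal follows essentially the same route as the paper: the paper's own "proof" of this lemma is the one-line assertion that the representations follow directly from \Lemma{t:pre-ODE} and \Lemma{t:ODEfish}, and what you have written is precisely the standard ODE-method bookkeeping (Riemann sums on the mesh $\{t_n\}$, martingale/telescoping/residual treatment of the ODE-friendly noise, Assumption~Q3 for the policy switches, and the mesh-versus-$g_\tau$ race that needs $\rho<1$) that the paper leaves implicit. One structural caveat: for the slow equation you assume boundedness of $\{\theta_n\}$ at the outset, but in the paper that boundedness is itself a downstream consequence of this very representation — \Lemma{t:barqtexpbd} first establishes \eqref{e:prepreODE_Zap_q} \emph{with} an $o(\|\barq_\tau\|)$ slack term (using only boundedness of $\{\barclA_t^{-1}\}$, not of $\barq$), then Gr\"onwall gives the growth bound \eqref{e:barqtexpbd2}, \Lemma{t:Aq1} and \Lemma{t:haq_Bdd_And_ODE_sol}(i) give $\sup_t\|\barq_t\|<\infty$, and only then does the clean form \eqref{e:preODE_Zap_q} follow as \eqref{e:haq_ODE0}. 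Read as a self-contained argument your derivation of \eqref{e:preODE_Zap_q} is therefore circular (the noise term $\clE^A_{n+1}\theta_n$ and your Riemann-error estimates both need the bound on $\theta_n$ you are implicitly trying to establish); the fix is exactly the paper's two-stage bootstrap, i.e.\ carry the $o(\|\barq_\tau\|)$ slack until Gr\"onwall has been applied. Your worry about ``spurious switches'' of the interpolant within a single step is unnecessary: by \Lemma{t:Thetak} each $\Theta^k$ is convex, so the linear interpolant between $\theta_n$ and $\theta_{n+1}$ cannot leave $\Theta^k$ when both endpoints lie in it.
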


The ``gain''  $g_t$ appearing in \eqref{e:preODE_Zap_A} converges to infinity rapidly as $t\to\infty$:     Based on the definitions   in \eqref{e:GAINS}, it follows from \eqref{e:tn} that $t_n\approx \log(n)$ for large $n$, and consequently $g_t \approx \exp((1-\rho)t)$ for large $t$.   This suggests that the integrand $     \diagpie  [I-\beta P S_{\phi^{\barq_t}} ]  + \barclA_t  $ should converge to zero rapidly with $t$.   This intuition is made precise in the Appendix.  
Through several subsequent transformations, these integral equations are shown to imply the ODE approximation in 
 \Theorem{t:ZAP}.

\subsection{An $O(d)$ Zap-Q learning algorithm}
\label{s:OdZap}

In this subsection, we introduce an $O(d)$ Zap-Q learning algorithm, which is basically the   $O(d)$ Zap-SNR algorithm described in \Section{s:OdZapSNR} specialized to   Q-learning. 

Based on the equations \eqref{e:OdSNR2nonlinearSA}, \eqref{e:OdSNR2Defs}, \eqref{e:QSNR2def}, and \eqref{e:QSNR2Adef}, the algorithm is defined as follows:  
\spm{2018  Why do you fix $N = d$?  You could say at the end that you choose N=d in simulations.}
 Fix $N = d$, and for $i \geq 0$,
\begin{equation} 
\begin{aligned}
\theta_{(i+1)N} 
&= \theta_{iN} + \alpha_{i+1}\haG_{(i+1)N}^* \haf(\theta_{iN})
\\
\haA_{(i+1)N} &= \haA_{iN} + \hagamma_{i+1}  \bigl[ \nabla \haf(\theta_{iN}) - \haA_n  \bigr]
\end{aligned}
\label{e:OdZapQ}
\end{equation}
where,
\begin{equation}
\begin{aligned}
\haf(\theta_{iN}) &= N^{-1} \displaystyle \sum_{j = iN + 1}^{(i+1)N} \bigl\{  c(X_j,U_j)   + \beta     \utheta_{iN} (X_{j+1})  - \theta_{iN}(X_j,U_j)  \bigr\}   \psi(X_j,U_j),
\\
\nabla \haf(\theta_{iN}) &= N^{-1} \displaystyle \sum_{j = iN + 1}^{(i+1)N} \psi(X_j,U_j) \bigl[  \beta    \psi (X_{j+1}, \phi^{\theta_{iN}}(X_{j+1} ))  - \psi(X_j,U_j)  \bigr]^\transpose  ,
\\
\hagamma_{i+1} &= 1 - \displaystyle \prod_{j=iN+1}^{(i+1)N} (1-\gamma_j) ,
\end{aligned}
\label{e:OdQSNR2Defs}
\end{equation}
$\haG_{(i+1)N}^* = - [ \haA_{(i+1)N}]^{-1}$, with $[\varble]$ denoting a projection, chosen so that $\{\haG_{(i+1)N}^*\}$ is a bounded sequence. For a given parameter vector $\theta$, the policy $\phi^{\theta}$ is defined in \eqref{e:phi_q_def}.

 Once again, we claim that the asymptotic properties of the above defined $O(d)$ Zap-Q learning algorithm is the same as that of the Zap-Q learning algorithm defined in \eqref{e:QSNR2def} and \eqref{e:QSNR2Adef}, with justification postponed to a future version of the paper.

%
%

\section{Numerical Results}
\label{s:num}

\archive{\rd{We are doing so much analysis assuming $A$, is it possible that the reviewer might feel like we are assuming a lot of things? Without understanding the fact that we are doing all this only for analysis?}}

Results from numerical experiments are surveyed here to illustrate the performance of the Zap Q-learning algorithm (\ref{e:QSNR2def},\ref{e:QSNR2Adef}). Comparisons are made with several existing algorithms, including Watkins Q-learning \eqref{e:Watkin}, Watkins Q-learning with Ruppert-Polyak-Juditsky (RPJ) averaging \cite{rup88,pol90,poljud92}, Watkins Q-learning with a ``polynomial learning rate" \cite{eveman03}, and the more recent \emph{Speedy Q-learning} algorithm \cite{azamunghakap11}.  

\archive{\bl{Are we giving Watkins too much credit??}}

In addition, the Watkins algorithm with a scalar gain $g$ is considered, with $g$ chosen so that the algorithm has finite asymptotic covariance.  When the value of $g$ is optimized and numerical conditions are favorable (e.g.,  the condition number of $A$ is not too large) it is found that the performance is nearly as good as the Zap-Q algorithm.  However, there is no free lunch:
\begin{romannum}
\item   Design of the scalar gain $g$ depends on approximation of $A$, and hence $\theta^*$.
While it is possible to estimate $A$ via Monte-Carlo in Zap Q-learning, it is not known how to efficiently update approximations for an optimal scalar gain.

\item  
A reasonable asymptotic covariance required a large value of $g$.   Consequently,
the scalar gain algorithm had massive transients, resulting in a poor performance in practice.

\item  Transient behavior could be tamed through projection to a bounded set.  However, this  again requires prior knowledge of the region in the parameter space to which $\theta^*$ belongs.

Projection of parameters was also necessary for RPJ averaging. 
\end{romannum}

The following batch mean method was used to estimate the asymptotic covariance.

\paragraph{Batch Mean Method}
At stage $n$ of the algorithm we will be interested in the distribution of a vector-valued random variable of the form $f_n(\theta_n)$,   where  $f_n \colon \Re^d \to \Re^m$ is possibly dependent on $n$.
The batch mean method is used to estimate its statistics:
For each algorithm, $N$ parallel simulations are run with $\theta_0$ initialized i.i.d.\ according to some distribution.
Denoting $\theta_n^i$ to be the vector $\theta_n$ corresponding to the $i^{th}$ simulation, the distribution of the random variable $f_n(\theta_n)$ is estimated based on the histogram of the independent samples $\{f_n(\theta_n^i) : 1\le i\le N\}$.  

An important special case in this paper is $f_n(\theta_n) = W_n\eqdef \sqrt{n} (\theta^n -\theta^*)$.   However, since the limit $\theta^*$ is not available, the empirical mean is substituted: 
\begin{equation}
	W_n^i  = \sqrt{n} [\theta_n^i  -  \bartheta_n]\, ,  \qquad     \bartheta_n \eqdef \frac{1}{N} \sum_{i=1}^N  \theta_n^i.
\label{e:BMMAsymCov}
\end{equation}
The estimate of the covariance of $f_n(\theta_n)$ is then obtained as the sample covariance of $\{W_n^i, \,1 \leq i \leq N\}$. This corresponds to the estimate of the asymptotic covariance $\Sigma_\theta$ defined in \eqref{e:SAsigma}.  

The value  $N = 10^3$ is used in all of the experiments surveyed here.

 \begin{wrapfigure}{l}{.25\hsize}
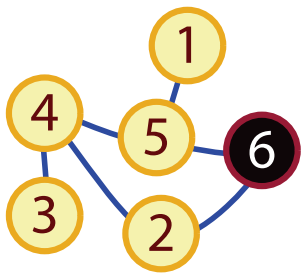

\vspace{-.5em}
\Ebox{.95}{6StateGraphNIPS.pdf}
\vspace{-1.15em}
	\caption{\small  
	Graph for MDP }\label{6StateGraph}
\vspace{-.3em}
\end{wrapfigure}

\subsection{Finite state-action MDP}

Consider first a simple stochastic-shortest-path problem.  The state space $\state = \{1,\ldots,6\}$ coincides with the  six nodes on the un-directed graph shown in \Fig{6StateGraph}. The action space $\U =\{ e_{x,x'} \}$,
$x,x' \in \state$,
 consists of all feasible edges along which an agent can travel, including each ``self-loop'', $u=e_{x,x} $.  
The number of state-action pairs for this example coincides with the number of nodes plus twice the number of edges:  $d = 18$.

The controlled transition matrix is defined as follows:    If $X_n=x\in\state$, and  $U_n=e_{x,x'} \in \U$, then  $X_{n+1} = x'$ with probability $0.8$, and with probability $0.2$, the next state is randomly chosen between all neighboring nodes.     
The goal is to reach the state $x^*=6$ and maximize the time spent there.   This is modeled through a discounted-reward optimality criterion with discount factor $\beta \in (0,1)$.  
The one-step reward  is defined as follows:  
\[
r(x,u) = \begin{cases}  	0 & u= e_{x,x} \,,\ x\neq 6  \quad 
\\
-100 &  u= e_{4,5}    
\\
100 &  u= e_{x,6}   
\\
-5 &  \text{otherwise}
\end{cases}
\]
The solution to the discounted-cost optimal control problem can be computed numerically for this model; the optimal policy is unique and independent of $\beta$.

Six different variants of Q-learning were tested: 
\archive{AD is correct to write this:  
{Power of asymptotic theory? Wouldn't a better motivation be to find the best algorithm?}}
\begin{enumerate} 
	\item Watkins' algorithm with  scalar gain $g$, so that $\alpha_n\equiv g/n$
	\item Watkins' algorithm using RPJ averaging, with $\gamma_{n} \equiv  (\alpha_n)^{0.6} \equiv n^{-0.6}$
	\item Watkins' algorithm with the polynomial learning rate $\alpha_n \equiv n^{-0.6}$
	\item Speedy Q-learning 
	\item Zap Q-learning with $\bfalpha \equiv \bfgamma$  
	\item Zap Q-learning with  $\gamma_{n} \equiv  (\alpha_n)^{0.85} \equiv n^{-0.85}$  
	 
\end{enumerate}
The basis was taken to be the same as in Watkins Q-learning algorithm.
In each case, the randomized policy was taken to be uniform:   feasible transitions were sampled uniformly   at each time.   

Discount factors $\beta=0.8$ and $\beta=0.99$ were considered.    In each case, the unique optimal parameter $\theta^*=Q^*$ was obtained numerically.

\paragraph{Asymptotic Covariance}

Speedy Q-learning cannot be represented as a standard stochastic approximation, so standard theory cannot be applied to obtain its asymptotic covariance.     The Watkins' algorithm with polynomial learning rate has infinite asymptotic covariance.

For the other four algorithms, the asymptotic covariance $\Sigma_{\theta}$ was computed by solving the Lyapunov equation \eqref{e:GAINLyap}  based on the matrix gain $G$ that is particular to each algorithm.   Recall that  $G=-A^{-1}$ in the case of either of the Zap-Q algorithms.

The  matrices $A$ and $\Sigma_{\Delta}$ appearing in \eqref{e:GAINLyap}  are defined with respect to Watkins' Q-learning algorithm with $\alpha_n=1/n$.  The   first matrix is 
$A = -\diagpie [I - \beta   P S_{\phi^*}]$
under the standing assumption that the optimal policy is unique.    The proof that this is a linearization comes first from the representation of  the ODE approximation \eqref{e:QODEW} in vector form: 
\begin{equation}
\ddt q_t =  \barf(q_t) =   A_{\phi_t} q_t  - b    \,,\qquad \textit{where}\quad A_{\phi} = -\diagpie [I - \beta   P S_{\phi}]\, \,,  \quad b = -\diagpie c\, .
\label{e:QODEWb}
\end{equation} 
Uniqueness of the optimal policy implies that $\barf$ is locally linear:  there exists $\epsy>0$ such that
\[
\barf(\theta) - \barf(\theta^*) =   A(\theta-\theta^*) ,\qquad \|\theta-\theta^*\|\le \epsy\, .
\] 
The matrix  $\Sigma_{\Delta}$ was also obtained numerically, without resorting to simulation.

\begin{figure}[h]
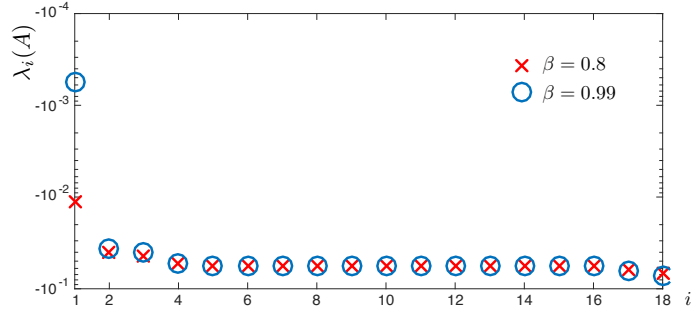
 
	\Ebox{.55}{Eig_A_8and99.pdf}
	\caption{\small Eigenvalues of the matrix $A$ for the 6-state example}
\label{Eig6state}
\end{figure}

The eigenvalues of the $18\times 18$ matrix $ A$ are real in this example, as shown in \Fig{Eig6state} for both values of $\beta$.     To ensure that the eigenvalues of $gA$ are all strictly less than $-1/2$ in a scalar gain algorithm requires  the (approximate) lower bounds $g>45$ for $\beta=0.8$,
 and $g>900$ for $\beta=0.99$.   \Theorem{t:aCov}  implies that the asymptotic covariance $\Sigma_\theta(g)$ is finite for this range of $g$ in the Watkins algorithm with $\alpha_n \equiv g/n$.
 \Fig{TraceSigmaTgVsg} shows the normalized trace of the asymptotic covariance 
 as a function of $g>0$, and the significance of $g \approx 45$ and $g \approx 900$.

\begin{figure}[h]
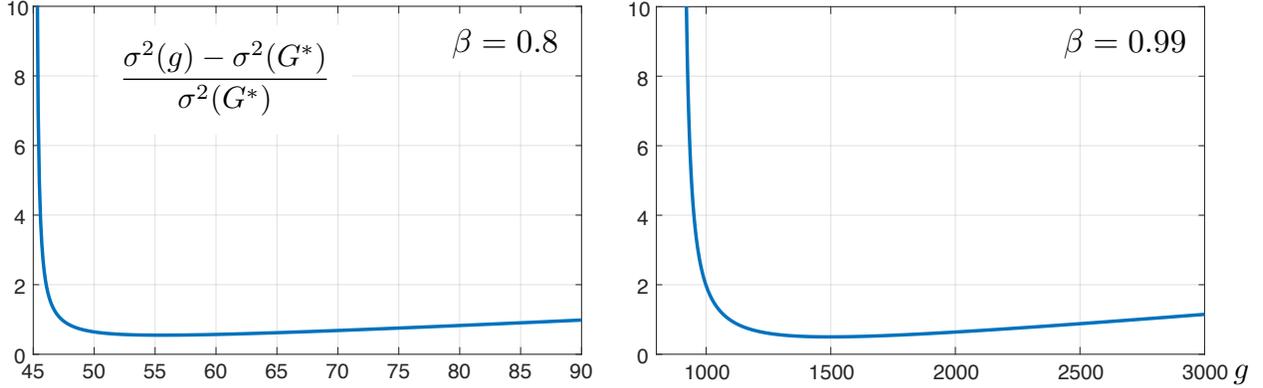

	\vspace{0in}
	\Ebox{1}{TraceOptMinusSigmaTgVsgRelative_0899.pdf}
	\vspace{0in}
	\caption{\small The normalized trace of the asymptotic covariance for the scaled Watkins algorithm with different scalar gains $g$, for the $6$-state example:  $\sigma^2(g)=\trace (\Sigma_\theta(g))$  and $\sigma^2(G^*)=\trace ( \Sigma^*) $.}
\label{TraceSigmaTgVsg}
\end{figure}

Based on this analysis or on \Theorem{t:Qinfinite}, it follows that the asymptotic covariance is not finite for the standard Watkins' algorithm with $\alpha_n \equiv 1/n$.  In simulations it was found that the parameter estimates are not close to $\theta^*$ even after many millions of samples. This is illustrated for the case $\beta = 0.8$ in \Fig{AsymCov_Watg1}, which shows a histogram of $10^3$ estimates of $\theta_n(15)$ with $n=10^6$  (other entries showed similar behavior).

\begin{figure}[h]
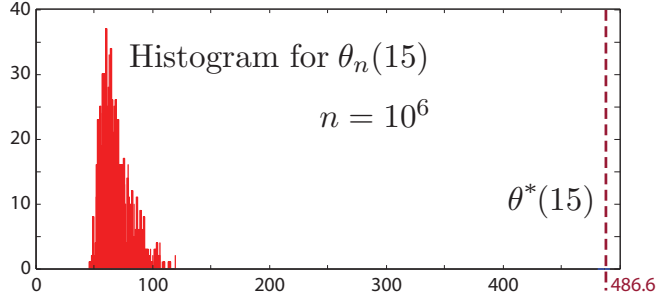
 
	\Ebox{.52}{CrazyWatkinsHistogramGoodnightFromParis.pdf}
	\vspace{-.15in}
	\caption{\small Histogram of $10^3$ estimates of $\theta_n(15)$, with $n=10^6$ for the Watkins algorithm applied to the 6-state example with discount factor $\beta = 0.8$}
\label{AsymCov_Watg1}
\end{figure}

It was found that the algorithm performed very poorly in practice for any scalar gain algorithm.  
For example, more than half of the $10^3$ experiments using  $\beta = 0.8$ and $g = 70$ resulted in values of $\theta_n(15)$ exceeding $\theta^*(15)$ by $10^4$ (with $\theta^*(15) \approx 500$), even with   $n=10^6$.  
The algorithm performed well with the introduction of projection in the case $\beta = 0.8$.  With $\beta = 0.99$, the  performance was unacceptable for any scalar gain, even with projection.

The results presented next used a gain of $g=70$  in the case $\beta = 0.8$,  and projection of each entry of the estimates to the interval $(-\infty , 1000]$.
\Fig{AsymCov_Wat2} shows normalized histograms of $\{W_n^i(k) : 1 \leq i \leq N\}$, as defined in 	\eqref{e:BMMAsymCov}, with $k=10, 18$. 

The Central Limit Theorem holds: $W_n$ is expected to be approximately normally distributed: $\clN (0,\Sigma_\theta (g))$, when $n$ is large.  Of the $d=18$ entries of the vector $W_n$, with $n\ge 10^4$,  it was found that the asymptotic variance matched the histogram nearly perfectly for $k=10$, while $k=18$ showed the worst fit.

\begin{figure}[h]
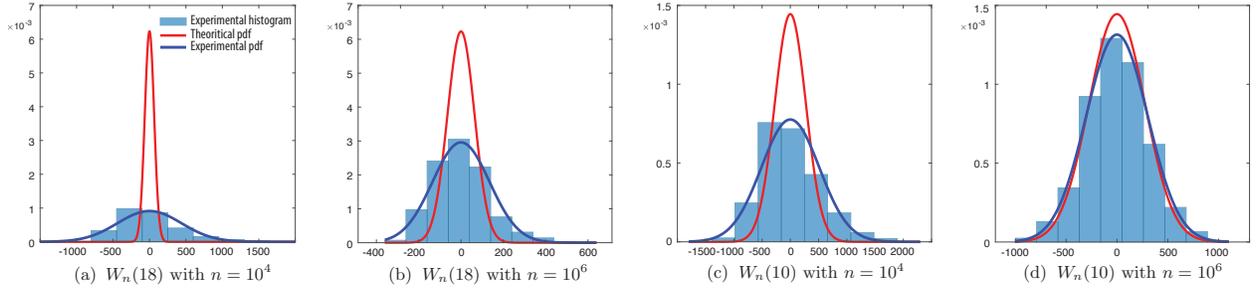

	\vspace{0in}
	\Ebox{1}{6stateHistsWatkinsAprilNarrow.pdf}
	\vspace{0in}
	\caption{\small Comparison of theoretical and empirical asymptotic variance for  the scaled Watkins' algorithm, with  gain $g = 70$, applied to the 6-state example  with discount factor $\beta = 0.8$}
\label{AsymCov_Wat2}
\end{figure}


These experiments were repeated for each of  the Zap-Q algorithms, for which the asymptotic variance $\Sigma^*$ is obtained using the formula \eqref{e:SigmaStar}.  Plots are shown only for Case~2: the two time-scale algorithm,   with $\gamma_n = (\alpha_n)^{0.85}$.   Histograms in the case of $\beta = 0.8$ are shown in  \Fig{AsymCov_SNR2a}, and \Fig{AsymCov_SNR2a99} for $\beta = 0.99$. The covariance estimates and the Gaussian approximations match the theoretical predictions remarkably well for $n\ge 10^4$.    
\begin{figure}[h]
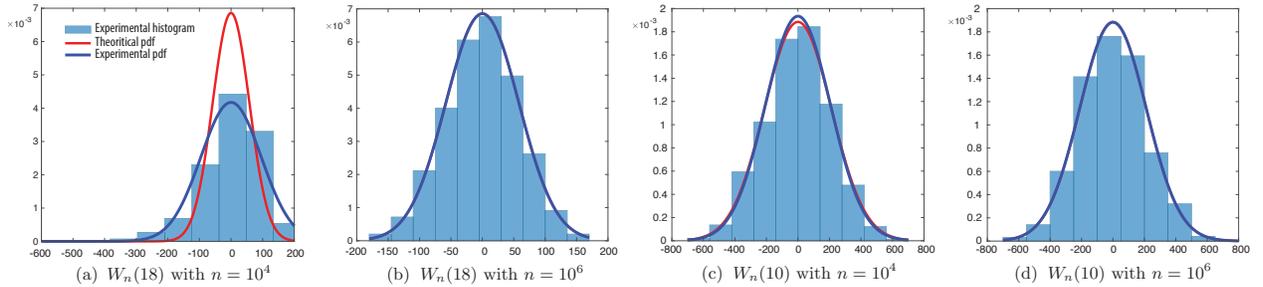

	\vspace{0in}
	\Ebox{1}{6stateHistsZapAprilNarrow.pdf}
	\vspace{0in}
	\caption{\small Comparison of theoretical and empirical asymptotic variance of the two time-scale Zap-Q algorithm applied to the 6-state example; $\beta = 0.8$}
\label{AsymCov_SNR2a}
\end{figure}

\begin{figure}[h]
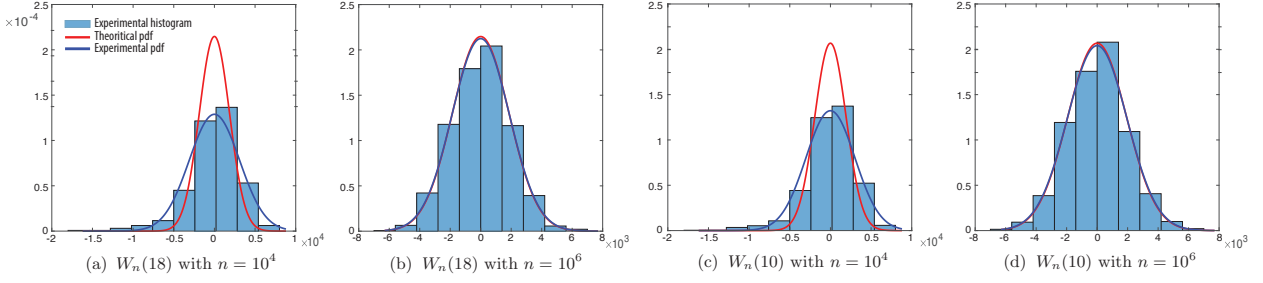

	\vspace{0in}
	\Ebox{1}{6stateHistsZapMay99Narrow.pdf}
	\vspace{0in}
	\caption{\small Comparison of theoretical and empirical asymptotic variance of the Zap-Q-learning algorithm applied to the 6-state example; $\beta = 0.99$}
\label{AsymCov_SNR2a99}
\end{figure}

\archive{\rd{Important: We are doing so much analysis assuming $A$, is it possible that the reviewer might feel like we are assuming a lot of things? Without understanding the fact that we are doing all this only for analysis?}}

\begin{figure}[h]
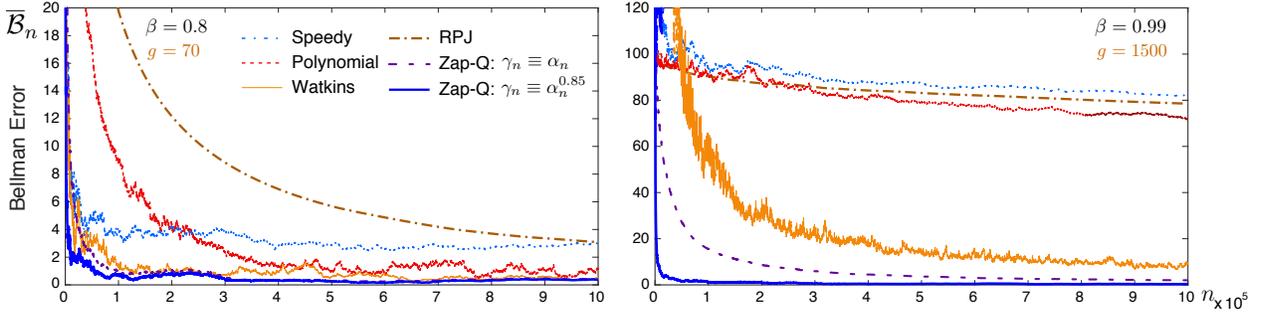

	\Ebox{1}{6State_BEPlot_Beta08099.pdf}
	\caption{\small Maximum Bellman error $\{ \maxBE_n : n\ge 0\}$ for the six   Q-learning algorithms}
\label{6stateBEPlot}
\end{figure}

\paragraph{Bellman Error}

\archive{{Important! Is the definition of $W_n^i$ fixed or changing? I mean, is it a variable? Didn't we define it to be something in the beginning? Using it for BE too?}
\\
Answer:  I want  $W_n^i$ fixed.}

The Bellman error at iteration $n$  is denoted:
\[
\BE_n(x,u) = \theta_n(x,u) -  r(x,u) - \beta \sum_{x'\in\state}  P_u(x,x')  \max_{u'} \theta_n(x',u')\, .
\]
This is identically zero if and only if $\theta_n = Q^*$.   If  $\{\theta_n\}$ converges to $Q^*$ then $\BE_n = \tiltheta_n - \beta P S_{\phi^*} \tiltheta_n$  for all sufficiently large $n$, and the CLT holds for $\{\BE_n\}$ whenever it holds for   $\{\theta_n\}$.  Moreover, on denoting the maximal error  
\begin{equation}
\maxBE_n = \max_{x,u}  | \BE_n(x,u) | \,,
\label{e:BError}
\end{equation}
the sequence $\{\sqrt{n} \,\maxBE_n\}$ also converges in distribution as $n\to\infty$.
\Fig{6stateBEPlot} contains   plots of $\{\maxBE_n \}$ for the six different Q-learning algorithms.

For large $n$, the two versions of Zap Q-learning exhibit similar behavior since $\haA_{n}$  converges to $A$ in both   algorithms.  Though all six algorithms perform reasonably well when $\beta = 0.8$, 
Zap Q-learning is the only one that achieves near zero Bellman error within $n = 10^6$ iterations in the case $\beta = 0.99$. Moreover, the performance of the two time-scale algorithm is clearly superior to the one time-scale algorithm. 

\begin{figure}[h]
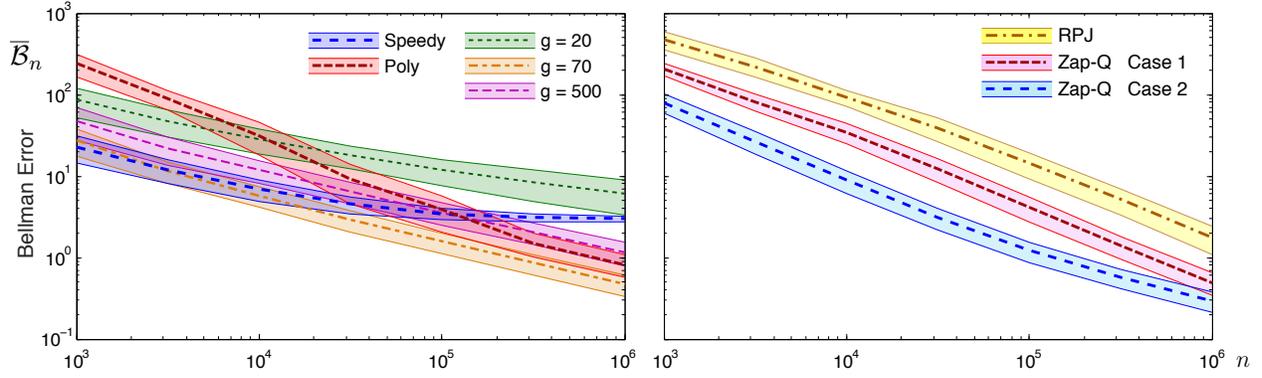

	\vspace{0in}
	\Ebox{1}{Confidence8.pdf}
	\vspace{0in}
	\caption{\small Simulation-based $2 \sigma$ confidence intervals for the six Q-learning algorithms with discount factor $\beta = 0.8$.}
\label{6StateConfIntBEPlot1}
\end{figure}

\archive{\todo Change the scalar gain plots with cone
\\
spm:   This is almost another paper!}

\begin{figure}[h]
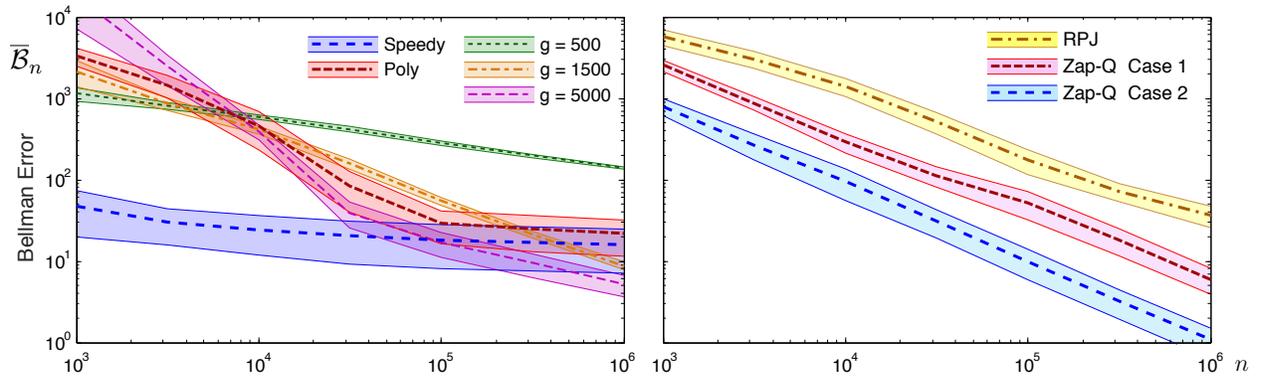

	\vspace{0in}
	\Ebox{1}{Confidence99.pdf}
	\vspace{0in}
	\caption{\small Simulation-based $2 \sigma$ confidence intervals for the six Q-learning algorithms with discount factor $\beta = 0.99$.}
	
\label{6StateConfIntBEPlot2}
\end{figure}

 \Fig{6stateBEPlot} shows only the typical behavior --- repeated trails were run to investigate the range of possible outcomes. 
For each algorithm, the outcomes of $N = 1000$ independent simulations resulted in samples   $\{\maxBE_n^i,\, 1 \leq i \leq N\}$,  with  $\theta_{0}$ uniformly distributed on the interval $[-10^3,10^3]$ for $\beta = 0.8$ and $[-10^4,10^4]$ for $\beta = 0.99$.

 The batch means method was  used to obtain estimates of the mean and variance of $ \maxBE_n$ for a range of values of $n$.   Plots of the mean and  $2 \sigma$ confidence intervals are shown in \Fig{6StateConfIntBEPlot1} for the case $\beta = 0.8$,   and  plots for $\beta = 0.99$ are shown in  \Fig{6StateConfIntBEPlot2}.

\Fig{6StateHistMaxBEPlot08} and \Fig{6StateHistMaxBEPlot99} shows   histograms of  $\{\maxBE_n^i,\, 1 \leq i \leq N\}$, $n = 10^6,$ for all the six algorithms;   this corresponds to the data shown in  \Fig{6StateConfIntBEPlot1} and \Fig{6StateConfIntBEPlot2} at $n = 10^6$.

\begin{figure}[H]
	\vspace{0in}
	\Ebox{0.85}{HistsMay08.pdf}
	\vspace{0in}
	\caption{\small Histogram of the maximal Bellman error when discount factor $\beta = 0.8$ and number of iterations $n=10^6$.}
\label{6StateHistMaxBEPlot08}
\end{figure}

\begin{figure}[H]
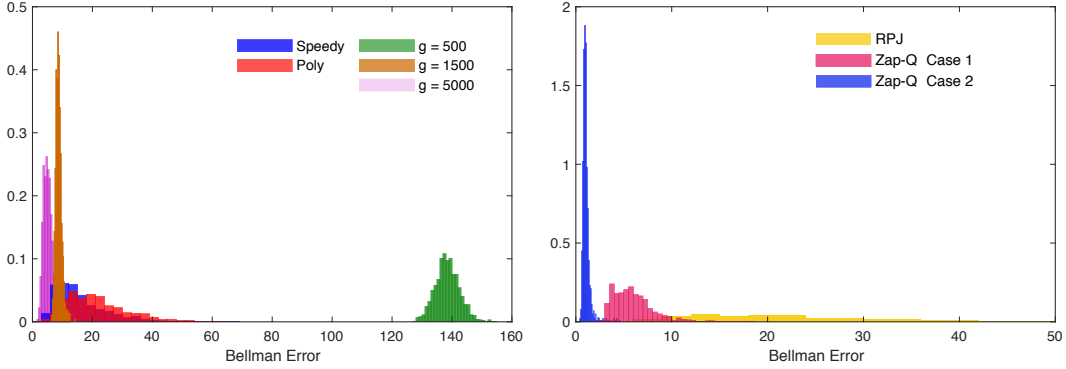

	\vspace{0in}
	\Ebox{0.85}{HistsMay99.pdf}
	\vspace{0in}
	\caption{\small Histogram of the maximal Bellman error when discount factor $\beta = 0.99$ and number of iterations $n=10^6$.}
\label{6StateHistMaxBEPlot99}
\end{figure}

\subsubsection{Performance   of the $O(d)$ Zap-Q learning algorithm}

In this subsection, we test the performance of the $O(d)$~Zap-Q learning algorithm that was defined in equations \eqref{e:OdZapQ} and \eqref{e:OdQSNR2Defs} of \Section{s:OdZap} by applying it to the stochastic shortest path problem. We restrict to the comparison of the Bellman errors (defined in \eqref{e:BError}) of the different algorithms, and we consider the case $\beta = 0.99$. 

\Fig{ZapOdFig} contains   plots of $\{\maxBE_n \}$ for the different Q-learning algorithms. For the $O(d)$~Zap-Q learning algorithms, the batch size was set to $N = 100$ ($d = 18$ in this problem). 
\spm{2018   See,  d isn't N!}
We notice in the figure that the $O(d)$ algorithm performs nearly as well as the $O(d^2)$ algorithm when the step-sizes ($\hagamma_i$) are chosen appropriately. Furthermore, the naive batching technique applied to a single-time-scale Stochastic Newton-Raphson algorithm ($\gamma_n \equiv \alpha_n$ and therefore $\hagamma_i \approx \alpha_i$) performs extremely poorly.

\begin{figure}[H]
\Ebox{0.85}{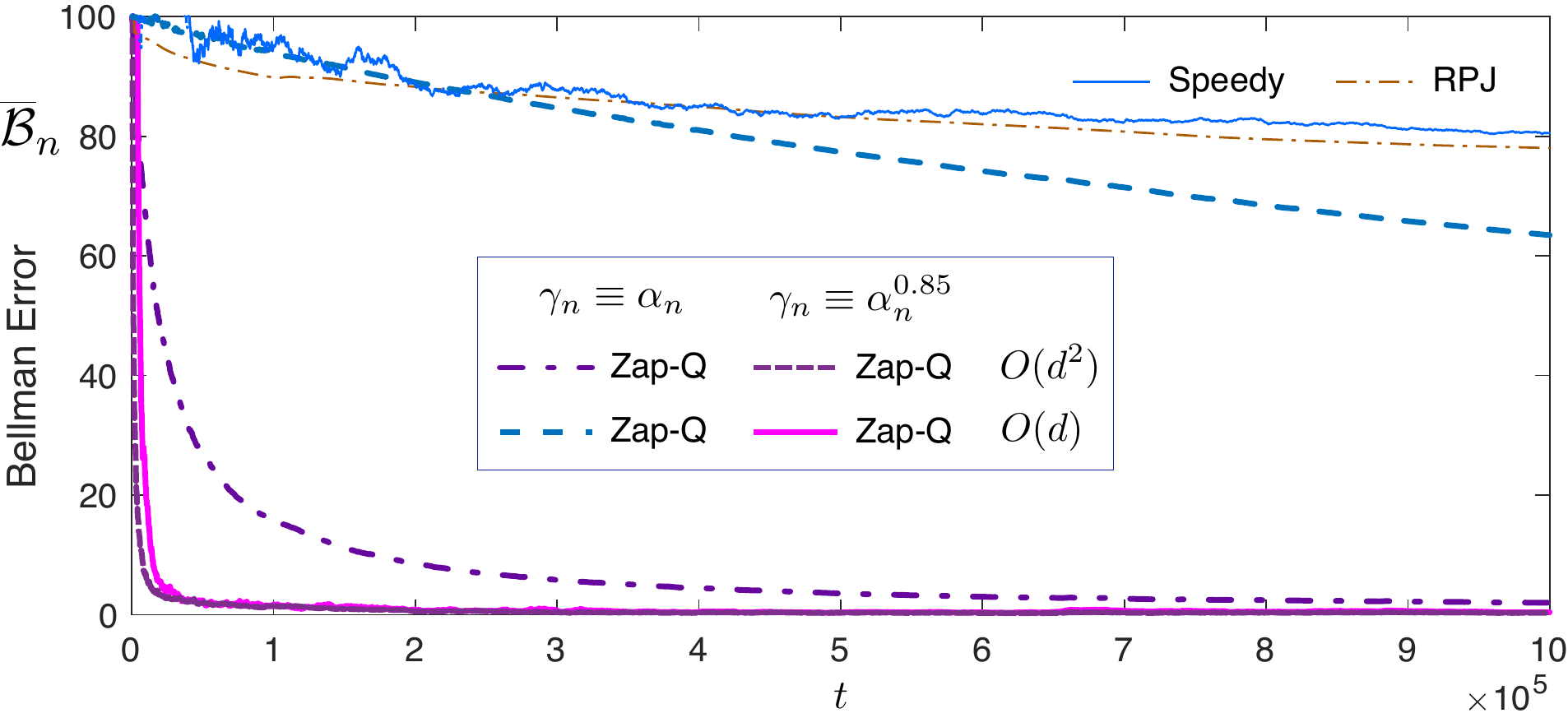}
\vspace{-.25em}
\caption{\small Maximum Bellman error $\{ \maxBE_n : n\ge 0\}$ for different Q-learning algorithms}
\label{ZapOdFig}
\end{figure}

\subsection{Finance model}

The next example is taken from \cite{tsiroy99,choroy06}.     The reader is referred to these references for complete details of the problem set-up and the reinforcement learning architecture used in this prior work.    
The example is of interest because it shows how the Zap Q-learning algorithm can be used with a more general basis, and also how the technique can be extended to optimal stopping time problems.  

The Markovian state process considered in \cite{tsiroy99,choroy06} is the vector of ratios: 
\[
X_n = (\tilp_{n-99},\tilp_{n-98}, \ldots,\tilp_{n} )^\transpose / \tilp_{n-100}\,\,, \quad n =0,1,2,\ldots\,
\]
in which $\{\tilp_t : t\in\Re\}$ is a geometric Brownian motion (derived from an exogenous price-process).
 This uncontrolled Markov chain is positive Harris recurrent on the state space $\state \equiv \Re^{100}$ \cite{MT}.

The ``time to exercise'' is modeled as a stopping time $\tau \in \bfmZ^+$.   The associated expected reward is defined as $
 \Expect [ \beta^{\tau} r(X_\tau) ]$,  with $r(X_n) \eqdef X_n(100) = \tilp_{n} / \tilp_{n-100}$ and $\beta \in (0,1)$ fixed.   The objective of finding a policy that maximizes the expected reward is modeled as an optimal stopping time problem. 

The value function is defined to be the supremum over all stopping times:
\begin{equation}
{{{h}}}^*(x) = \sup_{\tau > 0 } \Expect [ \beta^{\tau} r(X_\tau) \mid X_0 = x ].
\label{e:JstarFin}
\end{equation}
This solves the Bellman equation:  
\begin{equation}
{{{h}}}^*(x) = \max \big (r(x) , \beta \Expect [{{{h}}}^*(X_{n+1}) \mid X_n = x ] \big )\,\qquad x\in\state\,.
\label{e:FinBel}
\end{equation}
The associated Q-function is denoted 
$
Q^*(x) \eqdef \beta  \Expect [{{{h}}}^*(X_{n+1}) \mid X_n = x ]
$,
which solves a similar fixed point equation:
\[
Q^*(x) = \beta \Expect [ \max (r(X_{n+1}) , Q^*(X_{n+1}) )  \mid X_n = x  ].
\]

A stationary policy $\phi\colon \state \to \{0,1\}$ assigns an action for each state $x \in \state$ as
\[
\phi (x) = \begin{cases}  	0 & \quad \text{Do not exercise}
\\
1 &  \quad \text{Exercise}
\end{cases}
\]
Each policy $\phi$ defines a stopping time and   associated average reward, denoted
\[
\tau \eqdef \min \{n \colon \phi (X_n) = 1 \}\,,
\qquad
h_\phi(x) \eqdef \Expect [ \beta^{\tau} r(X_\tau) \mid X_0 = x ].
\]
The optimal policy is expressed as 
\archive{  I am so suspicious of this claim. We must discuss}
\[
\begin{aligned}
\phi^*(x) &= \ind \{ r(x) \ge Q^*(x) \}   
\end{aligned}
\]
The corresponding optimal stopping time that solves the supremum in \eqref{e:JstarFin} is achieved using this policy: $
\tau^* = \min \{n  \colon \phi^*(X_n) = 1 \}$~\cite{tsiroy99}.

The objective here is to find an approximation for $Q^*$ in a parameterized class  $\{Q^\theta \eqdef \theta^\transpose \psi\colon \theta \in \Re^d \}$,  where $\psi\colon \state \to \Re^d$ is a vector of basis functions.  
For a fixed parameter vector $\theta$, the associated value function is denoted  
\begin{equation}
\begin{aligned} 
h_{\phi^\theta}(x) &= \Expect [\beta^{\tau_\theta } r(X_{\tau_\theta }) \mid x_0 = x ]\,,
\\
 \text{\it where}\qquad
\tau_\theta = \min \{n \colon & \phi^\theta (X_n) = 1 \},  \qquad \phi^\theta(x) = \ind \{ r(x)  \ge Q^\theta(x) \}.
\end{aligned}
\label{e:Jtheta}
\end{equation}
The  function $h_{\phi^\theta}$ was estimated using Monte-Carlo in the numerical experiments surveyed below.

\paragraph{Approximations to the Optimal Stopping Time Problem}
To obtain the optimal parameter vector $\theta^*$, in \cite{tsiroy99} the authors apply the Q($0$)-learning algorithm:  
\begin{equation}
\theta_{n+1} = \theta_n + \alpha_{n+1} \psi(X_n) \Big[ \beta \max \big (X_{n+1}(100)  ,  Q^{\theta_n}(X_{n+1}) \big)  -  Q^{\theta_n}(X_{n} )  \Big ]\,,\quad n\ge 0\, .
\label{e:financeTD0}
\end{equation}
This is one of the few parameterized Q-learning settings for which convergence is guaranteed  \cite{tsiroy99}.

In  \cite{choroy06} the authors attempt to improve the performance of the Q($0$) algorithm through the use of the sequence of matrix gains and a special choice for the $\{\alpha_n\}$: 
\begin{equation}
G_n = \left(  \frac{1}{n} \sum_{k=1}^n \psi ( X_k ) \psi^\transpose( X_k )  \right)^{-1} g\,,\qquad \alpha_n = \frac{1}{b+n}\,,
\label{e:BVRG}
\end{equation}
where $g$ and $b$ are positive constants.
The resulting recursion is the $\bfmG$-Q($0$) algorithm:  
\[
\theta_{n+1} = \theta_n + \alpha_n  G_n \psi ( X_n ) \Big [ \beta \max \big (X_{n+1}(100)  ,  Q^{\theta_n}(X_{n+1}) \big)   -   Q^{\theta_n}(X_{n} )  \Big  ].
\]  
Through trial and error the authors find that  $g = 10^{2}$, $b = 10^{4}$ gives good performance. 
These values were also used in the experiments described in the following.   

The limiting matrix gain is given by 
\[
G = \Bigl (  \Expect [   \psi ( X_k ) \psi^\transpose( X_k )  ]  \Bigr )^{-1} g\,,
\]  
where the expectation is in steady-state.  
The asymptotic covariance $\Sigma_{\theta}^G$ is the unique positive semi-definite solution to the Lyapunov equation \eqref{e:GAINLyap}, 
provided all eigenvalues of $G A$ satisfy $\Real(\lambda) < -\half$.
	
The  Zap~Q-learning algorithm for this example is defined by the following recursion:  
\begin{eqnarray} 
	\theta_{n+1} &=& \theta_n - \alpha_{n+1}    \haA_{n+1}^{-1} \psi(X_n)  \Big [ \beta \max \big ( X_{n+1} (100) , Q^{\theta_n}( X_{n+1} )  \big )   -   Q^{\theta_n}(X_{n} )   \Big ], \nonumber
 \\
	\haA_{n+1} &=& \haA_{n}  +  \gamma_n[  A_{n+1} -  \haA_{n} ],
	\qquad  A_{n+1} =\psi ( X_n )  \varphi^\transpose(\theta_n,X_{n+1}) \,,
\label{e:A_SNR2_OS}
\\
	\varphi(\theta_n, X_{n+1}) &=& \beta \psi ( X_{n+1}) \ind{\{Q^{\theta_n}(X_{n+1} ) \geq X_{n+1} (100) \}} - \psi ( X_n ). \nonumber
\end{eqnarray}
It is conjectured that the  asymptotic covariance $\Sigma^*$ is obtained using \eqref{e:SigmaStar}, where the matrix $A$ is the limit of $\haA_{n}$:
\[
A =  \Expect \Bigl [   \psi ( X_n )  \big (\beta \psi ( X_{n+1}) \ind{\{ Q^{\theta^*}(X_{n+1} ) \geq X_{n+1}(100) \}} - \psi ( X_n ) \big)^\transpose  \Bigr ].   
\]

	 \begin{figure}[h]
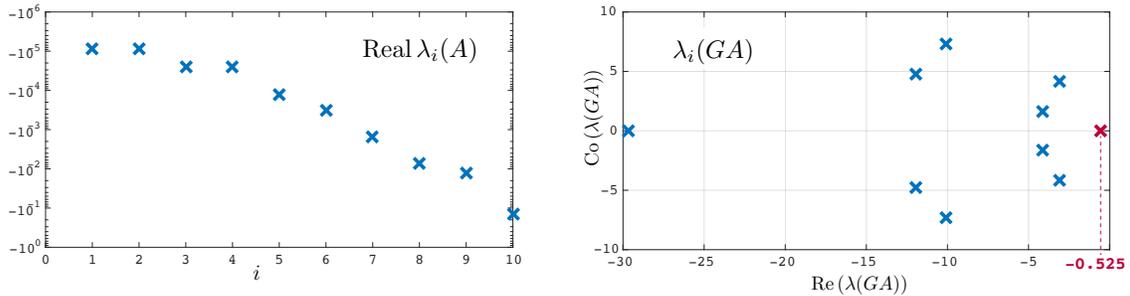
 
\Ebox{.9}{log-real-lambdaA-GA-BVRb.pdf}
 \vspace{-.75em}
	\caption{\small Eigenvalues of $A$ and $GA$ for the finance example}
 \vspace{-.57em}
\label{FinancelogEvalAAndGA}
\end{figure}

\paragraph{Experimental Results}

The experimental setting of \cite{tsiroy99,choroy06} is used to define the set of basis functions and other parameters. 
\archive{\today:  We will need to be sure there is no ambiguity about which parameters were used.   Did  
tsiroy99,choroy06 settle on these values?
\\
 were used, with volatility factor $\sigma = 0.02$ and the short term interest rate $\interest = 0.0004$. }
The dimension of the parameter vector $d$ was chosen to be $10$, with the basis functions defined in \cite{choroy06}. 
The objective here is to compare the performances of $\bfmG$-Q($0$) and the Zap-Q algorithms in terms of both parameter convergence,  and with respect to the resulting  average reward   \eqref{e:Jtheta}. 

The asymptotic covariance matrices $\Sigma^*$ and $\Sigma^G_\theta$ were estimated through the following steps:
The matrices $A$ and $G$ were estimated via Monte-Carlo.   Estimation of $A$ requires  an estimate of $\theta^*$; this was taken to be  $\theta_n$, with $n={2 \times 10^6}$,
obtained using the Zap-Q two timescale  algorithm with $\alpha_n \equiv 1/n$ and $\gamma_n \equiv \alpha_n^{0.85}$.  This estimate of $\theta^*$  was also used to estimate the  covariance matrix $\Sigma_\Delta$ defined in \eqref{e:SigmaDelta} using the batch means method.  
The matrices $\Sigma^G_\theta$ and $\Sigma^*$    were then obtained using \eqref{e:GAINLyap} and  \eqref{e:SigmaStar}, respectively.

It was found that the trace of $\Sigma^G_\theta$ was about $15$ times greater than that of $\Sigma^*$.

\paragraph{High performance despite ill-conditioned matrix gain  }


The real part of the eigenvalues of $A$ are shown on a logarithmic scale on the left-hand side of \Fig{FinancelogEvalAAndGA}.   The eigenvalues of the matrix $A$ have a wide spread:   The condition-number  is of the order $10^4$. 
This presents a challenge in applying any method.
In particular, it was found that the performance of any scalar-gain algorithm was extremely poor, even with projection of parameter estimates.

This is a consequence of the fact that the basis functions $\{\psi_i\}$ are nearly linearly dependent.    A better basis should be considered in future work, but the main objective here is to test the new methods in a challenging setting,  and to compare with prior approaches.

\begin{figure}[h]
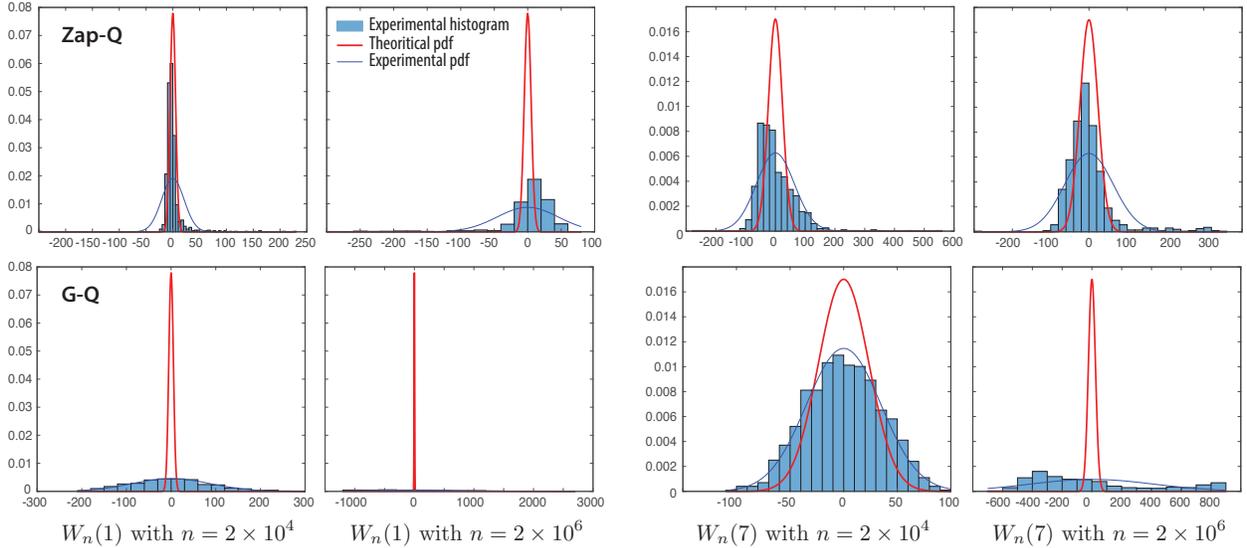

	\vspace{0in}
	\Ebox{1}{FinanceHistoZap+BVRNarrowb.pdf}
	\vspace{-1.5em}
	\caption{\small Theoretical and empirical variance for the finance example}
\label{f:FinanceHistZap+BVR}
\end{figure}


In applying the Zap~Q-learning algorithm it was found that  the estimates $\{\haA_{n}\}$ in \eqref{e:A_SNR2_OS} are nearly singular.     Despite the unfavorable setting for this approach,  the performance of the algorithm was much better than any alternative that was tested.  
The upper row of 
\Fig{f:FinanceHistZap+BVR} contains normalized histograms of $\{W_n^i(k) = \sqrt{n} (\theta_n^i(k) - \bartheta_n(k)): 1 \leq i \leq N\}$ for the Zap-Q algorithm. 
The variance for finite $n$ is close to the theoretical predictions based on the asymptotic covariance $\Sigma^*$.
The histograms were generated for two values of $n$, and $k=1, 7$. Of the $d = 10$ possibilities, the histogram for $k = 1$ had the worst match with theoretical predictions, and $k=7$ was the closest.

The eigenvalues corresponding to the matrix $G A$ are shown on the right hand side of  \Fig{FinancelogEvalAAndGA}.  It is found that one of these  eigenvalues is very close to $-0.5$,
and the sufficient condition for $\trace(\Sigma_\theta^G)<\infty$ is barely satisfied. 
It is worth stressing that the finite asymptotic covariance was not a design goal in this prior work.   It is only now on revisiting this paper that we find that the sufficient condition $\lambda<-\half$ is satisfied. 

The lower row of \Fig{f:FinanceHistZap+BVR} contains the normalized histograms of $\{W_n^i(k) = \sqrt{n} (\theta_n^i(k) - \bartheta_n(k)): 1 \leq i \leq N\}$ for the $\bfmG$-Q($0$) algorithm for $n=2 \times 10^4$ and $2 \times 10^6$, and $k = 1,7$,   along with the theoretical predictions based on the asymptotic covariance $\Sigma^G_\theta$. 
 
\archive{\todo Specify the values of $g$ in the discussion and the figure labels}

\archive{\rd{We tried larger initial conditions and it did not work very well.. too many outliers!}We agreed to remove mention of starting point.}


\begin{figure}[h]
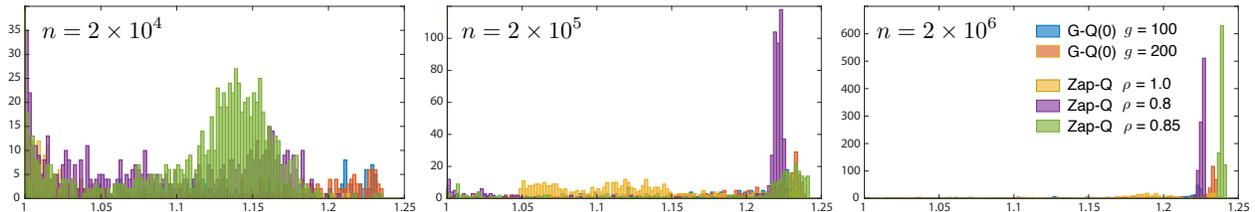

	\vspace{0in}
	\Ebox{1}{AvgReward_1e456b.pdf}
	\vspace{0in}
	\caption{\small Histograms of the average reward obtained using the $\bfmG$-Q($0$) learning and the Zap-Q-learning, $\gamma_n \equiv \alpha_n^{-\rho} \equiv n{-\rho}$}	 
\label{FinanceVarAna}
\end{figure}

\paragraph{Asymptotic variance of the discounted reward}

Denote  $h_n= h_{\phi}$, with $\phi=\phi^{\theta_n}$.
Histograms of the average reward $h_n(x)$  were obtained for $x(i) = 1$, $1 \leq i \leq 100$, and various values of $n$, based on  $N = 1000$ independent simulations.  The plots shown in \Fig{FinanceVarAna} are based on $n = 2 \times 10^k$,  for $k=4,5,6$.
Omitted in this figure are \textit{outliers}:   values of the reward in the interval $[0,1)$.   
Table~\ref{tbl:financeoutliers} lists the number of outliers for each $n$ and each algorithm.

Recall that the asymptotic covariance of the $\bfmG$-Q($0$) algorithm was not  far from optimal (its trace was about  15 times larger than obtained using Zap Q-learning).  However,  it is observed that this algorithm suffers from much larger outliers. It can also be seen that doubling the scalar gain $g$ (causing the largest eigenvalue of $GA$ to be $\approx -1$) results in slightly better performance.

\begin{table}
\small{
\centering
\begin{subtable}[t]{0.45\linewidth}
\centering
\vspace{0pt}
\begin{tabular}{llll}
\toprule
$n$ & $2 \times 10^4$   & $2 \times 10^5$ & $2 \times 10^6$ \\
\midrule
$G$-Q($0$) $g = 100$ & 827 &  775 &  680   \\  
$G$-Q($0$) $g = 200$ & 824 &  725 &  559   \\
Zap-Q $\rho = 1$     & 820 &  541 &  625   \\
Zap-Q $\rho = 0.8$   & 236 &  737 &  61    \\
Zap-Q $\rho = 0.85$  & 386 &  516 &  74    \\		
\bottomrule
\end{tabular}
\vspace{0.05in}
\caption{$h_n(x) <1$}
\vspace{0.12in}
\label{tbl:three}
\end{subtable} }
\small{
\centering
\begin{subtable}[t]{0.45\linewidth}
\centering
\vspace{0pt}
\begin{tabular}{llll}
\toprule
$n$ & $2 \times 10^4$   & $2 \times 10^5$ & $2 \times 10^6$ \\
\midrule
$G$-Q($0$) $g = 100$ & 811  & 755  & 654   \\  
$G$-Q($0$) $g = 200$ & 806  & 706  & 537   \\
Zap-Q $\rho = 1$     & 55   &  0   &  0   \\
Zap-Q $\rho = 0.8$   & 0    &  0   &  0    \\
Zap-Q $\rho = 0.85$  & 0    &  0   &  0    \\		
\bottomrule
\end{tabular}
\vspace{0.05in}
\caption{$h_n(x) < 0.95$}\label{tbl:four}
\vspace{0.12in}
\end{subtable} }
\small{
\centering
\begin{subtable}[t]{0.45\linewidth}
\centering
\vspace{0pt}
\begin{tabular}{llll}
\toprule
$n$ & $2 \times 10^4$   & $2 \times 10^5$ & $2 \times 10^6$ \\
\midrule
$G$-Q($0$) $g = 100$ & 774  & 727  & 628   \\  
$G$-Q($0$) $g = 200$ & 789  & 688  & 525   \\
Zap-Q $\rho = 1$     & 4    &  0   &  0   \\
Zap-Q $\rho = 0.8$   & 0    &  0   &  0    \\
Zap-Q $\rho = 0.85$  & 0    &  0   &  0    \\		
\bottomrule
\end{tabular}
\vspace{0.05in}
\caption{$h_n(x) < 0.75$}\label{tbl:one}
\vspace{0.12in}
\end{subtable} } \hfill
\small{
\centering
\begin{subtable}[t]{0.45\linewidth}
\centering
\vspace{0pt}
\begin{tabular}{llll}
\toprule
$n$ & $2 \times 10^4$   & $2 \times 10^5$ & $2 \times 10^6$ \\
\midrule
$G$-Q($0$) $g = 100$ & 545  & 497  & 395   \\  
$G$-Q($0$) $g = 200$ & 641  & 518  & 390   \\
Zap-Q $\rho = 1$     & 0   &  0   &  0   \\
Zap-Q $\rho = 0.8$   & 0    &  0   &  0    \\
Zap-Q $\rho = 0.85$  & 0    &  0   &  0    \\		
\bottomrule
\end{tabular}
\vspace{0.05in}
\caption{$h_n(x) <0.5$}\label{tbl:two}
\vspace{0.12in}
\end{subtable} }
\caption{Outliers observed in $N = 1000$ runs. 
Each table represents the number of runs which resulted in an average reward below a certain value }\label{tbl:financeoutliers}
\end{table}

\section{Conclusions}
\label{s:conc}

Watkins' Q-learning algorithm is elegant, but subject to two common and valid constraints:  it can be very slow
to converge, and it is not obvious how to extend this approach to obtain a stable algorithm in non-trivial parameterized settings.   This paper addresses both concerns with the new Zap~Q($\lambda$) algorithms that are motivated by asymptotic theory of stochastic approximation.

There are many avenues for future research.   It would be valuable to find an alternative to Assumption Q3 that is readily verified.   Based on the ODE analysis, it seems likely that the conclusions of \Theorem{t:ZAP} hold without this additional assumption.  No theory has been presented here for  non-ideal parameterized settings.   It is conjectured that conditions for stability of Zap~Q($\lambda$)-learning will hold under general conditions.   Consistency is a more challenging problem and is a focus of current research.

In terms of algorithm design,  it is remarkable to see how well the scalar-gain algorithms perform, provided projection is employed and the condition number of $A$ is not too large.   It is possible to estimate the optimal scalar gain based on estimates of the matrix $A$ that is central to this paper.  How to do so without introducing high complexity is an open question.

On the other hand, the performance of RPJ averaging is unpredictable.  In many experiments it is found that  the asymptotic covariance is a poor indicator of finite-$n$ performance when using this approach.    There are many suggestions in the literature for improving this technique (see discussion after Theorem~3 of \cite{moubac11}) .   

The results in this paper suggest new approaches that we hope will simultaneously
\begin{romannum}
\item Reduce complexity and potential numerical instability of matrix inversion,
\item Improve transient performance, and 
\item Maintain optimality of the asymptotic covariance
\end{romannum}

\archive{
\rd{AD: New algorithm that eliminates the invertibility issue. Q($\lambda$) learning. }
\rd{SPM: Choosing a time varying scalar gain $g_t$ recursively }
\bl{AD: Scalar gain analysis. And then maybe concentration bounds. I am curious on how the existing dimension independent algorithms depend on the A matrix. Do they somehow implicitly manage to satisfy the eigenvalue bound? Or maybe they don't use the common $1/n$ type of step-sizes?}
}
	

\bibliographystyle{abbrv}
	
\def\cprime{$'$}\def\cprime{$'$}

\null  

\appendix

\section{Appendices}

\subsection{Asymptotic covariance for Markov chains} 
\label{s:AsymCov}

As an illustration of the Lyapunov equation \eqref{e:Lyap} that is solved by the asymptotic covariance, consider the error recursion for a one-dimensional version of \eqref{e:linearSA} in which $A_n\equiv 1$,  and the algorithm is scaled by a gain parameter $g>0$: 
\[
\tiltheta(n+1) = \tiltheta(n) + \frac{g}{n+1} \Bigl( -\tiltheta(n) + \Delta(n+1)\Bigr)
\]
When $g=1$ this is a standard Monte-Carlo average.    For general $g>\half$ the Lyapunov equation admits the solution:
\[
\Sigma_\theta =  \Bigl( \frac{g^2}{2g-1}  \Bigr)\sigma^2_\Delta
\]
This grows without bound as $g\to\infty$ or $g\downarrow \half$, as illustrated below: 
\Ebox{.65}{SAvariance}

Conditions to ensure that the covariance is infinite are presented in the following:

\begin{proposition}
\label{t:CovInfinite}
Consider the linear recursion 
\begin{equation}
\theta_{n+1} = \theta_n +  \frac{G}{n+1} \bigl[A \theta_n - b +\Delta_{n+1} \bigr]
\label{e:linearSAwhite}
\end{equation} 
in which $\{\Delta_n\}$ is a martingale difference sequence satisfying 
$
\Sigma_{\Delta_n} = \Cov(\Delta_n)\to\Sigma_\Delta$ as $n\to\infty$.   
\begin{romannum}
	\item Suppose that $(\lambda,v)$ is an eigenvalue-eigenvector pair satisfying 
	$GAv =\lambda v$,  $\Real(\lambda)\ge -1/2$, and $v^\dagger G \Sigma_\Delta G^\transpose v >0$.    Then,
\[
	\lim_{n\to\infty } \sigma^2_n \eqdef  \lim_{n\to\infty }  n 
	\Expect[ |v^\transpose \tiltheta_n|^2] =\infty\, . 
\]
	\item Suppose that all the eigenvalues of $GA$ satisfy $\Real(\lambda) < -1/2$, then the asymptotic covariance $\Sigma _\theta^G$ is finite, and is obtained as a solution to the Lyapunov equation \eqref{e:GAINLyap}.
\end{romannum}
\end{proposition}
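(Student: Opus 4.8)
The plan is to reduce everything to a single \emph{deterministic} recursion for the second-moment matrix $\Sigma_n := \Expect[\tiltheta_n\tiltheta_n^\transpose]$, where $\tiltheta_n := \theta_n - \theta^*$ and $\theta^* := A^{-1}b$. First I would subtract $\theta^*$ from \eqref{e:linearSAwhite}, using $A\theta^* = b$, to get $\tiltheta_{n+1} = (I + \tfrac1{n+1}GA)\tiltheta_n + \tfrac1{n+1}G\Delta_{n+1}$; since $\{\Delta_n\}$ is a martingale-difference sequence and $\tiltheta_n$ is $\clF_n$-measurable, the cross terms drop in expectation and one is left with the \emph{exact} recursion
\[
\Sigma_{n+1} = \bigl(I + \tfrac1{n+1}GA\bigr)\Sigma_n\bigl(I + \tfrac1{n+1}GA\bigr)^{\!\transpose} + \tfrac1{(n+1)^2}\, G\Sigma_{\Delta_{n+1}}G^\transpose .
\]
Both claims are then extracted from this one recursion, according to whether the spectrum of $GA$ lies strictly to the left of the line $\Real = -\tfrac12$ or not.

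For part (i) I would pass to a scalar sequence by projecting onto the eigenspace of $(GA)^\transpose$ at $\lambda$; let $w$ be a left eigenvector of $GA$ for $\lambda$ (when $\lambda$ is simple this is the vector dual to the given $v$, and the nondegeneracy hypothesis is used in the form $w^\dagger G\Sigma_\Delta G^\transpose w > 0$), and set $s_n := w^\dagger\Sigma_n w = \Expect[|w^\transpose\tiltheta_n|^2]$. Pre- and post-multiplying the matrix recursion by $w^\dagger$ and $w$ closes it up into
\[
s_{n+1} = \Bigl|1 + \tfrac{\lambda}{n+1}\Bigr|^2 s_n + \tfrac1{(n+1)^2}\, w^\dagger G\Sigma_{\Delta_{n+1}}G^\transpose w .
\]
Since $\Real(\lambda)\ge -\tfrac12$ we have $|1+\lambda/(n+1)|^2 = 1 + \tfrac{2\Real(\lambda)}{n+1} + O(n^{-2}) \ge 1 - \tfrac1{n+1} + O(n^{-2})$, and because $\Sigma_{\Delta_n}\to\Sigma_\Delta$ the noise coefficient stays above a constant $c_0>0$ for large $n$. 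Setting $b_n := n s_n$, these bounds give $b_{n+1} \ge (1 - O(n^{-2}))b_n + c_0/(2n)$ eventually; a telescoping (discrete Grönwall) estimate then finishes part (i), since $\prod_k(1 - O(k^{-2}))$ converges to a positive constant while the additive contributions sum to a constant multiple of $\log n$, so $n\Expect[|w^\transpose\tiltheta_n|^2] = b_n \to \infty$. Hence the limit $\lim_n n\Sigma_n$ cannot be finite: the asymptotic covariance $\Sigma^G_\theta$ is not finite, and $\sigma_n^2 = n\Expect[|v^\transpose\tiltheta_n|^2]\to\infty$ as claimed.

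For part (ii) I would rescale by $V_n := n\Sigma_n$. Expanding the recursion and collecting the $1/n$-order terms (using $\tfrac1{n+1} = \tfrac1n + O(n^{-2})$) gives, with $\tilde M := GA + \tfrac12 I$ and $\clL(V) := \tilde M V + V\tilde M^\transpose$,
\[
V_{n+1} = V_n + \tfrac1n\bigl[\clL(V_n) + G\Sigma_{\Delta_{n+1}}G^\transpose\bigr] + \tfrac1{n(n+1)}\bigl[GA\, V_n (GA)^\transpose - G\Sigma_{\Delta_{n+1}}G^\transpose\bigr],
\]
a linear stochastic-approximation recursion with step $1/n$, linear part $\clL$ that is Hurwitz (every eigenvalue of $GA$ has $\Real<-\tfrac12$, so $\tilde M$ is Hurwitz, so the eigenvalues of $\clL$ — the pairwise sums of eigenvalues of $\tilde M$ — have negative real part), forcing $G\Sigma_{\Delta_{n+1}}G^\transpose \to G\Sigma_\Delta G^\transpose$, and an $O(\|V_n\|/n^2)$ perturbation. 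Since $\clL$ is invertible, the Lyapunov equation $\clL(\Sigma) + G\Sigma_\Delta G^\transpose = 0$ — i.e.\ exactly \eqref{e:GAINLyap} — has a unique solution, and it is positive semidefinite because $G\Sigma_\Delta G^\transpose \succeq 0$ and $\tilde M$ is Hurwitz. It remains to show $V_n$ converges to it: I would first establish an a priori bound on $\{\|V_n\|\}$ through a Lyapunov function for $\clL$ (which also makes the $O(\|V_n\|/n^2)$ term summable), then subtract the fixed point — $W_n := V_n - \Sigma^G_\theta$ satisfies $W_{n+1} = W_n + \tfrac1n\clL(W_n) + \tfrac1n e_n$ with $e_n\to0$ — and conclude $W_n\to0$ from the standard decay estimate for the fundamental solution of $I + \tfrac1n\clL$ together with a Toeplitz-lemma argument for the vanishing forcing $e_n$. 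Then $n\Sigma_n\to\Sigma^G_\theta$, the solution of \eqref{e:GAINLyap}.

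The algebra producing the two recursions and the scalar Grönwall bound in (i) are routine. The hard part will be in (ii): controlling the product $\prod_{k=j}^n\bigl(I + \tfrac1k\clL + O(k^{-2})\bigr)$ uniformly over $j\le n$ — equivalently, proving the a priori boundedness of $\{n\Sigma_n\}$ and its convergence to the Lyapunov solution — which is precisely where the strict bound $\Real(\lambda) < -\tfrac12$ for \emph{all} eigenvalues of $GA$ enters in full strength.
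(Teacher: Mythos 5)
Your proof is correct and follows essentially the same route as the paper's: a recursion for the second-moment matrix (the paper Taylor-expands directly for $n\Expect[\tiltheta_n\tiltheta_n^\transpose]$, you write the exact recursion for $\Expect[\tiltheta_n\tiltheta_n^\transpose]$ and then rescale), projected onto an eigendirection of $GA$ to obtain a divergent scalar recursion in part (i), and identified in part (ii) with a perturbed linear fixed-point iteration whose Hurwitz linear part forces convergence to the solution of the Lyapunov equation \eqref{e:GAINLyap}. The one wrinkle --- that the scalar recursion only closes for a \emph{left} eigenvector of $GA$, so the positivity hypothesis and the displayed conclusion really pertain to that vector rather than to the right eigenvector $v$ of the statement --- is present in the paper's own one-line derivation as well, and you actually flag and handle it more explicitly than the paper does.
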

\begin{proof}
Define $Z_n \eqdef \sqrt{n} \tiltheta_n$ and $\Sigma_n \eqdef \Expect [ Z_n Z_n^\transpose ]$. 
A standard Taylor-series approximation of $Z_n$ results in the following recursive definition of $\Sigma_n$:
\begin{equation}
\Sigma _n  =  \Sigma_{n-1} 
				+ \frac{1}{n} \Bigl\{  (GA+\half I) \Sigma (n-1)   
					+  \Sigma (n-1) (GA+\half I)^\transpose  +  \Sigma_{\Delta_n}\Bigr\} + o\Bigl( \frac{1}{n^2} \Bigr)
\label{e:Sigmanrec}
\end{equation}
The assumptions of part (i) of the proposition implies:
\[
\begin{aligned}
\sigma^2_n &= \sigma^2_{n-1} + \frac{1}{n} \Bigl\{  2(\Real(\lambda)+1/2)   \sigma^2_{n-1} +v^\dagger  \Sigma_{\Delta_n} v \Bigr\} + O\Bigl( \frac{1}{n^2} \Bigr) \\
& \geq \sigma^2_{n-1} +  \frac{1}{n} v^\dagger \Sigma_{\Delta_n} v  + O ( \frac{1}{n^2}  )
\end{aligned}
\]
 and therefore $\sigma^2_n \ge v^\dagger  \Sigma_\Delta  v \,\log n + O ( 1  )$, implying the result in part (i) of the proposition.
 Under the assumptions of part (ii), \eqref{e:Sigmanrec} implies that $\Sigma _n \to \Sigma _\theta^G$, where $\Sigma _\theta^G$ is obtained as a solution to the Lyapunov equation \eqref{e:GAINLyap}.
\end{proof}

\subsection{Proof of \Theorem{t:Qinfinite}}
\label{s:InfAsymVar}

\def\lambdaPF{\lambda_{\text{\tiny PF}}}

Recall that $\diagpie$ denotes the $d\times d$ diagonal matrix with entries $\diagpie(k,k)=\pie(x^k,u^k)$,  and   the matrix   $A$ is a function of $q$:
\[
A(q) = -\diagpie [I - \beta   P S_{\phi^{q}}]\,.
\] 
In a neighborhood of  $ \theta^* = Q^*$, the operator $S_{\phi^{q}}$ coincides with $S_{\phi^*}$, and we denote $A_\infty = A(\theta^*)$.

To prove the result we construct an eigenvector $v\in\Re^d$ for $A_\infty$ whose entries are strictly positive.  Next, under the assumptions of the proposition, we show that the corresponding eigenvalue satisfies $\lambda=\Real(\lambda)\ge -1/2$, and the result then follows from \Prop{t:CovInfinite} combined with \eqref{e:SigmaDeltaQ}.

Recall from \Lemma{t:PSphi} that $ P S_{\phi^*}$ is a $d\times d$ transition matrix, and so 
is the following 
\[
T \eqdef (1-\beta)
[I-\beta P S_{\phi^*} ]^{-1} 
= (1-\beta) \sum_{n=0}^\infty \beta^n [P S_{\phi^*}]^n 
\]
The construction of an eigenvector is via the representation $-A_\infty^{-1}  = (1-\beta)^{-1} T \diagpie^{-1}$.  Since this is a positive and irreducible matrix, we can apply Perron-Frobenius theory to conclude that there is a maximal eigenvalue $\lambdaPF>0$ and an everywhere positive eigenvector $v$ satisfying
\[
-A_\infty^{-1} v = \lambdaPF v
\]
The Perron-Frobenius eigenvalue coincides with the spectral radius of $A_\infty^{-1} $:
\[
\lambdaPF=
\lim_{n\to\infty} \| (A_\infty^{-1} )^n\|^{1/n} = 
\frac{1}{1-\beta} 
\lim_{n\to\infty} \| (T \diagpie^{-1} )^n\|^{1/n}   \ge     \frac{1}{1-\beta}   \min_{x,u} \frac{1}{\pie(x,u)} 
\]
The vector $v$ is also an eigenvector for $A_\infty$ with associated eigenvalue 
\[
\lambda = - \lambdaPF ^{-1}  \ge -(1-\beta) \max_{x,u}  \pie(x,u)\,.
\]
Thus, $\lambda \ge - \half$ under the assumptions of the proposition.  
\qed

\subsection{Proof of \Theorem{t:ZAP}}
\label{s:ZapQProof}

The remainder of the Appendix is devoted to the proof of \Theorem{t:ZAP}.

 \Lemma{t:ODEfish} is used to establish the ``ODE friendly'' property for the error sequences appearing in the ODE approximations.

\begin{proof}[Proof of \Lemma{t:ODEfish}]
Let  $H\colon\state\times\U\to\Re^{d\times d}$ solve Poisson's equation: 
\[
\Expect[H(X_{n+1}, U_{n+1})  -  H(X_{n}, U_{n})  \mid \clF_n]  = \diagpie - \Psi_n  
\]
with $\clF_n \eqdef \sigma(X_k,U_k : k\le n)$.	The following representation is immediate:
\[
\begin{aligned}
\Psi_n  P S_{\phi_{n}}  &=  \diagpie P S_{\phi_{n}}   - \Expect[H(X_{n+1}, U_{n+1})  -  H(X_{n}, U_{n})  \mid \clF_n]  P S_{\phi_{n}}
\\
&= \diagpie P S_{\phi_{n}}    +  \widehat{\clT}^\ominus_{n+1} -\widehat{\clT}_n +  \Delta^\Psi_{n+1} 
\end{aligned}
\]
where 
\[
\widehat{\clT}_n = -  H(X_{n}, U_{n})   P S_{\phi_{n}}   \,,\quad \widehat{\clT}^\ominus_{n+1}= -  H(X_{n+1}, U_{n+1})   P S_{\phi_{n}}  \,,
\]
and the final term is the martingale difference sequence:
\[ 
\Delta^\Psi_{n+1}  =  \Bigl( H(X_{n+1}, U_{n+1})  
- \Expect[H(X_{n+1}, U_{n+1})     \mid \clF_n]  \Bigr) P S_{\phi_{n}}\, .
\]
The telescoping sequence is thus,
\[
\clT_{n+1} - \clT_n =
- H(X_{n+1}, U_{n+1}) P S_{\phi_{n+1}}    +  H(X_{n}, U_{n}) P S_{\phi_{n}} 
\]
and
\[
\epsy_{n+1} = \widehat{\clT}^\ominus_{n+1} - \clT_{n+1} = -  H(X_{n+1}, U_{n+1})   P S_{\phi_{n}} + H(X_{n+1}, U_{n+1}) P S_{\phi_{n+1}},
\]
which satisfies \eqref{e:Friendly3} under Assumption~Q3.
\end{proof}

Recall the ``$o(1)$'' notation used in \eqref{e:ODEgeneral} is interpreted in a functional sense.  It is a function of two variables:   $o(1) = \clE_{T_0, T_0+t}$, satisfying for each $T>0$,
\[
\lim_{T_0\to\infty}  \sup_{0\le t\le T}   \| \clE_{T_0, T_0+t} \| =0
\]
The notation $ \clE_{T_0, T_0+t}=o(v_{T_0,t})$ for a  vector-valued function of $(T_0,t)$ has an analogous interpretation:
\[
\lim_{T_0\to\infty}  \sup_{0\le t\le T}   \frac{ \|  \clE_{T_0, T_0+t} \|}{ 1+\|v_{T_0,t}\|} =0
\] 
This notation will also be used in the standard setting in which the variable $t$ is absent.  In particular,  $ \clE_{T_0} = o(1)$ simply means that   $ \clE_{T_0} \to 0$ as $T_0\to\infty$.

Recall the definitions of  $\alpha_n$, $\gamma_n$, and  $t_n$ in \eqref{e:GAINS} and \eqref{e:tn};
$g_t$ was defined to be a piecewise linear function with $g_t = \gamma_n/\alpha_n$ when $t=t_n$ for some $n$, and extended to all $t\in\Re_+$ by   linear interpolation. Recall that  $g_t \approx \exp((1-\rho)t)$ for large $t$ under \eqref{e:GAINS}. For   fixed $T_0 > 0$, let $\clG_{T_0,(\varble)}$ denote the cumulative distribution function on the interval $[0,T_0]$:
\begin{equation}
\clG_{T_0,\tau} \eqdef \exp\bigl(-\int_\tau^{T_0} g_r\, dr\bigr)\,, \quad  \ 0\le \tau\le T_0\,
\label{e:clG}
\end{equation} 
This CDF defines a probability measure on the interval $[0,{T_0}]$ with the following properties:
\begin{lemma}
\label{t:clGProperty}
The probability measure associated with $\clG_{{T_0},(\varble)}$ has a density on $(0,{T_0}]$ and a single point mass at zero. 
Its total mass is concentrated near ${T_0}$: For any $\kappa>0$,  
\[
\begin{aligned}
\int_0^{T_0}  |{T_0}-\tau|  e^{  \kappa {T_0} }  d [\clG_{{T_0},\tau} ] & = o(1)
\\
\int_0^{T_0}  e^{  \kappa|{T_0}-\tau| }  d [\clG_{{T_0},\tau} ] & = 1 +o(1),\qquad {T_0}\to\infty
\end{aligned}
\]
\qed
\end{lemma}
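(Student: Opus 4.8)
Everything hinges on the growth rate of $g$ noted just before the lemma: since $\alpha_n=1/n$ and $\gamma_n=n^{-\rho}$ we have $g_{t_n}=\gamma_n/\alpha_n=n^{1-\rho}$, while $t_n=\sum_{i=1}^n 1/i=\log n+O(1)$, so $g_t=c_0\,e^{(1-\rho)t}(1+o(1))$ for a constant $c_0>0$. The first thing I would do is fix $r_0<\infty$ and constants $0<c_-<c_+$ with
\[
c_-\,e^{(1-\rho)r}\ \le\ g_r\ \le\ c_+\,e^{(1-\rho)r}\qquad(r\ge r_0),
\]
and record that $g$ is continuous, strictly positive, and increasing for $r\ge r_0$; this is just bookkeeping with $t_n=\log n+O(1)$ and the linear-interpolation definition of $g$. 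Granting this, the first assertion of the lemma is immediate: on $[0,T_0]$ the map $\tau\mapsto\clG_{T_0,\tau}=\exp\bigl(-\int_\tau^{T_0}g_r\,dr\bigr)$ is $C^1$ and strictly increasing, with $\clG_{T_0,T_0}=1$ and $\clG_{T_0,0}=\exp\bigl(-\int_0^{T_0}g_r\,dr\bigr)\in(0,1)$; hence as a distribution function on $[0,T_0]$ it carries an atom of mass $\clG_{T_0,0}$ at $0$ and an absolutely continuous part on $(0,T_0]$ with density $g_\tau\,\clG_{T_0,\tau}$, the total mass being $\clG_{T_0,0}+(\clG_{T_0,T_0}-\clG_{T_0,0})=1$.

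Both displayed estimates I would reduce to one first-moment bound. Integrating by parts (the boundary term and the atom at $\tau=0$ cancel),
\[
\int_0^{T_0}|T_0-\tau|\;d[\clG_{T_0,\tau}]\ =\ \int_0^{T_0}\clG_{T_0,\tau}\,d\tau .
\]
I would then split the right-hand integral at $T_0-1$. For $\tau\le T_0-1$, monotonicity gives $\clG_{T_0,\tau}\le\clG_{T_0,T_0-1}=\exp\bigl(-\int_{T_0-1}^{T_0}g_r\,dr\bigr)\le\exp\bigl(-c_-e^{(1-\rho)(T_0-1)}\bigr)$, which decays faster than any exponential in $T_0$, so that portion is at most $T_0\exp\bigl(-c_-e^{(1-\rho)(T_0-1)}\bigr)$. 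For $\tau\in[T_0-1,T_0]$ one has $\int_\tau^{T_0}g_r\,dr\ge(T_0-\tau)\,\mu_{T_0}$ with $\mu_{T_0}:=\min_{[T_0-1,T_0]}g_r\ge c_-e^{(1-\rho)(T_0-1)}$, hence
\[
\int_{T_0-1}^{T_0}\clG_{T_0,\tau}\,d\tau\ \le\ \int_0^1 e^{-\mu_{T_0}s}\,ds\ \le\ \mu_{T_0}^{-1}\ =\ O\!\bigl(e^{-(1-\rho)T_0}\bigr).
\]
Thus $\int_0^{T_0}|T_0-\tau|\,d[\clG_{T_0,\tau}]=O\bigl(e^{-(1-\rho)T_0}\bigr)$, and multiplying by $e^{\kappa T_0}$ gives the first display.

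For the second display I would use $|T_0-\tau|=T_0-\tau$ on $[0,T_0]$ and total mass $1$ to write
\[
\int_0^{T_0}e^{\kappa|T_0-\tau|}\,d[\clG_{T_0,\tau}]\ =\ 1+\int_0^{T_0}\bigl(e^{\kappa(T_0-\tau)}-1\bigr)\,d[\clG_{T_0,\tau}],
\]
and split the excess at $T_0-1$ once more. On $[T_0-1,T_0]$, $e^{\kappa(T_0-\tau)}-1\le\kappa e^{\kappa}(T_0-\tau)$, so by the first-moment bound this piece is $O(e^{-(1-\rho)T_0})$. On $[0,T_0-1]$, $e^{\kappa(T_0-\tau)}-1\le e^{\kappa T_0}$ while the $\clG_{T_0,\cdot}$-mass there is at most $\clG_{T_0,T_0-1}\le\exp\bigl(-c_-e^{(1-\rho)(T_0-1)}\bigr)$; the latter decays faster than $e^{-\kappa T_0}$, so this piece is $o(1)$ as well. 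Hence the excess over $1$ vanishes.

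The only genuinely delicate point is the very first step: turning the asymptotic equivalence $g_t\sim c_0e^{(1-\rho)t}$ into the explicit two-sided bounds and the eventual-monotonicity claim, which is where the $o(1)$ (as opposed to exact) form of the conclusions originates. Once that is in hand, the argument is just the elementary observation that a nonnegative random variable whose tail below $T_0$ satisfies $\clG_{T_0,T_0-s}\le e^{-\mu_{T_0}s}$ with $\mu_{T_0}$ of order $e^{(1-\rho)T_0}$ is concentrated in a window of width $O(e^{-(1-\rho)T_0})$ about $T_0$, so that every weight appearing in the lemma integrates, up to $o(1)$, to its value at $\tau=T_0$.
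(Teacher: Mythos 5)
Your description of the measure's structure and your proof of the \emph{second} display are sound: the identity $\int_0^{T_0}(T_0-\tau)\,d[\clG_{T_0,\tau}]=\int_0^{T_0}\clG_{T_0,\tau}\,d\tau$, the split at $T_0-1$, and the observation that $\clG_{T_0,T_0-1}\le\exp\bigl(-c_-e^{(1-\rho)(T_0-1)}\bigr)$ beats any singly-exponential weight are all correct, and they do yield $\int_0^{T_0}e^{\kappa|T_0-\tau|}\,d[\clG_{T_0,\tau}]=1+o(1)$. (The paper states this lemma without proof, so there is no argument to compare against; yours is the natural one.)

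The gap is in the last line of your treatment of the \emph{first} display. You establish $\int_0^{T_0}|T_0-\tau|\,d[\clG_{T_0,\tau}]=O\bigl(e^{-(1-\rho)T_0}\bigr)$ and then ``multiply by $e^{\kappa T_0}$,'' but $e^{\kappa T_0}\cdot O\bigl(e^{-(1-\rho)T_0}\bigr)=O\bigl(e^{(\kappa-(1-\rho))T_0}\bigr)$, which is $o(1)$ only when $\kappa<1-\rho$, whereas the lemma asserts the conclusion for every $\kappa>0$ and $1-\rho<\tfrac12$. Moreover your first-moment bound is sharp, so no refinement can rescue the step: for $0\le s\le \tfrac12 g_{T_0}^{-1}$ one has $\int_{T_0-s}^{T_0}g_r\,dr\le s\,g_{T_0}\le\tfrac12$, hence $\clG_{T_0,T_0-s}\ge e^{-1/2}$ and $\int_0^{T_0}\clG_{T_0,\tau}\,d\tau\ge \tfrac12 e^{-1/2}g_{T_0}^{-1}\asymp e^{-(1-\rho)T_0}$. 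Consequently $\int_0^{T_0}|T_0-\tau|\,e^{\kappa T_0}\,d[\clG_{T_0,\tau}]$ is of exact order $e^{(\kappa-(1-\rho))T_0}$ and diverges for $\kappa>1-\rho$: the first display cannot be proved as written for arbitrary $\kappa$, and your own estimates demonstrate this, so the issue should have been surfaced rather than absorbed into an ``$o(1)$.'' The statement is salvageable only if the constant weight $e^{\kappa T_0}$ is read as $e^{\kappa|T_0-\tau|}$ (matching the second display, in which case your argument there applies verbatim and gives $O\bigl(e^{-(1-\rho)T_0}\bigr)=o(1)$), or if $\kappa$ is restricted to $(0,1-\rho)$; since the lemma is later invoked in the proof of \Lemma{t:Aq1} with $\kappa=b_q$ unrestricted, the discrepancy is not cosmetic.
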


An associated pmf on the set of policy indices is defined as follows:
\begin{equation}
\barmu_{T_0}(k) \eqdef \int_0^{T_0}   \ind\{  \phi^{\barq_\tau} =  \phi^{(k)}   \} \,  d [\clG_{{T_0},\tau} ] \,,
\quad 1\le k\le  \nphi
\label{e:barmu}
\end{equation}

\begin{lemma}
\label{t:haA_Bdd_And_ODE_sol}
The linear systems representation  \eqref{e:preODE_Zap_A} holds:
\[
\barclA_{T_0 +t} = \barclA_{T_0}   
-   \int_{T_0}^{T_0+t }\bigl\{    \diagpie  [I-\beta P S_{\phi^{\barq_\tau}} ]  + \barclA_\tau    \bigr\} g_\tau  \, d\tau   +  o(1)\,, \quad  T_0\to\infty \, .
\]
Furthermore,
\begin{romannum}
\item 
There exists $T_\bullet \geq 0$ satisfying $T_\bullet < \infty$ a.s., 
and for which
the processes
 $\{ \barclA_t: t \geq T_\bullet \}$ and $\{ \barclA_t ^{-1} : t \geq T_\bullet \}$ are bounded:
\begin{equation}
b_A  \eqdef   \sup_{t\ge T_\bullet}  \{ \| \barclA_t \| + \|\barclA_t ^{-1}\| \} <\infty\,,\quad a.s..
\label{e:bA}
\end{equation}
\item
For each $t,T_0 \ge 0$, 
\begin{equation*}
\barclA_{T_0 + t} = \clG_{T_0+t,T_0}\barclA_{T_0 }
- 
\int_{T_0}^{T_0 + t}      \diagpie  [I-\beta P S_{\phi^{\barq_\tau}} ] \, d[ \clG_{T_0+t,\tau}]  + o(1)
,\qquad T_0\to \infty\,
\end{equation*}
\item  With   $ \partialQstar_{\mu} $ defined in \eqref{e:partiakQmu}, and $\barmu_t$ defined in \eqref{e:barmu}, the following representation 
holds:
\begin{equation}
\partialQstar_{\barmu_t}^{-1} 
 = \int_0^t     [I-\beta P S_{\phi^{\barq_\tau}} ]  \,  d [\clG_{t,\tau} ]. 
\label{e:partialQstarbarmu}
\end{equation}
\item
The following approximations hold:
\begin{equation}
\begin{aligned}
\barclA_t  =  - \diagpie \partialQstar_{\barmu_t}^{-1}  + o(1) \quad \text{\it and}\quad
- \barclA_t^{-1} \diagpie  = \partialQstar_{\barmu_t}  + o(1)
,\qquad t\to \infty\,.
\end{aligned}
\label{e:AtpartialQstarbarmut}
\end{equation}
\end{romannum}
\end{lemma}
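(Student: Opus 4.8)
The plan is to derive every assertion from the single recursion for the matrix gain isolated in \Lemma{t:pre-ODE},
\[
\haA_{n+1} = \haA_n + \gamma_{n+1}\bigl[-\diagpie[I-\beta P S_{\phi^{\theta_n}}] - \haA_n + \clE^A_{n+1}\bigr],
\]
which, crucially, does not involve $\haA_n^{-1}$, so boundedness of $\{\haA_n\}$ comes first and for free. Writing $M_n=\diagpie[I-\beta P S_{\phi^{\theta_n}}]$, the policies live in a finite set so $\sup_n\|M_n\|<\infty$, and the martingale part of the ODE-friendly sequence $\{\clE^A_n\}$ is built from bounded functions on the finite state space (\Lemma{t:ODEfish}) while its telescoping and $\epsy$-parts are bounded, whence $\sup_n\|\clE^A_n\|<\infty$ a.s. Since $0\le\gamma_{n+1}\le 1$, the update is a convex combination: $\|\haA_{n+1}\|\le\max\{\|\haA_n\|,\|\clE^A_{n+1}-M_n\|\}$, so $\{\haA_n\}$ is a.s.\ bounded, and the piecewise-linear interpolation $\{\barclA_t\}$ (grid $t_n$ as in \eqref{e:tn}) is a.s.\ bounded on all of $\Re_+$.

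Next I would establish the integral equation \eqref{e:preODE_Zap_A} by the usual ODE-method bookkeeping, carrying the nonstandard effective gain $g_t=\gamma_n/\alpha_n$. Summing the recursion over a block $[t_{n_0},t_{n_0+m}]$ and dividing the $k$-th increment by $t_{k+1}-t_k=\alpha_{k+1}$ produces a Riemann sum for $-\int g_\tau\{M_\tau+\barclA_\tau\}\,d\tau$; the martingale part of $\sum_k\gamma_{k+1}\clE^A_{k+1}$ converges a.s.\ (bounded conditional second moments and $\sum\gamma_k^2<\infty$, using $\rho>\tfrac12$), the telescoping part is handled by summation by parts using $\gamma_k\to0$ and $\sum|\gamma_{k+1}-\gamma_k|<\infty$, and the $\epsy$-part is summable by \eqref{e:Friendly3}. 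The one extra point is that on an interpolation interval $\phi^{\barq_\tau}$ may differ from $\phi_n$; but $\|\barq_\tau-\theta_n\|=O(\alpha_{n+1})$, so this occurs only near a policy switch, and Assumption~Q3 bounds the total $\gamma$-weight of such intervals, so the discrepancy is absorbed into $o(1)$.

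Given \eqref{e:preODE_Zap_A}, part (ii) is obtained by solving this time-inhomogeneous linear equation with integrating factor $\Lambda_{T_0,s}=\exp\!\bigl(\int_{T_0}^s g_r\,dr\bigr)$. Since $\tfrac{d}{ds}\clG_{T_0+t,s}=g_s\clG_{T_0+t,s}$ and $\Lambda_{T_0,s}\Lambda_{T_0,T_0+t}^{-1}=\clG_{T_0+t,s}$, multiplying through by $\clG_{T_0+t,T_0}$ and integrating by parts turns the forcing integral into $\int_{T_0}^{T_0+t}\diagpie[I-\beta P S_{\phi^{\barq_\tau}}]\,d[\clG_{T_0+t,\tau}]$ and turns the $o(1)$ error of \eqref{e:preODE_Zap_A} into $\clE_{T_0,T_0+t}-\int\clE_{T_0,\tau}\,d[\clG_{T_0+t,\tau}]$, still $o(1)$ because $\clG_{T_0+t,\cdot}$ is a probability measure and $\sup_\tau\|\clE_{T_0,\tau}\|\to0$. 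Part (iii) is the algebraic identity $\sum_k\barmu_t(k)[I-\beta P S_{\phi^{(k)}}]=\int_0^t[I-\beta P S_{\phi^{\barq_\tau}}]\,d[\clG_{t,\tau}]$, immediate from the definition \eqref{e:barmu} of $\barmu_t$, from $\phi^{\barq_\tau}$ equalling exactly one $\phi^{(k)}$ for each $\tau$, and from $\int_0^t d[\clG_{t,\tau}]=1$. For (iv) I would specialize (ii) to blocks $[s-T,s]$ with $T$ fixed and $s\to\infty$: the memory term $\clG_{s,s-T}\barclA_{s-T}$ and the tail $\int_0^{s-T}\diagpie[\cdot]\,d[\clG_{s,\tau}]$ are each $O(\clG_{s,s-T})=o(1)$ since $g\to\infty$ forces $\clG_{s,s-T}\to0$, leaving $\barclA_t=-\int_0^t\diagpie[\cdot]\,d[\clG_{t,\tau}]+o(1)=-\diagpie\partialQstar_{\barmu_t}^{-1}+o(1)$ by (iii). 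The second half of (iv) follows by inversion: $\partialQstar_\mu=[I-\beta P S_{\bar\phi_\mu}]^{-1}$ satisfies $\|\partialQstar_\mu\|\le(1-\beta)^{-1}$ uniformly over pmfs and $\diagpie$ is invertible by Q1, so writing $\barclA_t=-\diagpie\partialQstar_{\barmu_t}^{-1}(I-\partialQstar_{\barmu_t}\diagpie^{-1}\clE_t)$ one inverts the uniformly well-conditioned factors once $\|\clE_t\|$ is small; the same estimate gives (i), with $T_\bullet$ the a.s.\ finite time past which $\|\clE_t\|$ is small, and $b_A\le\|\diagpie\|(1+\beta)+1+2(1-\beta)^{-1}\|\diagpie^{-1}\|$.

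The main obstacle is the middle step: making the ODE approximation \eqref{e:preODE_Zap_A} rigorous when the effective gain $g_\tau\approx e^{(1-\rho)\tau}$ blows up, so that the integrating factor $\Lambda_{T_0,s}$ is unbounded and crude Gronwall estimates are useless. The resolution is exactly the concentration of the measure $\clG$ near its right endpoint quantified in \Lemma{t:clGProperty}, which simultaneously keeps the propagated error $o(1)$ and forces the influence of the initial condition $\barclA_{T_0}$ and of stale noise to decay — this is what makes $\barclA_t$ track the (discontinuous) target $-\diagpie[I-\beta P S_{\phi^{\barq_t}}]$ in the averaged sense expressed through $\barmu_t$.
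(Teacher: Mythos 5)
Your proposal is correct and follows the same architecture as the paper's proof: derive the pre-ODE integral equation from \Lemma{t:pre-ODE} and \Lemma{t:ODEfish}, solve the resulting linear time-varying system by variation of constants with kernel $\clG_{T_0+t,\tau}$, control the propagated noise by integration by parts against the probability measure $d[\clG_{T_0+t,\tau}]$, identify the forcing integral with $\diagpie\partialQstar_{\barmu_t}^{-1}$ via the definition \eqref{e:barmu}, and finally invert a uniformly well-conditioned perturbation to get boundedness of $\{\barclA_t^{-1}\}$. The one genuine divergence is where boundedness of $\{\barclA_t\}$ comes from: you obtain it upfront and discretely, writing the matrix update as the convex combination $\haA_{n+1}=(1-\gamma_{n+1})\haA_n+\gamma_{n+1}[\clE^A_{n+1}-M_n]$ with a uniformly bounded second argument, whereas the paper first passes to the continuous-time representation \eqref{e:SolpreODE_Zap_A} and bounds its three terms (initial-condition term, forcing term bounded by $b_\clA$ since $\clG$ is a probability measure, and error term bounded by $2\sup\|\clE^A\|$ after integration by parts). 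Your route is more elementary and decouples boundedness from the ODE approximation entirely; the paper's route gets boundedness as a free by-product of the representation it needs anyway for parts (ii)--(iv). Both are sound, and your closing observation --- that the concentration of $\clG$ near its right endpoint (\Lemma{t:clGProperty}) is what simultaneously kills the memory of $\barclA_{T_0}$ and of stale noise despite the unbounded gain $g_\tau$ --- is exactly the mechanism the paper exploits.
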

\begin{proof}
The representation  \eqref{e:preODE_Zap_A} directly follows from Lemmas~\ref{t:pre-ODE} and~\ref{t:ODEfish}.
\archive{\rd{to discuss $T_0 > 0$}
\\
I guess we need to discuss.  Remember to delete all these comments once this is settled [to me, it is all settled!]
\\
\rd{\Large Is this settled??
\\
\bl{AD: It is all fine.. I just wanted to know why $>0$.. I will archive it for now, since it doesn't really affect anything..}}
}
For $T_0>0$, the solution to this linear time varying system is the sum of three terms:
\begin{equation}
\begin{aligned}
\barclA_{T_0 + t} &=\clI_{T_0}^A(t)   -    \clI_{T_0}^B(t)  -  \clI_{T_0}^E(t) 
\\[.2cm]
&  \eqdef
\clG_{T_0+t,T_0} \barclA_{T_0} 
- 
\int_{T_0}^{T_0 + t} \diagpie \partialQstar_{k_\tau}^{-1}\, d[\clG_{T_0+t,\tau}]  
-
\int_{T_0}^{T_0 + t} \clG_{T_0+t,\tau}  \, d \clE_{T_0, T_0+\tau}^A\,,
\end{aligned}
\label{e:SolpreODE_Zap_A}
\end{equation}
in which  $\clE_{T_0, T_0+\tau}^A = o(1)$,  $T_0\to\infty$ and
\[
\partialQstar_{k} \eqdef [I-\beta P S_{\phi^{(k)}}]^{-1} \,, \quad 1 \leq k \leq \nphi
\]
and for each $t \geq 0$, $k_t$ is the integer satisfying $\phi^{(k_t)} = \phi^{\barq_t}$.    

We begin with a proof of boundedness of $\{ \barclA_t: t \geq 0 \}$,   considering the three terms in \eqref{e:SolpreODE_Zap_A} separately.  
The first term $\clI_{T_0}^A(t) $ vanishes as $t\to\infty$ for each fixed $T_0$.   The second term admits the uniform bound:  
\begin{equation}
\| \clI_{T_0}^B(t)  \|\le b_\clA \int_{T_0}^{T_0 + t}    \, d[\clG_{T_0+t,\tau} ]  \le  b_\clA
\label{e:clIBBd}
\end{equation}
in which $b_\clA\eqdef \max_k \|\diagpie  [I-\beta P S_{\phi^{(k)}} ]\|$ is an upper bound on    $\|\diagpie \partialQstar_{k_\tau}^{-1}\| $.  
It is shown next that the final term $ \clI_{T_0}^E(t) $ converges to zero as $T_0\to\infty$. 
Applying integration by parts:
\[
\begin{aligned}
\clI_{T_0}^E(t)
&   =  
\clG_{T_0+t,T_0+t} \clE_{T_0, T_0+t}^A  
-  \clG_{T_0+t,T_0} \clE_{T_0, T_0}^A  
- \int_{T_0}^{T_0 + t}  \clE_{T_0, \tau}^A  \, d[\clG_{T_0+t,\tau}    ]
\\
&=   \clE_{T_0, T_0+t}^A -  \int_{T_0}^{T_0 + t}  \clE_{T_0, \tau}^A  g_\tau \clG_{T_0+t,\tau}  \, d\tau
\end{aligned}
\]
where the second equation used $ \clG_{T_0+t,T_0+t}=1$ and  $\clE_{T_0, T_0}^A  =0$.  Using this identity and the same arguments used to bound $ \clI_{T_0}^B(t)$ then gives 
\begin{equation}  
\|  \clI_{T_0}^E(t) \| \le 2   \|\clE_{T_0, T_0+t}^A \|   =o(1)\,,\quad T_0\to\infty\, .
\label{e:clIEBd}
\end{equation} 
Using \eqref{e:clIBBd} and \eqref{e:clIEBd} in \eqref{e:SolpreODE_Zap_A} establishes boundedness of $\{ \barclA_t \}$:
\[
\limsup_{r\to\infty}\| \barclA_r\| = \limsup_{T_0\to\infty} \sup_{0\le t\le T} \| \barclA_{T_0 + t} \|   \le  b_\clA
\]
and hence also boundedness of the sequence $\{\haA_n: n \geq 0\}$.
\archive{\bl{Note to AD from Kauai:  happy now?
	\\
	\huge
VERY! :D}
}

The representation \eqref{e:SolpreODE_Zap_A}  along with \eqref{e:clIEBd} also implies the evolution equation in part (ii) of the lemma. It is shown next that \eqref{e:AtpartialQstarbarmut} holds.  This will imply that $\{\barclA_t^{-1}: t \geq T_\bullet\}$ is bounded, which will complete the proof of the lemma.

The approximation \eqref{e:SolpreODE_Zap_A} was obtained for $T_0>0$.   \Lemma{t:clGProperty} implies that we can let $T_0\downarrow 0$  in this bound to obtain,
\begin{equation}
\begin{aligned}
\barclA_t &= \clG_{t,0}\barclA_0
- 
\int_{0}^{t}      \diagpie  [I-\beta P S_{\phi^{\barq_\tau}} ] \, d[\clG_{t,\tau} ]  + o(1)
,\qquad t\to \infty
\\
&=  - 
\int_{0}^{t}      \diagpie  [I-\beta P S_{\phi^{\barq_\tau}} ] \, d[\clG_{t,\tau} ]  + o(1)
,\qquad t\to \infty\,.
%
\end{aligned}
\label{e:haA_ODE_sol}
\end{equation}
Next, based on the definition of $\barmu_t$ in \eqref{e:barmu}, the representation \eqref{e:partialQstarbarmu} for $\partialQstar_{\barmu_t}^{-1}$ is obtained:
\begin{equation*}
\begin{aligned}
\partialQstar_{\barmu_t}^{-1} 
& \eqdef \sum_{i = 1}^{\ell_\phi} \barmu_t(i)  [I-\beta P S_{\phi^{(i)}}]
\\
& = \int_0^t \sum_{i = 1}^{\ell_\phi}  \ind\{  \phi^{\barq_\tau} =  \phi^{(i)}   \} [I-\beta P S_{\phi^{(i)}}] \,  d [\clG_{t,\tau} ]
\\
& = \int_0^t     [I-\beta P S_{\phi^{\barq_\tau}} ]  \,  d [\clG_{t,\tau} ]. 
\end{aligned}
\end{equation*}
Combining the above result with \eqref{e:haA_ODE_sol},  \eqref{e:AtpartialQstarbarmut} is obtained.
This also implies that $\{\barclA^{-1}_t : t \geq T_\bullet \}$ is bounded.
\end{proof}

\begin{lemma}
\label{t:barqtexpbd}
The linear systems representation holds:
\begin{equation}
\barq_{T_0 +t} = \barq_{T_0}
-   \int_{T_0}^{T_0+t }  \barclA_\tau^{-1}  \diagpie  \bigl\{c - \barqcost_\tau    + o(\|\barq_{\tau} \|) \bigr\}  \, d\tau   + o(1)   \,, \quad  T_0\to\infty \,
\label{e:prepreODE_Zap_q}
\end{equation}
Furthermore, for a constant $b_q  < \infty$, and each $t,T_0 \ge 0$, 
\begin{subequations}
\begin{align}
\|\barq_{T_0 + t} - \barq_{T_0} \| &  \le b_q [1 + \| \barq_{T_0} \| ]  t e^{b_q t}  +  o(1)  
,\qquad T_0 \to \infty\,
\label{e:barqtexpbd}
\\
\| \barq_{t} \|   &  \le  b_q [1 + \| \barq_{0} \| ]  t e^{b_{q} t}   
,\qquad t \to \infty\,
\label{e:barqtexpbd2}
\end{align}
\end{subequations}
\end{lemma}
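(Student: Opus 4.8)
The plan is to run the ODE method on the recursion of \Lemma{t:pre-ODE}. Since ${\Qstar}^{-1}(q)=[I-\beta P S_{\phi^q}]q$ (this is how ${\Qstar}^{-1}$ acts, cf.\ \eqref{e:DCOE-Qkappa} and \eqref{e:haAnThetan}), we have $\barqcost_{t_n}={\Qstar}^{-1}(\theta_n)$, so the first line of \eqref{e:SNR2linearSA_preODE} reads $\theta_{n+1}-\theta_n=-\alpha_{n+1}\haA_{n+1}^{-1}\bigl[\diagpie(c-\barqcost_{t_n})+\clE^A_{n+1}\theta_n+\clE^q_{n+1}\bigr]$. Summing over the indices $k$ with $t_k\in[T_0,T_0+t]$ and comparing with the integral in \eqref{e:prepreODE_Zap_q} leaves three error contributions: (a) a Riemann-sum error from treating $\barclA^{-1}_\tau$ and $\barqcost_\tau$ as piecewise constant on each $[t_k,t_{k+1}]$, dominated by $\sum_k\alpha_{k+1}\bigl(O(\gamma_{k+1})+O(\|\theta_{k+1}-\theta_k\|)\bigr)$; (b) the ``clean'' noise $\sum_k\alpha_{k+1}\haA_{k+1}^{-1}\clE^q_{k+1}$; and (c) the ``iterate-coupled'' noise $\sum_k\alpha_{k+1}\haA_{k+1}^{-1}\clE^A_{k+1}\theta_k$. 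Throughout I would use $\|\haA_n^{-1}\|\le b_A$, valid for $t_n\ge T_\bullet$ by \Lemma{t:haA_Bdd_And_ODE_sol} and for $t_n<T_\bullet$ by the projection built into the algorithm.

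Term (b) is $o(1)$. Splitting $\haA_{k+1}^{-1}=\haA_k^{-1}+(\haA_{k+1}^{-1}-\haA_k^{-1})$ with $\|\haA_{k+1}^{-1}-\haA_k^{-1}\|=O(\gamma_{k+1})$, I would treat the $\clF_k$-measurable main piece via the ODE-friendly decomposition $\clE^q_{k+1}=\Delta^q_{k+1}+\clT^q_{k+1}-\clT^q_k+\epsy^q_{k+1}$: the $\epsy^q$-part is summable since $\alpha_n\le\gamma_n$ and $\sum_n\gamma_n\|\epsy^q_n\|<\infty$ by \eqref{e:Friendly3}; the telescoping part succumbs to Abel summation, using $\sum_n\|\alpha_{n+1}\haA_{n+1}^{-1}-\alpha_n\haA_n^{-1}\|<\infty$ (from $\alpha_n\downarrow0$, $\|\haA_n^{-1}\|$ bounded, and $\sum_n\alpha_{n+1}\gamma_{n+1}<\infty$ under \eqref{e:GAINS}); and the martingale part converges a.s.\ and uniformly over the window by Doob's inequality and $\sum_n\alpha_n^2<\infty$. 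The residual $O(\gamma_{k+1})$ piece contributes $\sum_k\alpha_{k+1}\gamma_{k+1}\|\clE^q_{k+1}\|$, which is a.s.\ finite.

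The crux is (c): here $\theta_k$ is not yet known to be bounded --- that is exactly what the outer stability argument is being set up to prove --- so this sum need not converge, which is precisely why the $o(\|\barq_\tau\|)$ slack appears in \eqref{e:prepreODE_Zap_q}. The plan is to show that over any window (a) and (c) together are dominated by $\epsilon(T_0)\bigl(1+\sup_{0\le s\le t}\|\barq_{T_0+s}\|\bigr)$ with $\epsilon(T_0)\to0$. For the $\clT^A$- and $\epsy^A$-parts of $\clE^A$ and for (a), this follows by Abel summation and absolute summability after factoring out $\sup_k\|\theta_k\|$, using the crude increment bound $\|\theta_{k+1}-\theta_k\|\le\alpha_{k+1}b_A\bigl(C(1+\|\theta_k\|)+\|\clE^A_{k+1}\|\,\|\theta_k\|+\|\clE^q_{k+1}\|\bigr)$ together with $\sum_k\alpha_{k+1}^2\|\Delta^A_{k+1}\|<\infty$ a.s. For the genuine martingale part $\sum_k\alpha_{k+1}\haA_k^{-1}\Delta^A_{k+1}\theta_k$ (a martingale transform, since $\alpha_{k+1}\haA_k^{-1}$ and $\theta_k$ are $\clF_k$-measurable and $\Expect[\Delta^A_{k+1}\mid\clF_k]=0$), I would localize at $\sigma_R=\inf\{n:\|\theta_n\|>R\}$: the stopped sum has quadratic variation $\le b_A^2\bar\sigma^2_\Delta R^2\sum_{k\ge n}\alpha_{k+1}^2$, so Doob's maximal inequality gives a uniform $o(1)\cdot R$ bound, and a dyadic union over $R$ converts this to $o(1)\bigl(1+\sup_k\|\theta_k\|\bigr)$; the off-diagonal piece $(\haA_{k+1}^{-1}-\haA_k^{-1})\Delta^A_{k+1}\theta_k$ is controlled as in (b) using $\sum_k\alpha_{k+1}\gamma_{k+1}\|\Delta^A_{k+1}\|<\infty$. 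Collecting (a)--(c) establishes \eqref{e:prepreODE_Zap_q} in the form $\barq_{T_0+t}=\barq_{T_0}-\int_{T_0}^{T_0+t}\barclA^{-1}_\tau\diagpie(c-\barqcost_\tau)\,d\tau+\clR_{T_0,t}$ with $\sup_{0\le s\le t}\|\clR_{T_0,s}\|\le\epsilon(T_0)\bigl(1+\sup_{0\le s\le t}\|\barq_{T_0+s}\|\bigr)$.

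Finally, \eqref{e:barqtexpbd} and \eqref{e:barqtexpbd2} follow from Gr\"onwall's inequality. With $\|\barclA^{-1}_\tau\|\le b_A$ and $\|\barqcost_\tau\|=\|[I-\beta P S_{\phi^{\barq_\tau}}]\barq_\tau\|\le b_P\|\barq_\tau\|$, $b_P\eqdef\max_k\|I-\beta P S_{\phi^{(k)}}\|$, the integral equation gives $\|\barq_{T_0+s}\|\le\|\barq_{T_0}\|+b_A\|c\|t+b_Ab_P\int_0^s\|\barq_{T_0+r}\|\,dr+\epsilon(T_0)(1+M)$ for $0\le s\le t$, where $M=\sup_{0\le s\le t}\|\barq_{T_0+s}\|$; Gr\"onwall yields $\|\barq_{T_0+s}\|\le\bigl(\|\barq_{T_0}\|+b_A\|c\|t+\epsilon(T_0)(1+M)\bigr)e^{b_Ab_Ps}$, and for $T_0$ large enough (relative to the fixed $t$) that $\epsilon(T_0)e^{b_Ab_Pt}\le\tfrac12$, solving for $M$ gives $M\le C(1+\|\barq_{T_0}\|)(1+t)e^{b_Ab_Pt}$; bounding $\|\barq_{T_0+t}-\barq_{T_0}\|$ by the integral then yields \eqref{e:barqtexpbd}. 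For \eqref{e:barqtexpbd2} I would instead start the estimate at $T_0=T_\bullet$ and bound $\|\barq_{T_\bullet}\|$ over the bounded interval $[0,T_\bullet]$ by an a.s.-finite multiple of $1+\|\barq_0\|$, absorbing all constants into a (possibly random, a.s.\ finite) $b_q$. I expect the localization-plus-dyadic-union martingale estimate in step (c) --- making precise the sense in which noise coupled to the not-yet-bounded iterate is negligible --- to be the main obstacle; the rest is Abel summation, Gr\"onwall, and bookkeeping.
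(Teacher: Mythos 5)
Your proposal follows the same route as the paper's proof: the integral representation \eqref{e:prepreODE_Zap_q} is read off from the decomposition in Lemmas~\ref{t:pre-ODE} and~\ref{t:ODEfish} (the paper simply asserts this step, whereas you fill in the Riemann-sum, Abel-summation, and martingale estimates, correctly identifying the iterate-coupled term $\clE^A_{n+1}\theta_n$ as the source of the $o(\|\barq_\tau\|)$ slack), and the bounds \eqref{e:barqtexpbd}--\eqref{e:barqtexpbd2} then follow from Gr\"onwall's inequality together with the projection-based bound on $\|\barq_{T_\bullet}\|$, exactly as in the paper. The one point where your sketch needs tightening is the dyadic localization for the martingale transform in step (c): as stated it controls the tail sums only on the event $\{\sup_n\|\theta_n\|<\infty\}$, which is not yet known to have full probability, so one should instead normalize the martingale differences by the running maximum $1+\max_{j\le k}\|\theta_j\|$ before summing (the standard Borkar--Meyn device) to obtain the bound $\epsilon(T_0)\bigl(1+\sup_{0\le s\le t}\|\barq_{T_0+s}\|\bigr)$ pathwise on the whole space.
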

\begin{proof}
Boundedness of $\{\haA_n^{-1}: n \geq n_\bullet\}$ in \Lemma{t:haA_Bdd_And_ODE_sol} (i) implies that $\haG_n^* = - \haA_n^{-1}$ for $n \geq n_\bullet$ in recursion \eqref{e:SNR2linearSA_preODE}. Representation  \eqref{e:prepreODE_Zap_q} then follows from Lemmas~\ref{t:pre-ODE} and~\ref{t:ODEfish}.
The evolution equation \eqref{e:prepreODE_Zap_q} implies:
\[
\epsy_{T_0}^{\barq} (t)  \eqdef \barq_{T_0 + t} - \barq_{T_0}  =   -   \int_{T_0}^{T_0+t }  \barclA_\tau^{-1}  \diagpie  \bigl\{c - \barqcost_\tau +  o(\|\barq_{\tau} \|)  \bigr\}  \, d\tau   +  o(1)
,\qquad T_0 \to \infty\,
\]
Denoting
\begin{equation}
b_S =  \max_k  \| I - \beta P S_{\phi^{(k)}} \| \,,   \qquad b_c =  \max_k   c( x^k , u^k )
\label{e:bS}
\end{equation}
and applying \Lemma{t:haA_Bdd_And_ODE_sol} (i),
\[
\begin{aligned}
\|  \epsy_{T_0}^{\barq} (t)  \|   
& \le   \Big \| \int_{T_0}^{T_0+t }  \barclA_\tau^{-1}  \diagpie  \big \{c - [I - \beta P S_{\phi^{\barq_\tau}}] \barq_\tau  +  o(\|\barq_{\tau} \|)   \big \}  \, d\tau \Big \|  +  o(1)
\\
&
 \le   b_A b_c t   +  (b_Ab_S+1)  \int_{0}^{t }  \big\| -  \barq_{T_0 + \tau}   +  \barq_{T_0}   -  \barq_{T_0} \big\|  \, d\tau   +  o(1)
\\
& \le \big (    b_Ab_c   +     (b_A b_S +1) \| \barq_{T_0} \|  \big )  t  +   (b_A b_S +1)  \int_{0}^{t }   \|  \epsy_{T_0}^{\barq} (\tau)   \|  \, d\tau  +  o(1)
,\qquad T_0 \to \infty\,
\end{aligned}
\] 
The factor $1$ before the integral in the second inequality comes from choosing $T_0$ large enough, so that $o( \| \barq_{T_0 + \tau}  \| )     \leq   \| \barq_{T_0 + \tau}  \| $.
Equation \eqref{e:barqtexpbd} is then obtained by applying the Gr\"{o}nwall lemma \cite{bor08a}.

Applying  the triangle inequality to \eqref{e:barqtexpbd}, and choosing $\barT_0$ such that $\|o(1) \|\le 1$  for all $T_0 \geq \barT_0$  gives
\begin{equation}
\| \barq_{T_0 + t} \|   \le  \| \barq_{T_0} \|  +   b_q [1 + \| \barq_{T_0} \| ]  t e^{b_{q} t} + 1
, \qquad T_0 \geq \barT_0   
\label{e:qbarT0t}
\end{equation}
In particular, the above inequality is true for $T_0 = \barT_0$.
Furthermore, the following holds under the assumption that the sequence $\{\haA_n\}$ is projected prior to inversion:
 For a constant $b_{\barT_0} < \infty$,
\begin{equation}
\| \barq_{\barT_0} \|   \le  b_{\barT_0} \| \barq_{0} \|
\label{e:qbarT0}
\end{equation}
The bound \eqref{e:barqtexpbd2}
is obtained on
combining \eqref{e:qbarT0t} and \eqref{e:qbarT0}.
\end{proof}

\begin{lemma} 	
\label{t:Aq1}
For each $t\geq0$, the following holds:
\begin{subequations}
\begin{align}
\| {\partialQstar^{-1}_{\barmu_t}} \barq_t - \barqcost_t \|  & = o(1),
\label{e:MismatchExt1}
\\
\| \barq_t 	+  \barclA_t^{-1} \diagpie \barqcost_t \|  & = o(1+\|q_t\|) 
,\qquad t\to \infty\,
\label{e:MismatchExt2}
\end{align}
\end{subequations}
\end{lemma}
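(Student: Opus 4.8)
The plan is to derive both estimates from the representations already established in \Lemma{t:haA_Bdd_And_ODE_sol} together with the boundedness bounds of \Lemma{t:barqtexpbd}. For \eqref{e:MismatchExt1}, recall that $\barqcost_t = {\Qstar}^{-1}(\barq_t) = [I-\beta P S_{\phi^{\barq_t}}]\barq_t$ by definition of $\barqcost$ and of ${\Qstar}^{-1}$, while \eqref{e:partialQstarbarmu} gives $\partialQstar_{\barmu_t}^{-1} = \int_0^t [I-\beta P S_{\phi^{\barq_\tau}}]\, d[\clG_{t,\tau}]$. Thus
\[
\partialQstar_{\barmu_t}^{-1}\barq_t - \barqcost_t
= \int_0^t \bigl( [I-\beta P S_{\phi^{\barq_\tau}}] - [I-\beta P S_{\phi^{\barq_t}}] \bigr)\barq_t \, d[\clG_{t,\tau}]
= \beta\int_0^t \bigl( P S_{\phi^{\barq_t}} - P S_{\phi^{\barq_\tau}} \bigr)\barq_t\, d[\clG_{t,\tau}].
\]
The integrand is nonzero only on the (random) set of $\tau\le t$ where $\phi^{\barq_\tau}\neq \phi^{\barq_t}$. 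The key point is that the measure $\clG_{t,\cdot}$ concentrates near $\tau = t$ (\Lemma{t:clGProperty}), and $\{\barq_\tau\}$ varies slowly near $t$ by \eqref{e:barqtexpbd}, while (under Assumption~Q3, via the consequence that $\phi_n$ eventually lies in the interior of a fixed $\Theta^k$ along the trajectory, or more carefully, that $\int g\,\ind\{\phi\text{ changes}\}<\infty$) the total $\clG_{t,\cdot}$-mass of the set $\{\tau : \phi^{\barq_\tau}\neq\phi^{\barq_t}\}$ tends to $0$ as $t\to\infty$. Combining these with the uniform bound $\|P S_\phi\|\le$ const and $\|\barq_t\|$ bounded along the convergent trajectory (or at worst subexponential by \eqref{e:barqtexpbd2}, absorbed by the $e^{\kappa|t-\tau|}$ factor in \Lemma{t:clGProperty}) gives that the integral is $o(1)$.

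For \eqref{e:MismatchExt2}, I would combine \eqref{e:AtpartialQstarbarmut}, which says $-\barclA_t^{-1}\diagpie = \partialQstar_{\barmu_t} + o(1)$, with \eqref{e:MismatchExt1}. Write
\[
\barq_t + \barclA_t^{-1}\diagpie\,\barqcost_t
= \barq_t - \partialQstar_{\barmu_t}\barqcost_t - \bigl(\barclA_t^{-1}\diagpie + \partialQstar_{\barmu_t}\bigr)\barqcost_t
= \partialQstar_{\barmu_t}\bigl(\partialQstar_{\barmu_t}^{-1}\barq_t - \barqcost_t\bigr) + o(1)\cdot\barqcost_t.
\]
Since $\partialQstar_{\barmu_t}$ is uniformly bounded in operator norm (it is a convex combination of the finitely many matrices $[I-\beta P S_{\phi^{(k)}}]^{-1}$), the first term is $o(1)$ by \eqref{e:MismatchExt1}; and $\|\barqcost_t\| = \|[I-\beta P S_{\phi^{\barq_t}}]\barq_t\| \le b_S\|\barq_t\|$, so the second term is $o(1+\|q_t\|)$. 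This yields \eqref{e:MismatchExt2}.

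The main obstacle is the estimate that the $\clG_{t,\cdot}$-measure of $\{\tau\le t : \phi^{\barq_\tau}\neq \phi^{\barq_t}\}$ vanishes as $t\to\infty$ — i.e., that the policy along the interpolated trajectory is, in the relevant weighted sense, asymptotically constant near the current time. This is precisely where Assumption~Q3 enters: the summability $\sum \gamma_n\ind\{\phi_{n+1}\neq\phi_n\}<\infty$ translates (after the change of time $t_n\approx\log n$ and using $g_{t_n} = \gamma_n/\alpha_n$) into a statement that the $\clG$-weighted count of policy switches in a window shrinking toward $t$ goes to zero. Making this translation rigorous — relating the discrete switching bound to the continuous-time weighted measure, and controlling the boundary effects near $\tau=t$ using \Lemma{t:clGProperty} — is the delicate part; everything else is routine bookkeeping with the already-proved linear-systems representations and Gr\"onwall-type bounds.
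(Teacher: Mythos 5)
Your derivation of \eqref{e:MismatchExt2} from \eqref{e:MismatchExt1} and \eqref{e:AtpartialQstarbarmut} is essentially the paper's argument (up to a sign slip in the middle display, and the remark that $\partialQstar_{\barmu_t}$ is the \emph{inverse of} a convex combination of the matrices $[I-\beta PS_{\phi^{(k)}}]$ rather than a convex combination of their inverses --- uniform boundedness still holds, so nothing is lost). The gap is in \eqref{e:MismatchExt1}. Your identity
\[
\partialQstar_{\barmu_t}^{-1}\barq_t - \barqcost_t \;=\; \beta\int_0^t \bigl(PS_{\phi^{\barq_t}}-PS_{\phi^{\barq_\tau}}\bigr)\barq_t\,d[\clG_{t,\tau}]
\]
is algebraically correct, but the step you yourself flag as the delicate part --- that the $\clG_{t,(\varble)}$-measure of $\{\tau\le t:\phi^{\barq_\tau}\neq\phi^{\barq_t}\}$ vanishes as $t\to\infty$ --- does not follow from Assumption~Q3 and fails along the times at which the policy switches. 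Q3 only gives $\sum_n\gamma_n\ind\{\phi_{n+1}\neq\phi_n\}<\infty$; since $\sum_n\gamma_n=\infty$ this makes switches sparse but permits infinitely many of them. If a switch occurs at $t_n$ and one takes $t=t_n+\delta$ with $\delta\ll 1/g_{t_n}$, the disagreement set contains all of $[0,t_n]$, whose $\clG_{t,(\varble)}$-mass $\exp\bigl(-\int_{t_n}^t g_r\,dr\bigr)$ is close to one. So the estimate you need cannot hold uniformly in $t$, and this route does not close.

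The repair is to never isolate the policy. The paper inserts $[I-\beta PS_{\phi^{\barq_\tau}}]\barq_\tau=\barqcost_\tau$ (policy and argument evaluated at the \emph{same} time $\tau$), writes $[I-\beta PS_{\phi^{\barq_\tau}}]\barq_t=\barqcost_\tau+[I-\beta PS_{\phi^{\barq_\tau}}](\barq_t-\barq_\tau)$, and obtains
\[
\| \partialQstar_{\barmu_t}^{-1} \barq_t  -  \barqcost_t \|
\le   \int_0^t   \| \barqcost_\tau  -  \barqcost_t \|  \,  d [\clG_{t,\tau} ]
 +  b_S \int_0^t \| \barq_t -  \barq_\tau \| \,  d [\clG_{t,\tau} ] \, ;
\]
both terms reduce to the modulus of continuity of $\barq$ via the Lipschitz bound on $\Qstar^{-1}$ (\Lemma{t:clQstar}), the Gr\"onwall estimate \eqref{e:barqtexpbd}, and the concentration of $\clG_{t,(\varble)}$ near $\tau=t$ (\Lemma{t:clGProperty}). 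The discontinuity of $q\mapsto\phi^q$ never enters because $\Qstar^{-1}$ is continuous. If you prefer to keep your decomposition, the same repair is available: since $S_{\phi^{q}}q(x)=\min_u q(x,u)$ and $q\mapsto\min_u q(\varble,u)$ is $1$-Lipschitz in the sup norm, $\bigl(S_{\phi^{\barq_t}}-S_{\phi^{\barq_\tau}}\bigr)\barq_t=\bigl(\min_u\barq_t-\min_u\barq_\tau\bigr)-S_{\phi^{\barq_\tau}}(\barq_t-\barq_\tau)$ is $O(\|\barq_t-\barq_\tau\|)$, which puts you back on the paper's track with no switching estimate. Assumption~Q3 is needed elsewhere (for the ODE-friendly property of $\clE^A$ in \Lemma{t:ODEfish}), not in this lemma.
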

\begin{proof}
Equation \eqref{e:partialQstarbarmu} of \Lemma{t:haA_Bdd_And_ODE_sol} implies 
the following representation:
\[
\begin{aligned}
\partialQstar_{\barmu_t}^{-1} \barq_t
& = \int_0^t       [I-\beta P S_{\phi^{\barq_\tau}} ] \barq_t  \,  d [\clG_{t,\tau} ]
\\
& =  \int_0^t   \barqcost_\tau  \,  d [\clG_{t,\tau} ]    +   \int_0^t  [I-\beta P S_{\phi^{\barq_\tau}} ]  (\barq_t -  \barq_\tau )  \,  d [\clG_{t,\tau} ].
\end{aligned}
\]
Subtracting $ \barc_t$ from each side and taking norms gives the bound,
\begin{equation}
\| \partialQstar_{\barmu_t}^{-1} \barq_t  -  \barc_t \|
\le   \int_0^t   \| \barqcost_\tau  -  \barqcost_t \|  \,  d [\clG_{t,\tau} ]   
 +  b_S \int_0^t \| \barq_t -  \barq_\tau \| \,  d [\clG_{t,\tau} ]  
\label{e:LA5p1}
\end{equation}
\Lemma{t:clQstar} implies that the mappings $\Qstar$ and $\Qstar^{-1}$ are Lipschitz: 
\archive{spm:  read and then archive this:
\\
To do: 
Study the mapping $\Qstar^{-1}$.   Is it convex?  The max of smooth convex functions?  What do we know about it?
}
 for a constant $b_\Qstar$,
\begin{equation*}
\| \barqcost_\tau - \barqcost_t \| = \| \Qstar^{-1} (\barq_\tau  -  \barq_t) \| \leq b_\Qstar \| \barq_\tau - \barq_t \|
\label{e:barqcostLip}
\end{equation*}
Substituting into \eqref{e:LA5p1} gives
\[
\begin{aligned}
\| \partialQstar_{\barmu_t}^{-1} \barq_t  -  \barc_t \|
& \leq (  b_S  +   b_\Qstar   )   \int_0^t    \|  \barq_t -  \barq_\tau   \|  \,  d [\clG_{t,\tau} ]
\\
& \leq (  b_S  +   b_\Qstar   ) \int_0^t  b_q [ 1 + \| \barq_{\tau} \|  ]  (t - \tau)  e^{b_q (t - \tau)}  \,  d [\clG_{t,\tau} ]  +  o(1)
\\
& = (  b_S  +   b_\Qstar   ) b_q \int_0^t    \| \barq_{\tau} \|  (t - \tau)  e^{b_q (t - \tau)}  \,  d [\clG_{t,\tau} ]   +  o(1)
,\qquad t \to \infty\,
\end{aligned}
\]
where the second inequality uses equation \eqref{e:barqtexpbd} of \Lemma{t:barqtexpbd},  and  the last approximation follows from \Lemma{t:clGProperty}.   

Next, applying \eqref{e:barqtexpbd2} of \Lemma{t:barqtexpbd} and \Lemma{t:clGProperty}, \eqref{e:MismatchExt1} is obtained:
\[
\begin{aligned}
\| \partialQstar_{\barmu_t}^{-1} \barq_t  -  \barc_t \|
& \leq (  b_S  +   b_\Qstar   ) b_q^2 \int_0^t [1  +  \| \barq_{0} \|  ]  \tau (t - \tau)  e^{b_q t}  \,  d [\clG_{t,\tau} ]   +  o(1)
\\
& = o(1)  
,\qquad t \to \infty\,
\end{aligned}
\]
Using \eqref{e:MismatchExt1} and \eqref{e:AtpartialQstarbarmut}, \eqref{e:MismatchExt2} is obtained:
\[
\begin{aligned}
\barclA_t^{-1} \diagpie \barqcost_t 
& = \barclA_t^{-1} \diagpie \Bigl( \partialQstar_{\barmu_t}^{-1}\barq_t + o(1) \Bigr)  
\\
& = \barclA_t^{-1} \diagpie \Bigl( - \diagpie^{-1} \barclA_t \barq_t + o(\|\barq_t\|)    +    o(1)   \Bigr)  
\\
& = -\barq_t  +  o(1+\|q_t\|) \,,\qquad t\to \infty\, .
\end{aligned}
\]
\end{proof}

\begin{lemma}
\label{t:haq_Bdd_And_ODE_sol}
\begin{romannum}
\item 
$\displaystyle \sup_{t\ge 0}   \| \barq_t \|  <\infty\,,\quad a.s.$.
\item
For   $t\ge 0$, 
\begin{subequations}
\begin{align}
\barq_{T_0 +t}  - \barq_{T_0}   
& =
-   \int_{T_0}^{T_0+t }  \barclA_\tau^{-1} \diagpie \bigr\{c   -  \barqcost_\tau   \bigr\}  \, d \tau    + o(1)
\label{e:haq_ODE0}
\\
& =
-   \int_{T_0}^{T_0+t }   \bigr\{\barq_\tau   -  \partialQstar_{\barmu_\tau}   c   \bigr\}  \, d \tau    + o(1)
\label{e:haq_ODEa}
\\
& =
-   \int_{T_0}^{T_0+t }  \partialQstar_{\barmu_\tau}  \bigl\{  \barqcost_\tau    -     c   \bigr\}  \, d \tau    + o(1)
,\qquad T_0 \to \infty\,
\label{e:haq_ODEb}
\end{align}
\end{subequations} 
\end{romannum}
\end{lemma}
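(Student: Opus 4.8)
The lemma bundles a uniform a priori bound (part~(i)) with three integral representations (part~(ii)). Granting~(i), the three displays \eqref{e:haq_ODE0}--\eqref{e:haq_ODEb} are mutually equivalent up to an $o(1)$ error: they are linked by the ``mismatch'' estimates \eqref{e:MismatchExt1}--\eqref{e:MismatchExt2} of \Lemma{t:Aq1}, the dictionary \eqref{e:AtpartialQstarbarmut} of \Lemma{t:haA_Bdd_And_ODE_sol} between $\barclA_\tau$ and $\partialQstar_{\barmu_\tau}$, and the elementary fact that $\partialQstar_{\barmu}$, $\partialQstar_{\barmu}^{-1}$ and $\partialQstar_{\barmu}c$ are bounded uniformly over all pmfs $\barmu$ on the policy indices (from \eqref{e:partiakQmu}, since $\sum_i\barmu(i)PS_{\phi^{(i)}}$ is again a transition matrix on $\state\times\U$). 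So the real substance is part~(i); part~(ii) is then bookkeeping.

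\textbf{Part~(i).} Start from the linear systems representation \eqref{e:prepreODE_Zap_q}. Using \eqref{e:MismatchExt2} one may write $\barclA_\tau^{-1}\diagpie\barqcost_\tau = -\barq_\tau + o(1+\|\barq_\tau\|)$, while $\barclA_\tau^{-1}\diagpie c$ is bounded for $\tau\ge T_\bullet$ by \Lemma{t:haA_Bdd_And_ODE_sol}(i) and $\barclA_\tau^{-1}\diagpie\,o(\|\barq_\tau\|)=o(\|\barq_\tau\|)$. Substituting into \eqref{e:prepreODE_Zap_q} gives, for $T_0$ large,
\[
\barq_{T_0+t} = \barq_{T_0} - \int_0^t \barq_{T_0+\tau}\,d\tau + \int_0^t \zeta_{T_0+\tau}\,d\tau + o(1),\qquad T_0\to\infty,
\]
with $\|\zeta_s\|\le M + \delta(s)(1+\|\barq_s\|)$, $M<\infty$ fixed, $\delta(s)\to 0$. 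This is the integrated form of the exponentially stable affine system $\dot x=-x+(\text{bounded})$ perturbed by a small state-proportional term. Applying the variation-of-constants formula and integration by parts exactly as in the treatment of $\barclA_\tau$ in \Lemma{t:haA_Bdd_And_ODE_sol}(i), and choosing $T_0$ so that $\delta(s)\le\tfrac12$ for $s\ge T_0$, one obtains $\|\barq_{T_0+t}\|\le\|\barq_{T_0}\|+M+\tfrac12+\tfrac12\sup_{0\le\tau\le t}\|\barq_{T_0+\tau}\|+o(1)$; since $\barq$ is continuous (and quantitatively, by \eqref{e:barqtexpbd2}, there is no finite-time blow-up) the supremum can be moved to the left-hand side, which yields $\sup_{t\ge T_0}\|\barq_t\|\le 2(\|\barq_{T_0}\|+M+\tfrac12)+o(1)<\infty$ a.s.; together with continuity on $[0,T_0]$ this gives (i).

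\textbf{Part~(ii).} Once $\sup_t\|\barq_t\|<\infty$, the term $o(\|\barq_\tau\|)$ in \eqref{e:prepreODE_Zap_q} is $o(1)$, and $\barclA_\tau^{-1}\diagpie\,o(1)=o(1)$; integrating over a window of bounded length keeps it $o(1)$, which is \eqref{e:haq_ODE0}. For \eqref{e:haq_ODEb}, in the integrand of \eqref{e:haq_ODE0} substitute $-\barclA_\tau^{-1}\diagpie=\partialQstar_{\barmu_\tau}+o(1)$ from \eqref{e:AtpartialQstarbarmut}; the $o(1)$'s it produces multiply $c$ and $\barqcost_\tau=\Qstar^{-1}(\barq_\tau)$, both bounded (by~(i) and the Lipschitz property of $\Qstar^{-1}$ from \Lemma{t:clQstar}), so $\barclA_\tau^{-1}\diagpie\{c-\barqcost_\tau\}=\partialQstar_{\barmu_\tau}(\barqcost_\tau-c)+o(1)$. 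Finally \eqref{e:haq_ODEa} follows from \eqref{e:haq_ODEb} via \eqref{e:MismatchExt1} rewritten as $\barq_\tau=\partialQstar_{\barmu_\tau}\barqcost_\tau+o(1)$ (using uniform boundedness of $\partialQstar_{\barmu_\tau}$), since then $\partialQstar_{\barmu_\tau}(\barqcost_\tau-c)=\barq_\tau-\partialQstar_{\barmu_\tau}c+o(1)$.

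\textbf{Anticipated obstacle.} The only genuine difficulty is~(i): because the $o(1)$'s control errors only over compact time windows, boundedness cannot come from a Gr\"onwall estimate alone; one must exploit the contraction of $\dot x=-x$ together with the fact that the error terms supplied by \Lemma{t:pre-ODE} (martingale tails, telescoping remainders, the summable part) are small uniformly in the tail — precisely the mechanism already invoked to bound $\{\barclA_\tau\}$ in \Lemma{t:haA_Bdd_And_ODE_sol}(i). With~(i) established, part~(ii) is routine algebra built on \Lemma{t:Aq1} and \eqref{e:AtpartialQstarbarmut}.
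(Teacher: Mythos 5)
Your proposal is correct and follows essentially the same route as the paper's proof: both start from the representation \eqref{e:prepreODE_Zap_q}, invoke \Lemma{t:Aq1} and the approximation \eqref{e:AtpartialQstarbarmut} to rewrite the integrand, and obtain part (i) from the stability of the system $\ddt x_t = -x_t + (\text{bounded})$ (which the paper compresses into a one-line appeal to the Gr\"onwall lemma). Your only deviations --- deriving \eqref{e:haq_ODEb} before \eqref{e:haq_ODEa} and spelling out the contraction/variation-of-constants step for boundedness --- are cosmetic refinements of the same argument.
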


\begin{proof}
Equation \eqref{e:prepreODE_Zap_q} of \Lemma{t:barqtexpbd} implies,
\[
\barq_{T_0 +t} - \barq_{T_0} 
 = -   \int_{T_0}^{T_0+t }  \barclA_\tau^{-1}  \diagpie  \bigl\{c - \barqcost_\tau  + o(\|\barq_{\tau} \|)   \bigr\}  \, d\tau   +  o(1) ,\qquad T_0\to \infty\, .
\]
\Lemma{t:Aq1} along with the fact that $\{\barclA_t^{-1}: t \geq T_\bullet \}$ is bounded implies
\begin{equation}
\begin{aligned}
\barq_{T_0 +t} - \barq_{T_0}  
& = -   \int_{T_0}^{T_0+t }    \bigl\{  \barq_\tau   +   \barclA_\tau^{-1}  \diagpie c   + o(  1  +  \|\barq_\tau\|  )   \bigr\}  \, d\tau   +  o(1)
,\qquad T_0\to \infty\, .
\end{aligned}
\label{e:bartqbdd}
\end{equation}
Boundedness of $\{\barq_t \}$ is established by applying the Gr\"{o}nwall lemma \cite{bor08a}, and \eqref{e:haq_ODE0} immediately follows.

Next apply the approximation \eqref{e:AtpartialQstarbarmut}: substituting $- \barclA_{T}^{-1} \diagpie = \partialQstar_{\barmu_T}  + o(1)$ in \eqref{e:bartqbdd}, and using the fact that $\{ \barq_t\}$ is bounded, \eqref{e:haq_ODEa} is obtained.
Equation \eqref{e:haq_ODEb} then follows from \eqref{e:MismatchExt2} of \Lemma{t:Aq1}.
\end{proof}

With these results established, the  ODE approximation will quickly follow. 
For a fixed but arbitrary time-horizon $T>0$,  define a family of uniformly bounded and uniformly  Lipschitz continuous functions $\{\barGamma^{T_0} :  T_0\ge 0 \}$,  where  $\barGamma^{T_0} : [0,T] \to \Re^m$  for each $T_0\ge 0$ and some integer $m$.  The family of functions is constructed from   the following familiar components: for $t\in [0,T]$,
\[
\begin{aligned}
\barGamma^{T_0}_1(t)  =  \barq_{T_0+t}  
\,,\quad 
\barGamma^{T_0}_2(t)  =  \barqcost_{T_0+t}  
\,, \quad
\barGamma^{T_0}_3(t)  =  \int_{T_0}^{T_0+t}  \partialQstar^{-1}_{\barmu_r}\, dr
\,, \quad
\barGamma^{T_0}_4(t)  =  \int_{T_0}^{T_0+t}  \partialQstar_{\barmu_r}\, dr
\end{aligned}
\]
More precisely,   $\barGamma^{T_0}$   is a function of two variables,  $t\in [0,T]$ and $\omega\in\Omega$.  To say that  $\{\barGamma^{T_0}\}$ is  uniformly bounded and uniformly  Lipschitz continuous  means that there exists $\Omega_\bullet \in \clF$ with measure one such that for each $\omega \in \Omega_\bullet$, the family of functions $\{\barGamma^{T_0}(\omega,\varble) :  T_0\ge 0\}$,  is uniformly bounded and Lipschitz.  The bound and the Lipschitz constant may depend on $\omega$, but are independent of $T_0$.

Any sub-sequential limit of $\{\barGamma^{T_0} : T_0\ge 0\}$ will be denoted $\Gamma$.   
To maintain consistency with the notation in \Theorem{t:ZAP},  the first two components are denoted
\[
\Gamma_1(t) = q_t \,,\quad 
\Gamma_2(t) = c_t \,. 
\]
The ODE limit is recast as follows:
\begin{proposition}
\label{t:ODE_Gamma}
Any sub-sequential limit $\Gamma$ of  $\{\barGamma^{T_0} : T_0\ge 0\}$ must have the following form:   for $t\in[0,T]$,
\begin{romannum}
\item $
 q_t \eqdef\Gamma_1 (t)  =\Qstar(c_t) $.
  
\item $\displaystyle 
\ddt  c_t  = - c_t + c$.

\item  For a.e.\ $t\in [0,T]$, there is a pmf $\mu_t$ such that 
\[
\ddt \Gamma_3(t) =  \partialQstar^{-1}_{\mu_t}\qquad \text{\it and}\quad  
		\partialQstar^{-1}_{\mu_t}  q_t = c_t 
\]
\end{romannum}
\end{proposition}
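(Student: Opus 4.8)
The plan is to extract the ODE limit from the integral approximations established in Lemmas~\ref{t:haA_Bdd_And_ODE_sol}--\ref{t:haq_Bdd_And_ODE_sol} by a standard Arzel\`a--Ascoli argument, then identify the limit using the algebraic relationships between $\barclA_t$, $\barq_t$, $\barqcost_t$ and the pmf $\barmu_t$. First I would invoke \Lemma{t:haq_Bdd_And_ODE_sol}~(i) to note that $\{\barq_t\}$ is a.s.\ bounded, and combine this with \Lemma{t:clQstar} (which gives that $\Qstar^{-1}$ is Lipschitz, hence $\barqcost_t = \Qstar^{-1}(\barq_t)$ is bounded) and \Lemma{t:haA_Bdd_And_ODE_sol}~(i) (which gives $\barclA_t^{-1}$ bounded for $t\ge T_\bullet$), to conclude that each component of $\barGamma^{T_0}$ is uniformly bounded and uniformly Lipschitz on $[0,T]$, with constants depending only on $\omega$. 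Hence on a set $\Omega_\bullet$ of measure one, every sequence $T_0^{(j)}\to\infty$ has a subsequence along which $\barGamma^{T_0^{(j)}}\to\Gamma$ uniformly on $[0,T]$, with $\Gamma$ Lipschitz; it is this $\Gamma$ whose form must be pinned down.

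Next I would pass to the limit in each of the three integral equations. For (i): the Lipschitz bound $\|\barqcost_\tau - \Qstar^{-1}(\barq_t)\| = \|\Qstar^{-1}(\barq_\tau - \barq_t)\|\le b_\Qstar\|\barq_\tau-\barq_t\|$ together with uniform convergence forces $c_t = \Qstar^{-1}(q_t)$ in the limit, i.e.\ $q_t = \Qstar(c_t)$ by \Lemma{t:clQstar}. For (ii): I would start from \eqref{e:haq_ODEa} of \Lemma{t:haq_Bdd_And_ODE_sol}, namely
\[
\barq_{T_0+t} - \barq_{T_0} = -\int_{T_0}^{T_0+t}\bigl\{\barq_\tau - \partialQstar_{\barmu_\tau}c\bigr\}\,d\tau + o(1),
\]
and note $\partialQstar_{\barmu_\tau}c = \Qstar(\barqcost_\tau) + o(1)$: indeed by \eqref{e:MismatchExt1} of \Lemma{t:Aq1}, $\partialQstar_{\barmu_\tau}^{-1}\barq_\tau = \barqcost_\tau + o(1)$, so $\barq_\tau = \partialQstar_{\barmu_\tau}\barqcost_\tau + o(1)$ and thus $\barq_\tau - \partialQstar_{\barmu_\tau}c = \partialQstar_{\barmu_\tau}(\barqcost_\tau - c) + o(1)$; in the limit $q_\tau = \Qstar(c_\tau)$, and one recognizes $\partialQstar_{\mu}\varsigma = \Qstar(\varsigma)$ when $\mu$ attains the pointwise minimum in \eqref{e:Qstar} at $\varsigma$. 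Combining with (i) and substituting $q_\tau - \partialQstar_{\barmu_\tau}c$ by $q_\tau - \Qstar(c)$ in the limit, the integral equation becomes $q_t - q_0 = -\int_0^t (\Qstar(c_\tau) - \Qstar(c))\,d\tau$; differentiating $\Qstar^{-1}$ through (legitimate because $\Qstar$ is piecewise linear and $q_t$ is Lipschitz, as justified by the chain-rule lemma referenced in \eqref{e:QODE1}) yields $\ddt c_t = -c_t + c$, which is (ii). For (iii): since $\barmu_t$ takes values in the finite-dimensional simplex of pmfs on $\{1,\dots,\nphi\}$, along a further subsequence $\barmu_{T_0+t}$ converges for a.e.\ $t$ to some $\mu_t$; then $\partialQstar_{\barmu_{T_0+t}}^{-1}\to\partialQstar_{\mu_t}^{-1}$, so $\ddt\Gamma_3(t) = \partialQstar_{\mu_t}^{-1}$, and passing to the limit in \eqref{e:MismatchExt1} gives $\partialQstar_{\mu_t}^{-1}q_t = c_t$.

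I expect the main obstacle to be the measure-theoretic bookkeeping around the a.e.\ statements in part (iii): $\barmu_t$ is defined through the integral \eqref{e:barmu} against the measure $\clG_{t,\varble}$, which concentrates near $t$ as $t\to\infty$ (\Lemma{t:clGProperty}), so one must be careful that $\barmu_{T_0+t}$ genuinely ``sees'' the policy $\phi^{\barq_{T_0+t}}$ in the limit rather than some time-average, and that the exceptional null set is uniform over the subsequence. The cleanest route is probably to avoid pointwise identification of $\mu_t$ with the limiting policy and instead work with the integrated identities: establish $\Gamma_3(t) = \int_0^t\partialQstar_{\mu_r}^{-1}dr$ for some measurable pmf-valued selection $\mu_\cdot$ using a measurable-selection / weak-$*$ compactness argument in $L^\infty([0,T])$, and separately transfer $\partialQstar_{\barmu_\tau}^{-1}\barq_\tau = \barqcost_\tau + o(1)$ to its limit in the integrated $L^1$ sense before localizing to a.e.\ $t$ by Lebesgue differentiation. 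Once (i)--(iii) are in hand, the exponential stability and $q_t\to Q^*$ claims of \Theorem{t:ZAP} follow from (ii) since $c_t\to c$ exponentially and $\Qstar$ is continuous with $\Qstar(c) = Q^*$ by \eqref{e:DCOE-Qb}.
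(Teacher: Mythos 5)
Your overall architecture --- uniform bounds plus Arzel\`a--Ascoli, then passage to the limit in the integral equations of Lemmas~\ref{t:haA_Bdd_And_ODE_sol} and \ref{t:haq_Bdd_And_ODE_sol} --- matches the paper's, and your treatment of (i) and of the measure-theoretic difficulties in (iii) (working with integrated identities and a weak-$*$/measurable-selection argument rather than pointwise limits of $\barmu_{T_0+t}$, then localizing by Lebesgue differentiation) is essentially what the paper does via the components $\Gamma_3,\Gamma_4$ and Lemma~\ref{t:ContinuousIntegration}.

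There is, however, a genuine error in your derivation of (ii). You replace $\partialQstar_{\barmu_\tau} c$ by $\Qstar(c)$ in the limit, invoking the identity $\partialQstar_{\upmf}\varsigma = \Qstar(\varsigma)$ ``when $\upmf$ attains the minimum at $\varsigma$.'' But $\mu_\tau$ attains the minimum in \eqref{e:Qstar} at the \emph{current} cost $c_\tau$ --- that is exactly what $\partialQstar_{\mu_\tau}c_\tau = q_\tau = \Qstar(c_\tau)$ expresses --- not at $c$. In general one only has the inequality $\partialQstar_{\mu_\tau}c \ge \Qstar(c)$, with equality only when $\phi^{q_\tau}$ happens to coincide with the optimal policy for the true cost $c$; along the transient this fails. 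Consequently your limiting equation $q_t - q_0 = -\int_0^t(\Qstar(c_\tau)-\Qstar(c))\,d\tau$ is not the correct limit, and combined with the chain rule $\ddt q_t = \partialQstar_{\mu_t}\ddt c_t$ it would yield $\ddt c_t = -c_t + \partialQstar_{\mu_t}^{-1}\Qstar(c)$ rather than $-c_t+c$. The correct limit keeps the linear operator in place:
\[
q_t - q_0 \;=\; -\int_0^t \partialQstar_{\mu_\tau}\bigl[\,c_\tau - c\,\bigr]\,d\tau\,,
\]
obtained by passing to the limit in \eqref{e:haq_ODEbGamma} through the continuous bilinear functional of Lemma~\ref{t:ContinuousIntegration} applied to $\Gamma_4$; then $\partialQstar_{\mu_t}\ddt c_t = -\partialQstar_{\mu_t}[c_t - c]$ at points of differentiability, and invertibility of $\partialQstar_{\mu_t}$ gives (ii). Note that this cancellation requires $\ddt\Gamma_4(t)=\partialQstar_{\mu_t}$ for the \emph{same} pmf $\mu_t$ appearing in $\ddt\Gamma_3(t)=\partialQstar_{\mu_t}^{-1}$; the paper deduces this from the pre-limit identity $\ddt\barGamma_3^{T_0}(t)\,\ddt\barGamma_4^{T_0}(t)=I$, a point your proposal omits entirely because the erroneous substitution lets you bypass $\Gamma_4$.
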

The first relation, that $\Gamma_1 (t)  =\Qstar(\Gamma_2(t))$,  is obvious because the mapping $\Qstar$ is continuous.   The proofs of (ii) and (iii) are similar:  
prior results and a few results that follow are reinterpreted as properties of $\{\barGamma^{T_0}\}$ that are preserved in any sub-sequential limit.   For example, \eqref{e:haq_ODEb} admits the representation in terms of $\barGamma^{T_0}$:
\begin{equation}
\barGamma^{T_0}_1(t) -\barGamma^{T_0}_1(0)
+    \int_{0}^{t }   d \barGamma_4^{T_0}(\tau) \,  \bigl\{  \barGamma_2^{T_0}(\tau)       -     c   \bigr\}      = o(1)
,\qquad T_0 \to \infty\, .
\label{e:haq_ODEbGamma}
\end{equation}

The following result establishes that the left hand side represents a continuous functional of $\barGamma^{T_0}$:

\begin{lemma}
\label{t:ContinuousIntegration}  
For fixed $T>0$, $l>0$, and $b>0$, let $\clH_{l,b}$ denote the set of all functions  $h\colon[0,T] \to \Re$ satisfying $\|h\|\le b$,  and are Lipschitz continuous with Lipschitz constant $l$:
\[
| h(t) - h(s)| \le l |t-s|\,,\quad s,t\in[0,T],\ \ h\in\clH_{l,b}
\]
The set $\clH_{l,b}$ is compact as a subset of $C([0,T],\Re)$.  Moreover, the following real-valued functional is Lipschitz continuous on $\clH_{l,b}\times \clH_{l,b}$: 
\[
\clC_T(f,g) = \int_0^T  f(t) \, dg(t)
\]
\end{lemma}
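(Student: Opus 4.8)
The plan is to handle the two assertions in turn. For compactness of $\clH_{l,b}$ as a subset of $C([0,T],\Re)$, I would apply the Arzel\`a--Ascoli theorem: the members of $\clH_{l,b}$ are uniformly bounded by $b$, and the common Lipschitz constant $l$ furnishes equicontinuity, so $\clH_{l,b}$ is relatively compact. Closedness then follows because a uniform limit $h$ of functions $h_n\in\clH_{l,b}$ still satisfies $\|h\|\le b$ and $|h(t)-h(s)|=\lim_n|h_n(t)-h_n(s)|\le l|t-s|$; hence $\clH_{l,b}$ is closed, therefore compact.

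For the Lipschitz continuity of $\clC_T$ on $\clH_{l,b}\times\clH_{l,b}$ (with the product metric $\|f_1-f_2\|_\infty+\|g_1-g_2\|_\infty$), the first step is to observe that a Lipschitz function on $[0,T]$ is absolutely continuous, so for $g\in\clH_{l,b}$ the Riemann--Stieltjes integral against $dg$ reduces to $\clC_T(f,g)=\int_0^T f(t)\,g'(t)\,dt$ with $|g'|\le l$ a.e. Then, for $(f_1,g_1),(f_2,g_2)\in\clH_{l,b}\times\clH_{l,b}$, I would split
\[
\clC_T(f_1,g_1)-\clC_T(f_2,g_2)=\int_0^T (f_1-f_2)\,g_1'\,dt+\int_0^T f_2\,(g_1'-g_2')\,dt .
\]
The first integral is at most $lT\,\|f_1-f_2\|_\infty$. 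The second integral is the only delicate point: the hypotheses bound $\|g_1-g_2\|_\infty$ but say nothing about $g_1'-g_2'$, so a direct estimate fails. I would instead integrate by parts --- valid because $f_2$ and $g_1-g_2$ are absolutely continuous ---
\[
\int_0^T f_2\,(g_1'-g_2')\,dt=\bigl[f_2\,(g_1-g_2)\bigr]_0^T-\int_0^T f_2'\,(g_1-g_2)\,dt ,
\]
and bound the right-hand side by $\bigl(2b+lT\bigr)\|g_1-g_2\|_\infty$, using $\|f_2\|_\infty\le b$ and $|f_2'|\le l$ a.e. Combining, $|\clC_T(f_1,g_1)-\clC_T(f_2,g_2)|\le (2b+lT)\bigl(\|f_1-f_2\|_\infty+\|g_1-g_2\|_\infty\bigr)$, which is the asserted Lipschitz bound.

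The main obstacle --- indeed the only nonroutine step --- is the second integral: since we control the integrators only in supremum norm and not their derivatives, the integration-by-parts manoeuvre is essential, shifting the derivative onto $f_2$, whose Lipschitz constant is uniform over $\clH_{l,b}$. Everything else (existence of the Riemann--Stieltjes integral for a continuous integrand against a bounded-variation integrator, the fundamental theorem of calculus and integration by parts for absolutely continuous functions, and the a.e.\ derivative bounds) is standard.
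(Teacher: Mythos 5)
Your proof is correct and follows essentially the same route as the paper: the splitting of $\clC_T(f_1,g_1)-\clC_T(f_2,g_2)$ into a difference in the first argument plus a difference in the second is exactly the paper's reduction via bilinearity, and the integration by parts to shift the derivative from $g_1-g_2$ onto $f_2$ is the paper's key step, yielding the same constants $lT$ and $2b+lT$. The only addition is that you spell out the Arzel\`a--Ascoli argument for compactness, which the paper states but does not prove.
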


\begin{proof}
Since $\clC_T$ is bilinear, it is sufficient to obtain Lipschitz constants in either variable.
 
For a fixed function $g \in \clH_{l,b}$, and any two functions $f_1,f_2 \in \clH_{l,b}$, 
\[
\begin{aligned}
| \clC_T(f_1,g) - \clC_T(f_2,g) |
& = \bigg | \int_0^T ( f_1(t)  -  f_2(t) ) \, dg(t) \bigg |
\\
& \leq  \int_0^T | f_1(t)  -  f_2(t) | \, dt 
\\
& \leq T l \| f_1  -  f_2 \|,
\end{aligned}
\]
which implies Lipschitz continuity of $\clC_T$ in its first variable, with Lipschitz constant $T l$.

A similar result is obtained for a fixed $f \in \clH_{l,b}$. Using integration by parts:
\[
\clC_T(f,g) = f(T)g(T) - f(0)g(0)  - \int_0^T  g(t) \, df(t).
\]
For any two functions $g_1,g_2 \in \clH_{l,b}$, 
\[
\begin{aligned}
| \clC_T(f,g_1) - \clC_T(f,g_2) | 
& \leq 2 \|f\| \cdot \| g_1 - g_2 \|
+ \bigg | \int_0^T ( g_1(t)  -  g_2(t) ) \, df(t) \bigg |
\\
& \leq (2b + Tl) \| g_1 - g_2 \|.
\end{aligned}
\]
This proves Lipschitz continuity of $\clC_T$ in its second variable, with Lipschitz constant $(2b + T l)$.
\end{proof}

  The next result implies another continuous relationship for $ \barGamma^{T_0}_3$:
\begin{lemma}
\label{t:limpartialQstar}
For each  $T_0\ge 0$ and $t>0$, there exists a pmf $\upmf$ such that
\[
\begin{aligned}
\frac{1}{t} \int_{T_0}^{T_0+t} \partialQstar_{\barmu_\tau}^{-1} \, d \tau\, = \partialQstar_{\upmf}^{-1},
\end{aligned}
\]
with $\barmu_\tau$ defined in \eqref{e:barmu}. 
That is,   the matrix  $ t^{-1}\barGamma^{T_0}_3(t)$ lies in the compact set   $ \{\partialQstar_{\nu}^{-1} : \nu \ \text{is a pmf}\}$.
\qed
\end{lemma}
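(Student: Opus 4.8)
The plan is short because the statement is, at bottom, a linearity fact. First I would record the elementary observation that, by the definition \eqref{e:partiakQmu}, the inverse $\partialQstar_\nu^{-1}$ depends \emph{linearly} on the probability weights: for any pmf $\nu$ on $\{1,\dots,\nphi\}$,
\[
\partialQstar_\nu^{-1} \;=\; \sum_{i=1}^{\nphi}\nu(i)\,\bigl[I-\beta P S_{\phi^{(i)}}\bigr].
\]
(It is only the inverse that is linear in $\nu$; the quantity $\partialQstar_\nu$ itself is not.) Since $\barmu_\tau$ is a pmf for every $\tau$ by \eqref{e:barmu}, substituting this representation into $\partialQstar_{\barmu_\tau}^{-1}$ and interchanging the finite sum over $i$ with the integral over $\tau$ yields
\[
\frac1t\int_{T_0}^{T_0+t}\partialQstar_{\barmu_\tau}^{-1}\,d\tau
\;=\;\sum_{i=1}^{\nphi}\Bigl(\frac1t\int_{T_0}^{T_0+t}\barmu_\tau(i)\,d\tau\Bigr)\bigl[I-\beta P S_{\phi^{(i)}}\bigr].
\]

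Next I would define $\upmf(i)\eqdef t^{-1}\int_{T_0}^{T_0+t}\barmu_\tau(i)\,d\tau$ for $1\le i\le\nphi$ and verify that $\upmf$ is a pmf. Nonnegativity of each $\upmf(i)$ is immediate from $\barmu_\tau(i)\ge0$, and $\sum_i\upmf(i) = t^{-1}\int_{T_0}^{T_0+t}\bigl(\sum_i\barmu_\tau(i)\bigr)\,d\tau = t^{-1}\int_{T_0}^{T_0+t}1\,d\tau = 1$ because $\barmu_\tau$ is a probability vector for each $\tau$. With this choice the right-hand side above is precisely $\sum_i\upmf(i)[I-\beta P S_{\phi^{(i)}}] = \partialQstar_\upmf^{-1}$, which is the asserted identity; in particular $t^{-1}\barGamma^{T_0}_3(t) = \partialQstar_\upmf^{-1}$. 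The accompanying compactness claim then follows at once: $\{\partialQstar_\nu^{-1}:\nu\text{ a pmf}\}$ is the image of the probability simplex in $\Re^{\nphi}$ --- a compact set --- under the continuous (linear) map $\nu\mapsto\sum_i\nu(i)[I-\beta P S_{\phi^{(i)}}]$, hence compact.

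The only place that needs more than a line of care --- and the closest thing to an obstacle here --- is justifying that the integral defining $\upmf(i)$ (equivalently, the Fubini-type interchange above) is legitimate, i.e.\ that $\tau\mapsto\barmu_\tau(i)$ is bounded and measurable on $[T_0,T_0+t]$. Boundedness is clear, since $\barmu_\tau(i)\in[0,1]$. For measurability I would argue from structure already in place: $\barq$ is continuous (piecewise linear), each region $\Theta^k=\{q:\phi^q=\phi^{(k)}\}$ is a polyhedron by \Lemma{t:Thetak}, so $r\mapsto\ind\{\phi^{\barq_r}=\phi^{(i)}\}$ is Borel measurable; and $(s,r)\mapsto\clG_{s,r}$ from \eqref{e:clG} is jointly continuous on $\{0\le r\le s\}$, with a density in $r$ on $(0,s]$ plus an atom at $r=0$ (see \Lemma{t:clGProperty}). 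Consequently $\tau\mapsto\barmu_\tau(i)=\int_0^\tau\ind\{\phi^{\barq_r}=\phi^{(i)}\}\,d[\clG_{\tau,r}]$ is a measurable (indeed piecewise-continuous) function of $\tau$, which is all that is needed to validate the interchange and complete the proof.
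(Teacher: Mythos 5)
Your proof is correct, and it supplies exactly the argument the paper leaves implicit (the lemma is stated with its proof omitted as immediate): the map $\nu\mapsto\partialQstar_\nu^{-1}=\sum_i\nu(i)[I-\beta PS_{\phi^{(i)}}]$ is linear on the simplex, so the time-average of $\partialQstar_{\barmu_\tau}^{-1}$ is $\partialQstar_\upmf^{-1}$ with $\upmf$ the time-average of $\barmu_\tau$, which is again a pmf; compactness follows since the set is the continuous image of the simplex. Your attention to the measurability of $\tau\mapsto\barmu_\tau(i)$ is a welcome extra precision, not a deviation.
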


\begin{lemma}
\label{t:ChainRule}
For any sub-sequential limit $\Gamma$, let $t_0$ denote a point  at which both  
$q_t$ and  $\qcost_t$
are differentiable.   
Let $\mu$ be any pmf satisfying $ \partialQstar_\mu\qcost_{t_0} = \Qstar(\qcost_{t_0}) = q_{t_0}$.   Then,
\[
 \ddt  q_t  =
 \partialQstar_\mu  \ddt \qcost_t ,\quad t=t_0\, .
\]
\end{lemma}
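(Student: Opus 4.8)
The statement is a chain rule for the piecewise-linear concave map $\Qstar$, and the plan is to prove it \emph{without} any smoothness of $\Qstar$, using only the variational representation \eqref{e:Qstar}. The key fact I would record first is that for \emph{every} pmf $\upmf$ the affine map $\partialQstar_\upmf$ dominates $\Qstar$ in the componentwise order on $\state\times\U$: since $\Qstar(\varsigma)=\min_\upmf\partialQstar_\upmf\varsigma$ by \eqref{e:Qstar}, a particular value is no smaller than the minimum, so $\partialQstar_\upmf\varsigma\ge\Qstar(\varsigma)$ for all $\varsigma$ and all $\upmf$. Combined with the hypothesis $\partialQstar_\mu\qcost_{t_0}=\Qstar(\qcost_{t_0})=q_{t_0}$, this says precisely that the linear operator $\partialQstar_\mu$ supports the concave map $\Qstar$ at the point $\qcost_{t_0}$.

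Next I would introduce the $\Re^d$-valued function $g(t)\eqdef\partialQstar_\mu\qcost_t-q_t$, $t\in[0,T]$. Using $q_t=\Qstar(\qcost_t)$, which is \Proposition{t:ODE_Gamma}~(i), the domination property gives $g(t)\ge0$ componentwise for all $t$, while the hypothesis gives $g(t_0)=0$. Since $q_t$ and $\qcost_t$ are both differentiable at $t_0$ by assumption and $\partialQstar_\mu$ is a fixed linear operator, $g$ is differentiable at $t_0$ with $\dot g(t_0)=\partialQstar_\mu\dot{\qcost}_{t_0}-\dot q_{t_0}$.

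Then the argument finishes componentwise: for each coordinate $j$, the scalar function $g_j$ is nonnegative on $[0,T]$ and vanishes at $t_0$, so $t_0$ is a global minimizer of $g_j$; a scalar function that is differentiable at an interior minimizer has vanishing derivative there, hence $\dot g_j(t_0)=0$. (At an endpoint one uses the appropriate one-sided derivative, which is all that is needed; in any case, since $q_t,\qcost_t$ are Lipschitz the set of common points of two-sided differentiability is co-null, so the interior case is the generic one.) Collecting coordinates yields $\dot g(t_0)=0$, that is, $\ddt q_t=\partialQstar_\mu\,\ddt\qcost_t$ at $t=t_0$, which is the assertion.

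I do not expect a genuine obstacle here; the only points to handle with care are that the inequality $\partialQstar_\upmf\varsigma\ge\Qstar(\varsigma)$ is read in the correct (componentwise) order — it is a tautology from \eqref{e:Qstar} — and the observation that $\Qstar$ need never be differentiated. Indeed, the whole value of the lemma is that the identity holds for \emph{every} supporting piece $\mu$ with $\partialQstar_\mu\qcost_{t_0}=q_{t_0}$, and this is exactly the form in which it is invoked to justify the ODE \eqref{e:QODE1} and part (iii) of \Proposition{t:ODE_Gamma}.
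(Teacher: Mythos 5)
Your proof is correct and is essentially the paper's argument: both rest on the observation that \eqref{e:Qstar} makes $\partialQstar_\mu$ a supporting affine map for the concave $\Qstar$ at $\qcost_{t_0}$, so the nonnegative gap $\partialQstar_\mu\qcost_t-\Qstar(\qcost_t)$ vanishes at $t_0$ and, being differentiable there, has zero derivative. The paper phrases this by differentiating along the tangent line $L_t^c$ while you apply the first-derivative test to $g(t)$ directly, which is a purely cosmetic difference (and arguably a slightly cleaner write-up, since it avoids comparing $\Qstar(L_t^c)$ with $\Qstar(\qcost_t)$).
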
 

\begin{proof}  
This is an instance of the chain rule.  A proof is provided since $\Qstar$ is not smooth.

These two functions are approximated by  a line at this time point:
\begin{equation}
\begin{aligned}
q_t  &=  q_{t_0} + (t-t_0) v^q  + o(|t-t_0|)
\\
  \qcost_t & =  \qcost_{t_0} + (t-t_0) v^c  + o(|t-t_0|)  
,\qquad t\sim t_0\, ,
\end{aligned}
\label{e:QODE2}
\end{equation}
where $v^q$, $v^c$ are the respective derivatives.  The lemma asserts that $  v^q  =  \partialQstar_\mu v^c    $.

Denote $L_t^q= q_{t_0} + (t-t_0) v^q$,  $L_t^c= \qcost_{t_0} + (t-t_0) v^c$,  $t\in\Re$.   Applying \eqref{e:Qstar} we have for each $t$: 
\[
\begin{aligned}
{\Qstar}(L_t^c)  &\le  \partialQstar_{\mu} L_t^c 
\\
&=   \partialQstar_{\mu} [c_{t_0} + (t-t_0) v^c]
\\
&=   {\Qstar}(c_{t_0}) + (t-t_0)  \partialQstar_{\mu}  v^c
\end{aligned}
\]
Differentiability of ${\Qstar}(c_t)  $ at $t_0$ then implies the desired conclusion: 
$\displaystyle
v^q = \ddt   {\Qstar}( L_t^c )\Big|_{t=t_0} =   \partialQstar_{\mu} v^c
$.
\end{proof}

\begin{proof}[Proof of \Prop{t:ODE_Gamma}] 
Recall that (i) has been established.  Result (iii) is established next, which will quickly lead to (ii).

\Lemma{t:limpartialQstar} implies the following   relationship for $ \barGamma^{T_0}_3$: For each  $T_0\ge 0$ and $t>0$, there exists a pmf $\upmf$ such that
\[
\begin{aligned}
\frac{1}{t} \int_{0}^{t} \barGamma^{T_0}_3(\tau) \, d \tau
= \partialQstar_{\upmf}^{-1},
\end{aligned}
\]
It follows that the same is true for any sub-sequential limit $\Gamma$:  There is a parameterized family of pmfs $\{\mu_\tau\}$ such that 
\[
\Gamma_3(t) = \int_0^t \partialQstar_{\mu_\tau}^{-1}\, d\tau\,,\quad 0\le t\le T\, .
\]
In the pre-limit we have $\ddt \barGamma^{T_0}_3(t) \times \ddt \barGamma^{T_0}_4(t) = I$ for each $t$ and $T_0$. It can be shown using Laplace transform arguments that the same must be true in the limit for a.e.\ $t$, giving the first half of (iii).
\archive{Insert Laplace transform proof in thesis}

Next, we prove the second half of (iii): $ \partialQstar^{-1}_{\mu_t}  q_t = c_t $ for a.e.\ $t$.
From equation \eqref{e:MismatchExt1} of \Lemma{t:Aq1}, 
\[
\int_{T_0}^{T_0 + t} {\partialQstar^{-1}_{\barmu_\tau}} \barq_\tau  \, d \tau - \int_{T_0}^{T_0 + t} \barqcost_\tau \, d \tau = o(1)
,\qquad T_0 \to \infty\,
\]
In $\barGamma^{T_0}$ notation:
\[
\int_{T_0}^{T_0 + t} d \barGamma_3^{T_0}(\tau) \, \barGamma_1^{T_0} (\tau) - \int_{T_0}^{T_0 + t} \barGamma_2^{T_0}(\tau) \, d \tau = o(1)
,\qquad T_0 \to \infty\,
\]
\Lemma{t:ContinuousIntegration} asserts that the left hand side of the above equation defines a continuous functional of $\barGamma^{T_0}$, and therefore the relationship also holds in the limit:
\[
\int_{0}^{t} d \Gamma_3(\tau) \, q_\tau = \int_{0}^{t} c_\tau \, d \tau.
\]
This establishes the second half of part (iii) of the lemma.

Part (ii) is obtained using similar arguments:  it is established that the left hand side of \eqref{e:haq_ODEbGamma} is a continuous mapping of $\barGamma^{T_0}$, so the relation is true for any sub-sequential limit $\Gamma$:
\begin{equation}
q_t -q_0
= - \int_{0}^{t }      \, d \Gamma_4(\tau) [  c_\tau      -     c   ]   
= - \int_{0}^{t }      \,   \partialQstar_{\mu_\tau} [  c_\tau      -     c   ] \,  d\tau \,.
\label{e:haq_ODEGamma}
\end{equation}
Combining \Lemma{t:ChainRule} with (iii) gives $ \partialQstar_{\mu_t}  \ddt c_t =   \ddt q_t $ at points of differentiability,   and thence \eqref{e:haq_ODEGamma} implies that for a.e.\ $t$
\[
\partialQstar_{\mu_t}  \ddt c_t 
=
\ddt q_t =  - \partialQstar_{\mu_t} [  c_t      -     c   ] 
\]
\end{proof} 

\clearpage

%
%
%
%
\archive{\bl{Is this right?? You said in the barmu notes that something about the existence of the partial derivative should be announced at the top}
	\\
	arx: We have to explain what we mean by an ODE.  What you have below is simply notation.     Let's discuss in person. }

\archive{It is the compactness that gives uniformly Lipschitz right?
	\\
	arx:   You tell me!  Look up the AA theorem
	\\
	arx: Uniformly Lipschitz is what we have just explained...}
\archive{\rd{I don't think we need to say equicontinuous.. because there is a version of the theorem for Lipschitz continuous and uniform boundedness.. It also says things about the limit! Not sure if that is part of the theorem or what... Wikipedia.. }
	\\
	arx:
	There is no limit at this stage!  the AA Theorem does not claim there is a limit.  Read the theorem!!!}

\clearpage


\clearpage

%
%
%
%


\small

\begin{table}[h]
	\footnotesize
	\caption{Notation}
\label{t:notation}
	\centering
	\begin{tabular}{lll}
		\toprule
		Symbol   & Type            & Description \\  
		\midrule
		$\state$          & set; $x$ is a component        & state space of the Markov chain     \\
		$\U$              & set; $u$ is a component        & action space of the Markov chain     \\
		$c$       		  & function $: \state \times \U \to \Re$ & cost function \\ 		
		$\beta$           & scalar $\in (0,1)$     & discount factor      \\
		$\theta$          & vector; $\in \Re^d$        & parameter vector  \\
		\multirow{2}{*}{$\psi$}& function $: \state \to \Re^d$ & basis functions for TD-learning  \\   &function $: \state \times \U \to \Re^d$ & basis functions for Q-learning\\
		$\alpha_n$        & scalar; $\in (0,1]$ & step-size sequence \\ 
		$\gamma_n$        & scalar; $\in (0,1]$ & step-size sequence \\ 		
		$h$       		  & function $: \state \to \Re$ & value function \\ 		
		$h_{\phi^\theta}$        & function $: \state \to \Re$  & a linear approximation to $h$: $h_{\phi^\theta} = \theta^\transpose \psi$ \\ 		
		$Q$       		  & function $: \state \times \U \to \Re$ &  {SARSA Q-function for an uncontrolled Markov chain} \\
		$Q^\theta$        & function $: \state \times \U \to \Re$  & a linear approximation to $Q$: $Q^\theta = \theta^\transpose \psi$ \\ 		
		$\theta^*$          & vector; $\in \Re^d$        & optimal parameter vector satisfying: $h = h^{\theta^*}$ or $Q = Q^{\theta^*}$ \\
		$\tiltheta$          & vector; $\in \Re^d$        & error in the parameter vector: $\tiltheta = \theta - \theta^*$ \\
		$\uq$       		  & operator & given $q: \state \times \U \to \Re$, $\uq (x) = \min_u q(x,u)$ \\
		\multirow{2}{*}{$\elig$}& function $: \state \to \Re^d$ & eligibility vector for TD-learning  \\   &function $: \state \times \U \to \Re^d$ & eligibility vector for Q-learning\\
		$\bfmX$          & sequence $\{X_n: n \geq 0 \}$        & uncontrolled Markov chain evolving on $\state$    \\
		$\bfPhi$          & sequence $\{\Phi_n: n \geq 0 \}$        &  {Markov chain of interest when applying stochastic approximation}    \\
		$\pie$          & function $: \state \times \U \to \Re$        & steady state distribution / probability mass function of $\bfPhi$ \\ 
		$\bfDelta$          & sequence $\{\Delta_n: n \geq 0 \}$        & error sequence    \\
		$\phi$        & function $: \state \to \U$  & policy \\ 	
		$\ell$        & scalar  & number of elements in $\state$: $\ell = |\state|$ \\ 		
		$\ell_u$      & scalar  & number of elements in $\U$: $\ell_u = |\U|$ \\ 
		$\nphi$       & scalar  & number of possible policies: $\nphi = (\ninp)^\nd$ \\ 		
		$P_u$        & operator   & tr.\ kernel (controlled): $P_u f\, (x) = \Expect [f(X_{n+1}) \mid X_n = x, \, U_n = u]$ \\
		$P_\phi$        & operator   & tr.\ kernel (uncontrolled): $P_\phi f\, (x) = \Expect [f(X_{n+1}) \mid X_n = x, \, U_n = \phi(x)]$ \\
		$S_\phi$        & operator   & given $q \colon \state\times\U\to\Re$, $S_\phi q\, (x) = q(x,\phi(x))$ \\ 		
		$h_\phi$        & function $: \state \to \Re$  & value function for a given policy $\phi$ \\ 
		$\phi^q$        & function $: \state \to \U$  & $q$-optimal policy: $\phi^q(x) \in  \argmin_uq(x,u)$ \\ 	
		$h_\phi$        & function $: \state \to \Re$  & value function for an uncontrolled Markov chain with policy $\phi$ \\ 		
		$h^*$        & function $: \state \to \Re$  & optimal value function \\ 
		$\Qstar$        & operator  & given $\qcost: \state \times \U \to \Re$, $\Qstar(\qcost)$ is the optimal $Q$-function for cost $\qcost$  \\ 		
		$Q^*$        & function $: \state \times \U \to \Re$  & optimal $Q$-function for cost $c$ \\ 
		$\BE$        & function $: \state \times \U \to \Re$  & Bellman error for the $Q$-function approximation \\ 		
		$\diagpie$        & matrix $\in \Re^{d \times d}$  & diagonal matrix: $\diagpie(k,k) = \pie(x^{(k)},u^{(k)})$ \\ 		
		$\Psi$        &  matrix $\in \Re^{d \times d}$   & outer product of the basis functions: $\Psi = \psi \times \psi^\transpose$ \\ 		
		$\bfmG$        & sequence $\{G_n: n \geq 0 \}$  & arbitrary sequence of matrix gains \\ 
		$G$        & matrix $\in \Re^{d \times d}$  & steady state mean of $\bfmG$  \\ 	
		$A_n$; $A$      & matrix $\in \Re^{d \times d}$  &  linearization in stochastic approximation;  $A$ s.s.\ mean \\ 
		$b_n$; $b$       & vector $\in \Re^d$  &  vector sequence;  $b$ s.s.\ mean  \\ 
		$\haA_n$    & matrix $\in \Re^{d \times d}$  & $n$-step Monte-Carlo estimate of $A$ \\
		$\hab_n$    & vector $\in \Re^d$  & $n$-step Monte-Carlo estimate of $b$ \\ 
		\bottomrule
	\end{tabular}
\end{table}

\end{document}